\tikzset{snake it/.style={decorate, decoration=snake}}
\tikzset{arc/.style = {->,> = latex', line width=.75pt}}
\title{Around the diameter of AT-free graphs}
\author[1,2]{Guillaume Ducoffe}
\affil[1]{\small National Institute for Research and Development in Informatics, Romania}
\affil[2]{\small University of Bucharest, Romania}
\date{}
\newtheorem{lemma}{Lemma}
\newtheorem{theorem}{Theorem}
\newtheorem{corollary}{Corollary}
\newtheorem{proposition}{Proposition}
\begin{document}

\maketitle

\begin{abstract}
A graph algorithm is {\em truly subquadratic} if it runs in ${\cal O}(m^b)$ time on connected $m$-edge graphs, for some positive $b < 2$. Roditty and Vassilevska Williams ({\it STOC}'13) proved that under plausible complexity assumptions, there is no truly subquadratic algorithm for computing the diameter of general graphs. In this work, we present positive and negative results on the existence of such algorithms for computing the diameter on some special graph classes. Specifically, three vertices in a graph form an {\em asteroidal triple} (AT) if between any two of them there exists a path that avoids the closed neighbourhood of the third one. We call a graph {\em AT-free} if it does not contain an AT. 
%We first study the fine-grained complexity of diameter computation within AT-free graphs: for which we show how to compute all the eccentricities in truly subquadratic ${\cal O}(m^{3/2})$ time. 
We first prove that for all $m$-edge AT-free graphs, one can compute all the eccentricities in truly subquadratic ${\cal O}(m^{3/2})$ time. 
Then, we extend our study to several subclasses of chordal graphs -- all of them generalizing interval graphs in various ways --, as an attempt to understand which of the properties of AT-free graphs, or natural generalizations of the latter, can help in the design of fast algorithms for the diameter problem on broader graph classes. 
%Our second main result is an ${\cal O}(m\log^2{n})$-time randomized algorithm for computing the diameter with high probability on $n$-vertex $m$-edge chordal graphs with a bounded asteroidal number. Furthermore, there is a truly subquadratic algorithm for computing the diameter of chordal dominating pair graphs. Finally, for all chordal graphs with a dominating diametral path (resp., with a dominating triple), if the diameter is at least $4$ (resp., at least $7$), then there is a linear-time algorithm for computing a diametral pair. However, already for split graphs with a dominating edge, under plausible complexity assumptions, there is no truly subquadratic algorithm for deciding whether the diameter is either $2$ or $3$.
For instance, for all chordal graphs with a dominating shortest path, there is a linear-time algorithm for computing a diametral pair if the diameter is at least four. However, already for split graphs with a dominating edge, under plausible complexity assumptions, there is no truly subquadratic algorithm for deciding whether the diameter is either $2$ or $3$.
\end{abstract}

\section{Introduction}\label{sec:intro}

For any undefined graph terminology, see~\cite{BoM08}. All graphs studied in this paper are finite, simple and connected. Given a graph $G=(V,E)$, let $n = |V|$ be its order and $m = |E|$ be its size. Note that, since we assume $G$ to be connected, $m \geq n-1$. For a vertex $u \in V$, let $N_G(u) = \{ v \in V \mid uv \in E \}$ and $N_G[u] = \{u\} \cup N_G(u)$ be, respectively, the open and closed neighbourhoods of $u$. The distance between two vertices $u,v \in V$ is equal to the minimum number of edges on a $uv$-path, and it is denoted by $dist_G(u,v)$. The maximum such distance between a fixed vertex $u$ and all other vertices is called its eccentricity, denoted by $e_G(u) = \max_{v \in V} dist_G(u,v)$. We sometimes omit the subscript if the graph $G$ is clear from the context. Finally, the {\em diameter} of a graph $G$ is equal to $diam(G) = \max_{u,v \in V} dist(u,v) = \max_{u \in V} e(u)$. The purpose of this note is to study the complexity of diameter computation within some graph classes. On general graphs, the best known algorithms for this problem run in ${\cal O}(nm)$ time and in ${\cal O}(n^{\omega+o(1)})$ time, respectively, where $\omega < 2.3729$ is the exponent of square matrix multiplication~\cite{Sei95}. In particular, both algorithms run in $\Omega(m^2)$ time on sparse graphs, where by sparse we mean that $m \leq c \cdot n$ for some universal constant $c$.

Improving the quadratic running time for diameter computation -- in the number $m$ of edges -- is an important research problem,  both in theory and in practice.
Since any algorithm for this problem must run in $\Omega(m)$ time, some authors have considered whether we can compute the diameter in linear time or in quasi linear time on certain graph classes. Notably, this is the case for the {\em interval graphs} -- {\it i.e.}, the intersection graphs of intervals on the real line --~\cite{Ola90}. All graph classes that are considered in this work are superclasses of the interval graphs.
For general graphs, less than a decade ago~\cite{RoV13}, Roditty and Vassilevska Williams gave convincing evidence that the complexity of diameter computation {\em cannot} be improved by much. Specifically, the Strong Exponential-Time Hypothesis (SETH) says that for any $\varepsilon > 0$, there exists a $k$ such that $k$-{\sc SAT} on $n$ variables cannot be solved in ${\cal O}((2-\varepsilon)^n)$ time~\cite{ImP01}. The Orthogonal-Vector problem ({\sc OV}) takes as input two families $A$ and $B$ of $n$ sets over some universe $C$, and it asks whether there exist $a \in A, b \in B$ s.t. $a \cap b = \emptyset$. Some older works have named this problem {\sc Disjoint Sets}~\cite{ChD92}. The following result is due to Williams:

\begin{theorem}[\cite{Wil05}]\label{thm:ov}
Under {\sc SETH}, for any $\varepsilon > 0$, there exists a constant $c > 0$ such that we cannot solve {\sc OV} in ${\cal O}(n^{2-\varepsilon})$ time, even if $|C| \leq c \cdot \log{n}$.
\end{theorem}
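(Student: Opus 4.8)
\emph{Proof proposal.} This is the classical reduction of Williams, and the plan is to recover it by combining a ``split‑and‑list'' (meet‑in‑the‑middle) enumeration of satisfying assignments with the Sparsification Lemma of Impagliazzo, Paturi and Zane. To avoid a clash of notation, I would write $N$ for the number of vectors in an {\sc OV} instance (the ``$n$'' of the statement), reserving $n$ for the number of variables of a {\sc SAT} instance, and argue by contraposition. So, assume that for some fixed $\varepsilon > 0$ and \emph{every} constant $c > 0$, {\sc OV} on instances with $|C| \le c\log N$ can be solved in ${\cal O}(N^{2-\varepsilon})$ time; the goal is to deduce that {\sc SETH} fails.

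First I would set up the core reduction from $k$‑{\sc SAT} to {\sc OV}. Given a $k$‑{\sc CNF} formula $\phi$ with variable set $X$ and clause set $\mathcal{C}$, partition $X$ into two halves $X_1, X_2$ of sizes $\lceil |X|/2\rceil$ and $\lfloor |X|/2\rfloor$. Enumerate all $2^{|X_1|}$ truth assignments of $X_1$; to each assignment $a$ associate the set $S_a \subseteq \mathcal{C}$ of clauses \emph{not} satisfied by $a$, and let $A$ be the resulting family. Build $B$ from $X_2$ in the same way. Since every clause is a disjunction over $X_1 \cup X_2$, a full assignment $(a,b)$ satisfies $\phi$ iff every clause is satisfied by $a$ or by $b$, i.e.\ iff $S_a \cap S_b = \emptyset$; hence $\phi$ is satisfiable iff the {\sc OV} instance $(A,B)$ over the universe $\mathcal{C}$ has an orthogonal pair. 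Here $|A|, |B| \le 2^{\lceil |X|/2\rceil}$, but the universe has size $|\mathcal{C}|$, which for $k$‑{\sc SAT} can be as large as $\Theta(|X|^k)$ --- only polylogarithmic, not logarithmic, in the number of vectors.

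To push the universe down to ${\cal O}(\log N)$, fix $\delta := \varepsilon/4$ and, for each $k$, invoke the Sparsification Lemma: there is a constant $c_k > 0$ (depending on $k$ and $\delta$) such that any $k$‑{\sc CNF} $\phi$ on $n$ variables can be written, in $2^{\delta n}\,\mathrm{poly}(n)$ time, as a disjunction $\phi_1 \vee \cdots \vee \phi_t$ with $t \le 2^{\delta n}$, each $\phi_i$ a $k$‑{\sc CNF} on the same variables with at most $c_k n$ clauses. Now run the split‑and‑list reduction on each $\phi_i$ separately: after trivial padding to equalise the two families, this yields an {\sc OV} instance with $N_i \le 2^{\lceil n/2\rceil}$ vectors over a universe of size at most $c_k n \le (2c_k)\log_2 N_i$. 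Applying the assumed algorithm with $c = 2c_k$ solves each such instance in ${\cal O}(N_i^{2-\varepsilon}) = {\cal O}(2^{(1-\varepsilon/2)n})$ time. Summing the sparsification cost, the listing cost $2^{\delta n}\cdot 2^{n/2}\,\mathrm{poly}(n)$, and the $t \le 2^{\delta n}$ calls to the {\sc OV} solver, deciding satisfiability of $\phi$ takes ${\cal O}\big(2^{(1-\varepsilon/2+\delta)n}\,\mathrm{poly}(n)\big) = {\cal O}\big(2^{(1-\varepsilon/4)n}\,\mathrm{poly}(n)\big) = {\cal O}((2-\varepsilon')^n)$ for some $\varepsilon' > 0$ that absorbs the polynomial factor and, crucially, does \emph{not} depend on $k$. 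Thus every $k$‑{\sc SAT} is solvable in ${\cal O}((2-\varepsilon')^n)$ time, contradicting {\sc SETH}. The existential constant $c$ in the statement is then recovered by contraposition: since assuming a fast {\sc OV} algorithm for \emph{all} constant universe bounds $c\log N$ contradicts {\sc SETH}, under {\sc SETH} there must be some $c$ for which the ${\cal O}(N^{2-\varepsilon})$ bound fails.

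The routine parts are the correctness of the split‑and‑list encoding and the arithmetic in the exponent; the real content is entirely imported, namely the Sparsification Lemma, which is what lets us trade ``polylog universe'' for ``logarithmic universe'' at the price of only $2^{\delta n}$ sub‑instances. The one point that deserves care --- and where I would be most careful --- is the order of quantifiers: the constant $c$ that ``works'' may depend on $k$, so the contradiction hypothesis must be the strong one (a fast {\sc OV} algorithm for \emph{every} constant bound $c\log N$), and one must verify that the base $2 - \varepsilon'$ of the final {\sc SAT} running time is uniform over all $k$, which it is because $\delta$, hence the exponent loss, was fixed as a function of $\varepsilon$ alone.
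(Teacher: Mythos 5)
The paper does not prove this statement at all --- it is imported verbatim as a cited result from Williams' paper [Wil05] --- and your reconstruction is precisely the standard argument from that reference: split-and-list to turn half-assignments into set families over the clause universe, plus the Sparsification Lemma to shrink the universe to $c_k\,n = {\cal O}(\log N)$, with the quantifier over $c$ handled correctly (the hypothesis must hold for every $c$ so that $c = 2c_k$ is available for each $k$, while the final savings $\varepsilon'$ depends only on $\varepsilon$). The argument is correct and matches the intended source.
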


In what follows, by {\em truly subquadratic} we mean a running time in ${\cal O}(N^{2-\varepsilon})$, for some $\varepsilon > 0$, where $N$ denotes the size of the input (for connected graphs, $N \approx m$). This above Theorem~\ref{thm:ov} was used in order to prove that many classic problems that can be solved in polynomial time do not admit a truly subquadratic algorithm ({\it e.g.}, see~\cite{VaW18} for a survey). In particular, as far as we are concerned in this note, under SETH there is no truly subquadratic algorithm for computing the diameter, even on sparse graphs~\cite{AVW16}.
This negative result has motivated a long line of papers, with some trying to characterize the graph classes for which there {\em does} exist a truly subquadratic algorithm for the diameter problem.
We refer to~\cite{AVW16,BCT17,Cab18,CDP19} for recent relevant work in this area.
In particular, several authors have studied whether some important properties of interval graphs could imply on their own fast diameter computation algorithms. This is the case for Helly graphs~\cite{DuD19+}, and graphs of bounded distance VC-dimension~\cite{DHV20}, but {\em not} for chordal graphs~\cite{BCH16}. 

We recall that a graph is called AT-free if there does not exist a triple $x,y,z$ such that, for any two of them, there exists a path that avoids the closed neighbourhood of the third one. The interval graphs are exactly the chordal {\em AT-free} graphs~\cite{LeB62}. The complexity of (exact and approximate) diameter computation within AT-free graphs was studied in~\cite{CDDH+01} (see also~\cite{Dra99,HeK02}), where the authors emphasize a kind of duality between AT-free graphs and chordal graphs. For instance, on both graph classes, two consecutive executions of LexBFS always yield a vertex whose eccentricity is within one of the diameter -- this is the so-called 2-sweep LexBFS algorithm, see Fig.~\ref{fig:2-sweep}. However, there is no constant $c$ such that $c$ consecutive executions of LexBFS on these graph classes always output the exact diameter. The authors from~\cite{CDDH+01} further sketch a reduction from OV to diameter computation within AT-free graphs and chordal graphs, as evidence that the diameter problem on these graph classes cannot be solved in linear time. This same reduction was revisited in~\cite{BCH16} in order to prove that, indeed, under SETH there is no truly subquadratic algorithm for computing the diameter on chordal graphs. But the same {\em cannot} be done for AT-free graphs, because for the latter, the reduction in~\cite{CDDH+01} from OV to diameter computation already runs in $\Omega(n^2)$ time. 

\begin{itemize}
	\item {\bf Our first main result} is an ${\cal O}(m^{3/2})$-time algorithm in order to compute all the eccentricities (and so, the diameter) in an AT-free graph (Theorem~\ref{thm:main}).
\end{itemize}

The proof of this above result stays simple. Nevertheless, it comes to us as surprise given the evidence for SETH-hardness in~\cite{CDDH+01}. On dense graphs -- with $m \geq c \cdot n^2$ edges, for some constant $c$ -- our algorithm does no better than the classic ${\cal O}(nm)$-time algorithm for All-Pairs Shortest-Paths. We suspect this to be unavoidable, or to be more precise, that the diameter problem on dense AT-free graphs is computationally equivalent with Boolean matrix multiplication. However, we were unable to prove this, and we leave it as an intriguing open question. For claw-free AT-free graphs, there is a linear-time algorithm in order to compute the diameter and all the vertices of minimum eccentricity~\cite{HeK02}.

\smallskip
The {\em ball hypergraph} of $G$ has for hyperedges the balls of all possible centers and radii in $G$.
On our way, we observe that for the family of ball hypergraphs of AT-free graphs, classic geometric parameters such as the Helly number and the VC-dimension are unbounded (Proposition~\ref{prop:unbounded}).
It sets AT-free graphs apart from most known graph classes with a truly subquadratic algorithm for the diameter problem~\cite{DHV20,DuD19+}. We initiate the complexity of diameter computation within graph classes sharing a common property with AT-free graphs. In this paper, we only study such classes which are subclasses of chordal graphs. This is for the following two main reasons. On one hand, under SETH, there is no truly subquadratic algorithm for computing the diameter, already for chordal graphs. On the other hand, chordal graphs are more structured than general graphs, thereby making easier the design and the analysis of our algorithms. We summarise our results for subclasses of chordal graphs (we postpone their technical definitions to appropriate places throughout the paper):
\begin{itemize}
\item For every chordal graph with {\em asteroidal number} at most $k$, there is a randomized ${\cal O}(km\log^2{n})$-time algorithm in order to compute the diameter with high probability (Theorem~\ref{thm:asteroidal-num}). In contrast to this positive result, it is easy to prove that under SETH, there is no truly subquadratic algorithm for diameter computation within $k$-AT-free chordal graphs, for every $k \geq 2$ (Proposition~\ref{prop:k-at});
\item For every chordal {\em dominating pair} graph, there is a truly subquadratic algorithm for computing the diameter (Theorem~\ref{thm:dom-pair-chordal}). For the larger class of chordal graphs with a dominating shortest path, there is a linear-time algorithm for computing a diametral pair if the diameter is at least four; however, already for split graphs with a dominating edge, under SETH there is no truly subquadratic algorithm for computing the diameter (Theorem~\ref{thm:dicho});
\item For every chordal graph with a dominating triple, if the diameter is at least $10$, then there is a linear-time algorithm in order to compute a diametral pair (Theorem~\ref{thm:dom-triple}).
\end{itemize}
We stress that all the aforementioned graph classes generalize interval graphs in various ways, that are incomparable one with another. For chordal Helly graphs and chordal graphs of bounded VC-dimension: two other generalizations of interval graphs that are uncomparable with each other and with the other subclasses presented above, there also exist truly subquadratic algorithms for the diameter problem~\cite{DuD19+}. We left open whether Theorem~\ref{thm:asteroidal-num} can be derandomized. It is also open whether the lower bound of $10$ on the diameter for Theorem~\ref{thm:dom-triple} is tight. Finally, we insist on the dichotomy of Theorem~\ref{thm:dicho}, where we show that the {\em only} difficulty for a chordal graph with a dominating shortest-path is to decide whether the diameter is either two or three. By contrast, for any $d \geq 1$, under SETH there is no truly subquadratic algorithm for distinguishing the chordal graphs of diameter $\leq 2d$ from those of diameter $\geq 2d+1$ (to see that, just start from any split graph, and replace every vertex of its stable set by a path of length $d-1$).
%
%\medskip
%\noindent
%Some of our results and techniques still hold beyond chordal graphs.
%In Sec.~\ref{sec:gal-graphs}, we extend Theorems~\ref{thm:dicho} and~\ref{thm:dom-triple} to all graphs with a dominating shortest-path and to graphs of asteroidal number at most three, respectively (Theorems~\ref{thm:spd} and~\ref{thm:dt}).

\section{All eccentricities for AT-free graphs}\label{sec:algo}

\subsection{Preliminaries}\label{sec:prelim}

We start recalling a few results from prior work, that we will use in this paper.

\paragraph{Graph searches.}The Lexicographic Breadth-First Search (LexBFS) is a standard algorithmic procedure, that runs in linear time~\cite{RTL76}. We give a pseudo-code in Fig.~\ref{fig:lexbfs}. For a given graph $G = (V,E)$ and a start vertex $u$, we denote $LexBFS(u)$ the corresponding execution of LexBFS. Its output is a numbering $\sigma$ over the vertex-set (namely, the reverse of the ordering in which vertices are visited during the search). In particular, if $\sigma(i) = x$, then $\sigma^{-1}(x) = i$. 

\begin{figure}[h!]\centering
\includegraphics[width=.7\textwidth]{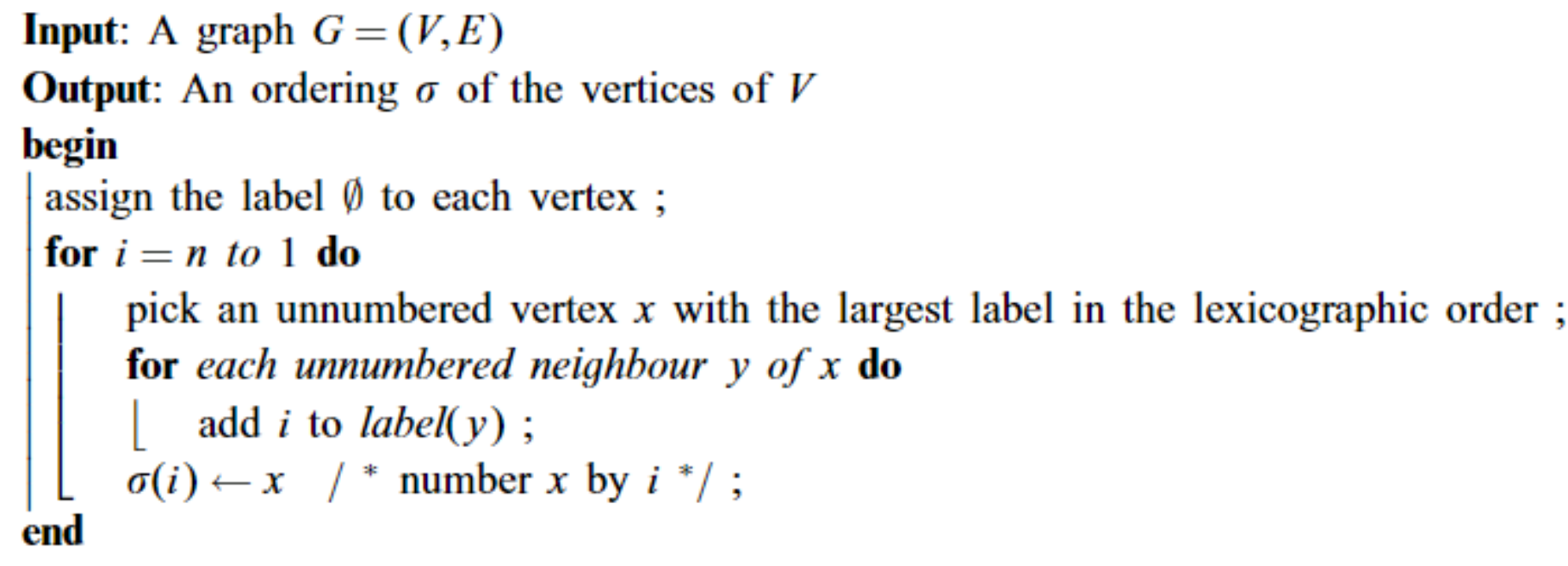}
\caption{Algorithm LexBFS~\cite{RTL76}.}
\label{fig:lexbfs}
\end{figure}

\smallskip
\noindent
The $2$-sweep LexBFS consists in two consecutive execution of LexBFS, with the start vertex of the second execution being the last one visited during the first LexBFS. See also Fig.~\ref{fig:2-sweep}. More generally, for any positive integer $c$, the algorithm $c$-sweep LexBFS consists in $c$ consecutive applications of LexBFS, with for any $i \geq 2$, the start vertex of the $i^{th}$ execution being the last vertex visited during the $(i-1)^{th}$. We prove most of our results using $3$-sweep LexBFS, but often use the known properties of $2$-sweep LexBFS in our proofs.

\begin{figure}[h!]\centering
\includegraphics[width=.4\textwidth]{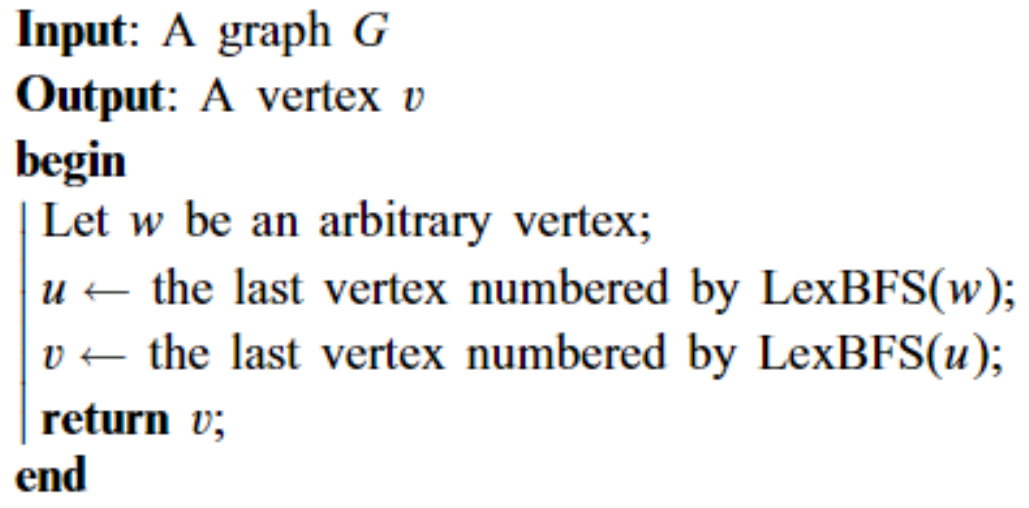}
\caption{Algorithm $2$-sweep~\cite{CDDH+01}.}
\label{fig:2-sweep}
\end{figure}

\paragraph{AT-free graphs.}
We now recall several useful properties of LexBFS orderings within AT-free graphs. 
A {\em dominating pair} in a graph is a pair $(u,v)$ of two vertices such that, for every vertex $x$, every $uv$-path intersects its closed neighbourhood $N[x]$. 

\begin{lemma}[\cite{COS99}]\label{lem:dominating}
For an AT-free graph $G = (V,E)$, let $u$ be the last vertex visited during a LexBFS, and let $\sigma = LexBFS(u)$. Then, for every vertex $y$, the pair $(u,y)$ is dominating for the subgraph induced by $\{ z \in V \mid \sigma^{-1}(z) \geq \sigma^{-1}(y) \}$. In particular, if $\sigma(1) = v$, then $(u,v)$ is a dominating pair for $G$.
\end{lemma}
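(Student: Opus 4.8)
The plan is to argue by induction on the LexBFS numbering $\sigma$, processing vertices $y$ in increasing order of $\sigma^{-1}(y)$ — that is, from the last-visited vertex $u$ (with $\sigma^{-1}(u) = n$) down toward the first-visited vertex $v = \sigma(1)$. Write $V_{\ge i} = \{ z \in V \mid \sigma^{-1}(z) \ge i\}$ and $G_{\ge i} = G[V_{\ge i}]$. The base case is trivial: when $y = u$, the induced subgraph is the single vertex $u$, and $(u,u)$ is vacuously dominating. For the inductive step, suppose $(u,y)$ is dominating in $G_{\ge \sigma^{-1}(y)}$ and let $y'$ be the vertex with $\sigma^{-1}(y') = \sigma^{-1}(y) - 1$, so $G_{\ge \sigma^{-1}(y')}$ is obtained from $G_{\ge \sigma^{-1}(y)}$ by adding $y'$. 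I want to show $(u,y')$ is dominating in $G_{\ge \sigma^{-1}(y')}$.

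The key structural fact I would invoke is the standard LexBFS property in AT-free graphs: if $a,b,c$ appear in this order in the LexBFS ordering (i.e. $\sigma^{-1}(a) < \sigma^{-1}(b) < \sigma^{-1}(c)$), with $ac \in E$ and $ab \notin E$, then there is a vertex $d$ with $\sigma^{-1}(d) > \sigma^{-1}(b)$, $db \in E$, $dc \notin E$ — roughly, any "private" later neighbour of $a$ relative to $b$ can be traded for a neighbour of $b$. I would combine this with the hallmark consequence of AT-freeness that LexBFS produces a "dominating pair ordering": the ending vertex $u$ together with any later-or-equal prefix-complement is dominating. Concretely, to show $(u,y')$ dominates $G_{\ge \sigma^{-1}(y')}$, take any vertex $x$ in this subgraph and any $u$–$y'$ path $P$; I must show $P$ meets $N[x]$. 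If $x = y'$ this is immediate. Otherwise $x \in V_{\ge \sigma^{-1}(y)}$, and I would use the inductive hypothesis applied to a suitable $u$–$y$ path derived from $P$: since $y'$ is visited right before $y$ and both are "seen" from their common LexBFS-earlier neighbourhood, one can reroute $P$ through $y$ (or show directly that a path avoiding $N[x]$ from $u$ to $y'$ would extend/modify to one avoiding $N[x]$ from $u$ to $y$, contradicting the hypothesis) — here is exactly where the AT-free LexBFS "neighbourhood trading" lemma is needed, to guarantee that $y'$ has a neighbour on $P$ close enough to $y$, or that $y' \in N[y]$-type adjacency lets the path be patched.

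The main obstacle is this rerouting/patching argument in the inductive step: turning a hypothetical $u$–$y'$ path avoiding $N[x]$ into a $u$–$y$ path avoiding $N[x]$ (or directly deriving an asteroidal triple among $u$, $x$, and an appropriate witness). This is where AT-freeness is genuinely used rather than just LexBFS combinatorics, and it is likely cleanest to argue it by contradiction: assume some $x$ and some $u$–$y'$ path $Q$ with $Q \cap N[x] = \emptyset$; because $(u,y)$ dominates the smaller subgraph, the analogous statement fails only because of the newly added vertex $y'$, so $x$, $y'$, and $u$ (or $x$, $y'$, and a vertex witnessing a missing domination) must form an asteroidal triple in $G$, contradicting the hypothesis that $G$ is AT-free. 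The "in particular" clause then follows by taking $y = v = \sigma(1)$, for which $V_{\ge 1} = V$ and $G_{\ge 1} = G$.
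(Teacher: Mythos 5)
The paper does not prove Lemma~\ref{lem:dominating}: it is imported verbatim from~\cite{COS99}, where it is the culmination of a fairly long sequence of structural lemmas about LexBFS in AT-free graphs. So your attempt has to stand on its own, and as written it has a genuine gap: the inductive step, which is the entire content of the lemma, is only gestured at. You say one should ``reroute $P$ through $y$'' or turn a $u$--$y'$ path avoiding $N[x]$ into a $u$--$y$ path avoiding $N[x]$, but consecutive vertices $y'$ and $y$ in a LexBFS ordering need not be adjacent, need not lie in the same BFS layer, and can sit in entirely different parts of the graph, so no such patching is available in general. Worse, the inductive hypothesis does not even apply to the object you need to control: a $u$--$y'$ path in $G_{\ge \sigma^{-1}(y')}$ minus its last vertex is a $u$--$w$ path for some neighbour $w$ of $y'$ with $\sigma^{-1}(w) > \sigma^{-1}(y')$, and the statement for the pair $(u,w)$ only constrains paths inside $G_{\ge \sigma^{-1}(w)}$, which may be a much smaller subgraph than the one your path lives in. The fallback of ``deriving an asteroidal triple'' also produces only one of the three pairwise-avoiding paths an AT requires; constructing the other two is where all the work of~\cite{COS99} lies.

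Two further warning signs. First, your argument never uses the hypothesis that $u$ is the \emph{last} vertex visited by a prior LexBFS; that hypothesis is essential (it is what makes $u$ an ``admissible'' vertex in the terminology of~\cite{COS99}), and a proof that ignores it would establish a stronger statement for arbitrary start vertices, which is not what the 2-sweep machinery delivers. Second, the ``key structural fact'' you invoke is not the standard LexBFS four-point property: that property states that if $c$ is visited before $b$, which is visited before $a$, and $c$ is adjacent to $a$ but not to $b$, then some vertex visited before $c$ is adjacent to $b$ and not to $a$; your version places the adjacency and non-adjacency on the wrong vertices and locates the witness in the wrong part of the ordering. If you want a self-contained proof, you should follow the route of~\cite{COS99}: first show that the end vertex of a LexBFS in an AT-free graph is admissible, then prove the domination property for LexBFS started from an admissible vertex, layer by layer of the BFS levels rather than vertex by vertex of the ordering.
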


\begin{lemma}[\cite{DKL17}]\label{lem:comp}
For an AT-free graph $G = (V,E)$, let $u$ be the last vertex visited during a LexBFS, and let $\sigma = LexBFS(u)$.
If $x,y \in V$ are such that: $xy \notin E, \ dist(u,x) = dist(u,y) = i, \ \text{and} \ \sigma^{-1}(x) < \sigma^{-1}(y)$, then we have $N(x) \cap \{ z \in V \mid dist(u,z) = i-1 \} \subseteq N(y) \cap \{ z \in V \mid dist(u,z) = i-1 \}$. In particular, $dist(x,y) \leq 2$.
\end{lemma}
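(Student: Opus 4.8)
The plan is to derive the asserted inclusion directly from Lemma~\ref{lem:dominating}, applied to the pair $(u,x)$ rather than to $(u,y)$. First note we may assume $i \geq 1$, since $i = 0$ would force $x = y = u$, contradicting $xy \notin E$; and once the inclusion is proved, the ``in particular'' part is immediate, because $dist(u,x) = i \geq 1$ makes $N(x) \cap \{z : dist(u,z) = i-1\}$ nonempty, so any vertex $w$ of that set satisfies $wx, wy \in E$, exhibiting an $x$-$w$-$y$ path of length $2$. So the whole task reduces to the following: writing $L_j := \{z \in V : dist(u,z) = j\}$ for the BFS layers rooted at $u$, and fixing an arbitrary $w \in N(x) \cap L_{i-1}$, show that $w \in N(y)$ (note $w \neq y$ automatically, as $w \in L_{i-1}$ while $y \in L_i$).

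The key preliminary observation I would record is the layering property of LexBFS: vertices are visited in nondecreasing order of their distance to $u$, so every vertex of $L_0 \cup \dots \cup L_{i-1}$ is visited strictly before any vertex of $L_i$, and hence in the numbering $\sigma$ receives an index larger than $\sigma^{-1}(x)$. Combined with the hypothesis $\sigma^{-1}(x) < \sigma^{-1}(y)$ and with $\sigma^{-1}(u) = n$, this shows that the set $S := \{z \in V : \sigma^{-1}(z) \geq \sigma^{-1}(x)\}$ contains $u$, $x$, $y$, every vertex at distance $\leq i-1$ from $u$, and in particular $w$ together with a shortest $u$-$w$ path. Let $Q$ be the $u$-$x$ path obtained by concatenating such a shortest $u$-$w$ path with the edge $wx$; since $x \notin L_0 \cup \dots \cup L_{i-1}$ this is a genuine path, all of whose vertices and edges lie in $S$, so $Q$ is a path of the subgraph $G[S]$.

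Now I apply Lemma~\ref{lem:dominating} with its vertex ``$y$'' taken to be our $x$: the pair $(u,x)$ is dominating for $G[S]$. In particular, since $y \in S$, the path $Q$ must meet $N[y]$. But $dist(u,y) = i$ gives $N[y] \subseteq L_{i-1} \cup L_i \cup L_{i+1}$, whereas every vertex of $Q$ lies in $L_0 \cup \dots \cup L_i$, and among these the only ones in $L_{i-1} \cup L_i$ are $w$ and $x$; since $xy \notin E$ we have $x \notin N[y]$, so necessarily $w \in N[y]$, and as $w \neq y$ this means $wy \in E$, which is what we wanted. The only steps demanding a little care are checking that $Q$ really stays inside $G[S]$ and that no ``deep'' vertex of $Q$ (one at distance $\leq i-2$ from $u$) lands in $N[y]$ --- both follow from the layering and from $dist(u,y) = i$ --- together with the harmless degenerate case $i = 1$, where $w = u$ and $Q$ is just the edge $ux$. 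I do not anticipate any real obstacle; an alternative route through the classical four-point property of LexBFS orderings combined with AT-freeness also works, but going through Lemma~\ref{lem:dominating} is much shorter.
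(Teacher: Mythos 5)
Your proof is correct. Note, however, that the paper does not prove this lemma at all: it is imported verbatim from~\cite{DKL17}, so there is no in-paper argument to compare against. What you give is a clean, self-contained deduction from Lemma~\ref{lem:dominating}: you instantiate that lemma with its parameter ``$y$'' equal to $x$, so that $(u,x)$ is a dominating pair of the subgraph induced by $S=\{z \mid \sigma^{-1}(z)\ge\sigma^{-1}(x)\}$; the BFS layering of LexBFS (earlier-visited vertices get larger $\sigma$-numbers) guarantees that $S$ contains $y$ and all of the layers $L_0\cup\dots\cup L_{i-1}$, where $L_j=\{z \mid dist(u,z)=j\}$, so the path $Q$ (a shortest $u$--$w$ path followed by the edge $wx$) lies in $G[S]$ and must meet $N[y]$; and since $N[y]\subseteq L_{i-1}\cup L_i\cup L_{i+1}$ while a shortest $u$--$w$ path meets each of $L_0,\dots,L_{i-1}$ in exactly one vertex, the only candidates in $V(Q)\cap N[y]$ are $w$ and $x\notin N[y]$, forcing $w\in N(y)$. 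All steps check out, including the degenerate case $i=1$. One cosmetic slip: the reason $i\ge 1$ is that $i=0$ would force $x=y=u$ and hence $\sigma^{-1}(x)=\sigma^{-1}(y)$, contradicting the strict inequality --- not, as you write, contradicting $xy\notin E$, which holds vacuously for $x=y$ in a simple graph. The net effect of your argument is to make the lemma self-contained where the paper currently relies on an external citation, at the modest cost of leaning on Lemma~\ref{lem:dominating}, which the paper also only cites.
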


%This above Lemma~\ref{lem:dominating} implies that for any AT-free graph, the last vertex visited during a $2$-sweep LexBFS has eccentricity within one of the diameter of $G$~\cite{CDDH+01}. Further results on the structure of AT-free graphs are known, namely:
%
%\begin{lemma}[\cite{CDDH+01}]\label{lem:diam-pair}
%Let $G = (V,E)$ be an AT-free graph with $diam(G) = k > 2$. If $e(v) = k-1$, where $v$ is the vertex returned by $2$-sweep LexBFS, and $u',v'$ achieve the diameter where $dist(u,u') \leq dist(u,v')$ then:
%\begin{itemize}
%\item $dist(u,v) = dist(u,v') = dist(u',v) = k-1$;
%\item $uu',vv' \in E$.
%\end{itemize}
%\end{lemma}

\subsection{The algorithm}

In what follows, we prove that all eccentricities in an AT-free graphs can be computed in truly subquadratic time (Theorem~\ref{thm:main}). Before that, we need to prove a few intermediate lemmas.

\begin{lemma}\label{lem:furthest-vertex}
Let $(u,v)$ be a dominating pair in a graph $G=(V,E)$ (not necessarily AT-free).
Every vertex $x \in V$ s.t. $e(x) \geq 3$ is at distance $e(x)$ from a vertex in $N[u] \cup N[v]$.
\end{lemma}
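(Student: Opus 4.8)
The plan is to fix a vertex $x$ with $e(x) \geq 3$, pick a vertex $w$ with $dist(x,w) = e(x)$, and show that $w$ can be replaced by — or is itself close to — a vertex in $N[u] \cup N[v]$ without changing its distance to $x$. Consider a shortest $uv$-path $P$. Since $(u,v)$ is a dominating pair, $P$ intersects $N[w]$; let $p \in P \cap N[w]$. Then $dist(x,p) \geq dist(x,w) - 1 = e(x) - 1$, so $p$ is also a vertex at near-maximum distance from $x$. The idea is now to walk along $P$ from $p$ toward whichever endpoint ($u$ or $v$) keeps us far from $x$, and argue that we can reach $N[u]$ or $N[v]$ while maintaining distance $\geq e(x)$ to $x$.

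First I would set up the following quantitative observation: for consecutive vertices $a,b$ on $P$, the triangle inequality gives $|dist(x,a) - dist(x,b)| \leq 1$. Walking along $P$ from $p$ toward $u$, the distance to $x$ changes by at most one per step, and at the endpoint $u$ it is at most $e(u)$... but that is not directly useful. Instead, the cleaner route: consider the closed neighbourhoods $N[u]$ and $N[v]$ themselves. Take a shortest $xw$-path $Q$. Since $(u,v)$ is dominating, $Q$ meets $N[u]$ or $N[v]$; say $Q$ meets $N[u]$ at a vertex $q$, which is at distance $dist(x,q)$ along $Q$ from $x$. If $q$ lies on the $x$-side far enough — i.e. $dist(x,q) \geq e(x) - 1$, equivalently $q$ is among the last two vertices of $Q$ — then $q$ or its neighbour on $Q$ does the job, since then some vertex of $N[u]$ (namely $u$ if $q \neq u$, or $q$ itself) is within distance one of a vertex at distance $e(x)$ from $x$, and one checks $e(x) - 1 \leq dist(x, \text{that } N[u]\text{-vertex}) \leq e(x)$, forcing equality by maximality. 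The subtle case is when $Q$ meets $N[u]\cup N[v]$ only near its $x$-endpoint, i.e. $q$ is close to $x$.

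The hard part will be handling exactly that subtle case: a shortest path $xw$ that enters the domination region $N[u] \cup N[v]$ only near $x$ and then leaves it. To deal with this I would argue about $w$ directly rather than about the path $Q$. Since $(u,v)$ is dominating, every $uv$-path meets $N[w]$; in particular $w$ is "seen" by the dominating pair. Combined with $e(x) \geq 3$, I expect one can show $dist(x,w) = dist(x, N[u]) $ or $dist(x, N[v])$ up to the additive slack, and then use that $dist(x,N[u]) = dist(x,u) - $ (at most $1$) while $dist(x,u), dist(x,v) \leq e(x)$ by definition of eccentricity, pinning the value. Concretely: let $w' \in N[u] \cup N[v]$ be a vertex minimizing $dist(x,w')$ among those; the claim is $dist(x,w') = e(x)$. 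We have $dist(x,w') \leq e(x)$ trivially. For the reverse inequality, a shortest $xw$-path must, by domination, contain a vertex of $N[u]\cup N[v]$, and I would show the \emph{last} such vertex along the path (closest to $w$) is already at distance $\geq e(x) - 1$ from $x$, with the case $= e(x)-1$ upgraded using its neighbour toward $u$ or $v$. The condition $e(x)\ge 3$ is what prevents degenerate short-path configurations (e.g. $x$ itself adjacent to both $u$ and $v$) from breaking the argument, so I would make sure the bound $e(x) \geq 3$ is used precisely at the step where I rule out the entering vertex being too close to $x$.
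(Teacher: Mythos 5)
There is a genuine gap, and it sits at the very centre of your argument. You write ``Take a shortest $xw$-path $Q$. Since $(u,v)$ is dominating, $Q$ meets $N[u]$ or $N[v]$,'' and you repeat this later as ``a shortest $xw$-path must, by domination, contain a vertex of $N[u]\cup N[v]$.'' This is not what the dominating-pair property says. The definition is that every \emph{$uv$-path} intersects $N[z]$ for every vertex $z$; it constrains $uv$-paths to dominate all vertices, and gives no information whatsoever about arbitrary paths (such as a shortest $xw$-path) intersecting $N[u]\cup N[v]$. In general $N[u]\cup N[v]$ is far from being a dominating set (think of a long induced path with $u,v$ its endpoints), so a path between two interior vertices need not come anywhere near it. Since everything downstream of this step -- the case analysis on where $Q$ enters the ``domination region,'' and the ``last such vertex along the path'' argument -- rests on this false inference, the proof does not go through. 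On top of that, the one case you identify as hard (the path meeting $N[u]\cup N[v]$ only near $x$) is left entirely as ``I expect one can show'' / ``I would show,'' so even granting the inference there is no complete argument. A smaller slip: the auxiliary vertex $w'\in N[u]\cup N[v]$ should be chosen \emph{maximizing} $dist(x,w')$, not minimizing it.

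The paper's proof applies the domination property in the only direction it is valid: to $uv$-paths, tested against the far vertex $y$. Concretely, it fixes a shortest $uv$-path $P$ minimizing $dist(y,P)$, picks $x^*\in N[x]\cap V(P)$ and $y^*\in N[y]\cap V(P)$ (these exist because $P$, being a $uv$-path, dominates $x$ and $y$), and then builds a \emph{rerouted} $uv$-path $P'$ that follows $P$ from $v$ to $x^*$, jumps to $x$, and continues along a shortest $xu$-path $Q$. Applying the dominating-pair property to $P'$ with respect to $y$, and using the minimality of $dist(y,P)$ together with $e(x)\ge 3$ to rule out neighbours of $y$ on the $P[v,x^*]$ portion, forces $N[y]\cap V(Q)\ne\emptyset$, which yields $dist(x,y)\le dist(x,u)$ and hence $e(x)=dist(x,u)$. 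If you want to salvage your approach, the missing idea is precisely this construction of an auxiliary $uv$-path passing through $x$; without routing the path between $u$ and $v$, the domination hypothesis simply cannot be invoked.
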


\begin{proof}
Let $x \in V$ be s.t. $e(x) \geq 3$, and let $y \in V$ s.t. $dist(x,y) = e(x)$.
In what follows, we assume $y \notin N[u] \cup N[v]$ (otherwise, we are done).
Name $P$ a shortest $uv$-path in $G$ s.t. $dist(y,P)$ is minimized.
In particular, we have $y \in V(P)$ if and only if $dist(u,v) = dist(u,y) + dist(y,v)$.
We pick any two vertices $x^* \in N[x] \cap V(P)$ and $y^* \in N[y] \cap V(P)$, that always exists because, by the hypothesis, $P$ is dominating.
By symmetry, we may assume that $y^*$ is on the subpath of $P$ between $u$ and $x^*$. 
We claim that $e(x) = dist(u,x)$, that will prove the lemma.
Suppose for the sake of contradiction that it is not the case.
Consider the (not necessarily simple) $uv$-path $P'$ that is obtained from the concatenation of $P[v,x^*]$: the subpath of $P$ between $v$ and $x^*$, with an arbitrary shortest $xu$-path $Q$.
Since $(u,v)$ is a dominating pair, $N[y] \cap V(P') \neq \emptyset$.
However, we prove in what follows that $y$ has no neighbour on $P[v,x^*]$. 
Suppose by contradiction that there exists a vertex $z \in N[y]$ on the subpath $P[v,x^*]$. Then, $dist(y^*,z) \leq 2$.
Observe that we have $dist(y^*,z) = dist(y^*,x^*) + dist(x^*,z)$.
Furthermore, $x^* \notin N[y]$ because we assume $dist(x,y) = e(x) \geq 3$.  Since $P$ is a shortest $uv$-path, it follows that $y^*,z \in N(y) \cap N(x^*)$ and $dist(y^*,z) = 2$.
But then, we could replace $x^*$ by $y$ on our shortest $uv$-path, thus contradicting the minimality of $dist(y,P)$. As a result, and since $N[y] \cap V(P') \neq \emptyset$, we must have $N[y] \cap V(Q) \neq \emptyset$. Since $Q$ is a shortest $ux$-path and we also have $u \notin N[y]$, we derive from the above that $dist(x,y) \leq dist(x,u)$, thus contradicting our assumption that $e(x) > dist(x,u)$. 
\end{proof}

By using this above Lemma~\ref{lem:furthest-vertex}, we will prove that in order to compute all eccentricities, it suffices to perform a BFS from all vertices in the first and two last distance layers of some shortest-path tree, whose root is the vertex outputted by the algorithm $3$-sweep LexBFS.
However, there does not seem to be a simple way in order to upper bound the number of vertices in these layers.
We complete our result, as follows:

\begin{lemma}\label{lem:dom-ecc}
For an AT-free graph $G = (V,E)$, let $u$ be the last vertex visited during a LexBFS, and let $\sigma = LexBFS(u)$.
Let $x,y \in V$ be s.t.: $xy \notin E, dist(u,x) = dist(u,y) = i, \ \text{and} \ \sigma^{-1}(x) < \sigma^{-1}(y)$. For any vertex $z \in V$, if $dist(u,z) < i$ then we always have $dist(y,z) \leq dist(x,z)$. 

In particular, if $G$ has no universal vertex and $dist(u,x) = dist(u,y) = e(u)$, then $e(x) \geq e(y)$.
\end{lemma}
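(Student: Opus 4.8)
The plan is to prove the general inequality $dist(y,z)\le dist(x,z)$ for every $z$ with $dist(u,z)<i$, and then deduce the eccentricity statement. Write $L_j=\{v : dist(u,v)=j\}$ for the layers of the BFS from $u$. First I would reduce to the case $z\in L_{i-1}$: if $dist(u,z)\le i-2$, take a shortest $xz$-path and let $w$ be its first vertex lying in $L_{i-1}$ (it exists because the path leaves $L_i$, reaching a layer $<i$, while changing layer by at most one per edge). Then $w$ is on a shortest $xz$-path, so $dist(x,z)=dist(x,w)+dist(w,z)$, and applying the statement to the triple $(x,y,w)$ — legitimate since $dist(u,w)=i-1<i$ — gives $dist(y,w)\le dist(x,w)$, hence $dist(y,z)\le dist(y,w)+dist(w,z)\le dist(x,z)$. (To turn this into an induction, one may induct on $dist(u,z)$, the base being $z\in L_{i-1}$.)

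Now suppose $z\in L_{i-1}$. If $dist(x,z)=1$ then $z\in N(x)\cap L_{i-1}\subseteq N(y)\cap L_{i-1}$ by Lemma~\ref{lem:comp}, so $dist(y,z)=1$; and if $dist(x,z)=3$ there is nothing to prove, because every vertex of $L_i$ is within distance $3$ of every vertex of $L_{i-1}$ (take a neighbour $c\in L_{i-1}$ of $x$; if $c=z$ or $cz\in E$ then $dist(x,z)\le 2$, otherwise $c$ and $z$ are non-adjacent vertices of $L_{i-1}$ and $dist(c,z)\le 2$ by Lemma~\ref{lem:comp}). Since $dist(x,z)\le 3$ always, the only remaining case is $dist(x,z)=2$, where the goal is $dist(y,z)\le 2$. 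Here one may assume $y$ and $z$ have no common neighbour (in particular $yz\notin E$). Two observations via Lemma~\ref{lem:comp}: no common neighbour of $x$ and $z$ lies in $L_{i-1}$ (such a vertex would be in $N(x)\cap L_{i-1}\subseteq N(y)\cap L_{i-1}$ and in $N(z)$), so all common neighbours of $x$ and $z$ lie in $L_i$; and no common neighbour $p\in L_i$ of $x$ and $z$ has $\sigma^{-1}(p)<\sigma^{-1}(y)$ (else $py\notin E$ and Lemma~\ref{lem:comp} applied to $(p,y)$ would give $z\in N(p)\cap L_{i-1}\subseteq N(y)\cap L_{i-1}$). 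Thus every common neighbour of $x$ and $z$ lies in $L_i$ and is visited before $y$.

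Next I would exploit AT-freeness. The triple $\{x,y,z\}$ is pairwise non-adjacent; it admits an $x$–$y$ path avoiding $N[z]$ (through a common neighbour of $x$ and $y$ in $L_{i-1}$, which by the above is not adjacent to $z$) and an $x$–$z$ path avoiding $N[y]$ (through a common neighbour $p$ of $x$ and $z$, which by assumption is not adjacent to $y$). Since $G$ has no AT, the third condition must fail, i.e. every $y$–$z$ path meets $N[x]$. A shortest $y$–$z$ path has length at most $3$; as $dist(x,y)=2$ by Lemma~\ref{lem:comp}, it cannot meet $N[x]$ at $x$ itself (that would force $dist(y,z)=4$), so it meets $N[x]$ at some neighbour $r$ of $x$. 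A short case analysis on the layer of $r$ and its position on this path — combining Lemma~\ref{lem:comp} (so that $r\in N(y)$ as soon as $r\in L_{i-1}$), the standard four-point property of LexBFS orderings, and, if necessary, one more application of AT-freeness to a triple of the form $\{x,y,\cdot\}$ — shows that $r$ is in fact a common neighbour of $y$ and $z$, whence $dist(y,z)\le 2$. \emph{This last step is the main obstacle: all the other reductions are routine, but pinning down the neighbour of $x$ through which a shortest $y$–$z$ path is forced to pass, and showing it sees both $y$ and $z$, genuinely requires the LexBFS structure and AT-freeness together.}

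Finally, for the eccentricity statement, suppose $G$ has no universal vertex and $dist(u,x)=dist(u,y)=e(u)$, so $x,y$ lie in the last layer $L_{e(u)}$. If $e(y)\le 2$ then, since $G$ has no universal vertex, $e(x)\ge 2\ge e(y)$. If $e(y)\ge 3$, pick $w$ with $dist(y,w)=e(y)$; since $y\in L_{e(u)}$ and $dist(y,w)>2$, Lemma~\ref{lem:comp} (applied to $y$ and $w$, which would otherwise be non-adjacent vertices of the same layer at distance at least $3$) forces $dist(u,w)<e(u)$. Then the first part applies to $(x,y,w)$ and yields $e(x)\ge dist(x,w)\ge dist(y,w)=e(y)$.
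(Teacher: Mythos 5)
Your reduction to the layer $L_{i-1}$ and the handling of the cases $dist(x,z)\in\{1,3\}$ are fine, but the crucial case $dist(x,z)=2$ is not actually proved, and the gap is not just a matter of writing out details. What your AT-freeness argument on the triple $\{x,y,z\}$ yields is only that every $yz$-path meets $N[x]$, and this conclusion is already satisfied by the ``bad'' configuration you are trying to exclude: take $c\in N(x)\cap N(y)\cap L_{i-1}$ (non-adjacent to $z$, as you showed) and $p\in N(x)\cap N(z)\cap L_i$ (non-adjacent to $y$, as you showed, and with $N(y)\cap L_{i-1}\subseteq N(p)\cap L_{i-1}$ by Lemma~\ref{lem:comp}, so $cp\in E$); then $[y,c,p,z]$ is a $yz$-path of length three that meets $N[x]$ at both $c$ and $p$, and every fact you have derived is consistent with $dist(y,z)=3$. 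In particular your claimed conclusion that the vertex $r\in N(x)$ hit by a shortest $yz$-path ``is in fact a common neighbour of $y$ and $z$'' is exactly what fails: $r$ can be $c$ (adjacent to $y$ but not $z$) or $p$ (adjacent to $z$ but not $y$). You flag this step yourself as the main obstacle; it is, and Lemma~\ref{lem:comp} plus one application of AT-freeness to $\{x,y,z\}$ cannot close it.

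The missing ingredient is Lemma~\ref{lem:dominating}, which you never invoke. The paper's proof takes an arbitrary shortest $xz$-path $P$, lets $x'\in V(P)$ minimize $\sigma^{-1}$, and extends the subpath $P_{x'}$ from $x'$ to $z$ by a shortest $uz$-path $Q$ into a $ux'$-path lying entirely in the suffix subgraph induced by $\{w:\sigma^{-1}(w)\geq\sigma^{-1}(x')\}$. Since $\sigma^{-1}(y)>\sigma^{-1}(x)\geq\sigma^{-1}(x')$, Lemma~\ref{lem:dominating} forces $N[y]$ to meet this path, and the vertices of $Q\setminus\{z\}$ lie in layers $\leq i-2$ so cannot be in $N[y]$; hence $N[y]$ meets $P$ itself, which gives $dist(y,z)\leq dist(x,z)$ in one stroke, with no layer-by-layer case analysis and no reduction to $L_{i-1}$. (In your hard case this argument applies verbatim to the path $x,p,z$ followed by a shortest $zu$-path and yields an immediate contradiction.) Your treatment of the second statement about eccentricities is correct and matches the paper's.
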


\begin{proof}
Let $z \in V$ be such that $dist(u,z) < i$. In particular, $\sigma^{-1}(z) > \sigma^{-1}(y) > \sigma^{-1}(x)$. For any shortest $xz$-path $P$, we claim that $N[y] \cap V(P) \neq \emptyset$. Note that, since we assume $xy \notin E$, this will prove that $dist(y,z) \leq dist(x,z)$. In order to prove the claim, let $x' \in V(P)$ minimize $\sigma^{-1}(x')$. By construction, $\sigma^{-1}(x') \leq \sigma^{-1}(x) < \sigma^{-1}(y)$. Let $P_{x'}$ be the subpath of $P$ between $x'$ and $z$. We complete $P_{x'}$ into an $ux'$-path $P'$ (not necessarily simple) by adding to it a shortest $uz$-path $Q$. In particular, all the vertices $w \in V(P') = V(P_{x'}) \cup V(Q)$ satisfy $\sigma^{-1}(w) \geq \sigma^{-1}(x')$. Since we also have $\sigma^{-1}(y) > \sigma^{-1}(x')$, it follows by Lemma~\ref{lem:dominating}, $N[y] \cap V(P') \neq \emptyset$. Furthermore, since any vertex $w \in V(Q) \setminus \{z\}$ satisfies $dist(u,w) \leq dist(u,z) - 1 \leq dist(u,y) - 2$, we get $N[y] \cap (V(Q) \setminus \{z\}) = \emptyset$. The latter proves, as claimed, $N[y] \cap V(P) \supseteq N[y] \cap V(P_{x'}) \neq \emptyset$.

Finally, let us assume that $dist(u,x) = dist(u,y) = e(u)$. If $z \in V$ is such that $dist(u,z) < e(u)$, then we proved above that we have $dist(y,z) \leq  dist(x,z)$. Otherwise, $dist(u,z) = e(u)$, and by Lemma~\ref{lem:comp} we get that $dist(y,z) \leq 2$. Altogether combined, $e(y) \leq \max\{e(x),2\}$.
If moreover, $G$ has no universal vertex, then $e(x) \geq 2$, and so, we conclude that $e(y) \leq e(x)$. 
\end{proof}

If we execute three consecutive LexBFS instead of two, then a weaker converse of Lemma~\ref{lem:dom-ecc} for {\em lower} distance layers (instead of upper ones) also holds.
Namely:

\begin{lemma}\label{lem:dom-ecc-rev}
For an AT-free graph $G = (V,E)$, let $u$ be the last vertex visited during a LexBFS, let $\sigma = LexBFS(u)$, and let $v = \sigma(1)$ be the last vertex visited during $LexBFS(u)$.
Let $x,y \in V$ be s.t.: $xy \notin E, dist(v,x) = dist(v,y) = i, \ \text{and} \ \sigma^{-1}(x) < \sigma^{-1}(y)$. For any vertex $z \in V$, if $dist(v,z) > i$ then we always have $dist(y,z) \leq \max\{dist(x,z),2\}$. 
\end{lemma}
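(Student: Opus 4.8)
The plan is to follow the structure of the proof of Lemma~\ref{lem:dom-ecc}, trading the role of the ``lower layers with respect to $u$'' for that of the ``upper layers with respect to $v$''. If $dist(y,z) \le 2$ the claimed inequality is trivial, so assume $dist(y,z) \ge 3$; then it suffices to prove $dist(y,z) \le dist(x,z)$. Fix a shortest $xz$-path $P$ and let $x'$ be a vertex of $P$ minimizing $\sigma^{-1}$, so that $\sigma^{-1}(x') \le \sigma^{-1}(x) < \sigma^{-1}(y)$; in particular $u, x, y, z$ all lie in the initial segment $S_{x'} := \{ w \in V \mid \sigma^{-1}(w) \ge \sigma^{-1}(x') \}$ of the search order of $LexBFS(u)$. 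Since in a (Lex)BFS every visited vertex other than the start has a neighbour visited before it, $S_{x'}$ induces a connected subgraph of $G$; hence there is a $u$-$z$-path $Q$ inside $G[S_{x'}]$, and concatenating $Q$ with the subpath $P_{x'}$ of $P$ between $z$ and $x'$ gives a (possibly non-simple) $u$-$x'$-walk $P'$ contained in $G[S_{x'}]$. By Lemma~\ref{lem:dominating}, $(u,x')$ is a dominating pair of $G[S_{x'}]$, and since $y \in S_{x'}$ this forces $N[y] \cap V(P') \neq \emptyset$.

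If some vertex $w \in N[y]$ lies on $V(P) \supseteq V(P_{x'})$, then $w \neq x$ (because $xy \notin E$), so $dist(x,w) \ge 1$ and, $P$ being shortest, $dist(w,z) = dist(x,z) - dist(x,w) \le dist(x,z) - 1$; therefore $dist(y,z) \le 1 + dist(w,z) \le dist(x,z)$ and we are done. So from now on assume $N[y] \cap V(P) = \emptyset$. As the conclusion of the previous paragraph holds for every choice of the path $Q$, this means (up to the immediate case $u \in N[y]$, which is easily disposed of) that $N[y] \cap S_{x'}$ separates $u$ from $z$ --- and, since $P$ avoids $N[y]$ and connects $z$ to $x$ inside $G[S_{x'}]$, also from $x$ --- in $G[S_{x'}]$.

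This last case, where $N[y]$ meets $P'$ only along $V(Q) \setminus V(P_{x'})$, is the one I expect to be the main obstacle, and it is here that the hypothesis $dist(v,z) > i$ must be used. The point is that every vertex of $N[y]$ lies in $L^v_{i-1} \cup L^v_i \cup L^v_{i+1}$, i.e. strictly below $z$ in the BFS layering from $v$, whereas the separation just obtained squeezes $y$ between $u$ and $z$. Concretely, I would combine this with the BFS property of $\sigma$ (which yields $dist(u,x) \ge dist(u,y)$, since $\sigma^{-1}(x) < \sigma^{-1}(y)$) in order to take $Q$ to be a \emph{shortest} $u$-$z$-path that still lies inside $S_{x'}$ --- this requires a short case analysis according to whether $dist(u,z) < dist(u,x)$, falling back on mere connectivity of $G[S_{x'}]$ otherwise --- so that any vertex of $Q$ belonging to $N[y]$ is forced to lie at distance at least $dist(v,z) - i - 1$ from $z$ along $Q$. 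Feeding this, together with Lemma~\ref{lem:comp} applied inside the common distance layer from $u$ containing $z$, back into the standing assumption $dist(y,z) \ge 3$ should produce a contradiction. Finally, the additive term ``$2$'' in the statement is seen to be necessary exactly because this argument degenerates when $dist(x,z) = 1$: there $N[y]$ may legitimately sit at distance $2$ from $z$, and no bound better than $dist(y,z) \le 2$ can hold.
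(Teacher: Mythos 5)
Your setup is sound and parallels the paper's proof of Lemma~\ref{lem:dom-ecc}: fixing a shortest $xz$-path $P$, taking $x'$ minimizing $\sigma^{-1}$ on $P$, extending $P_{x'}$ by a $u$--$z$ path $Q$ into a $u$--$x'$ walk, invoking Lemma~\ref{lem:dominating} to get $N[y]\cap V(P')\neq\emptyset$, and discharging the easy case where $N[y]$ meets $V(P)$. But the remaining case, where $N[y]$ meets only $Q$, is exactly where the content of the lemma lies, and your sketch does not close it. The quantitative fact you extract --- that any $w\in N[y]\cap V(Q)$ satisfies $dist(w,z)\ge dist(v,z)-i-1$ --- is a \emph{lower} bound on $dist(w,z)$, whereas to control $dist(y,z)\le 1+dist(w,z)$ you need an \emph{upper} bound; and if instead you use that $w$ lies on a shortest $uz$-path together with $dist(u,w)\ge dist(u,y)-1$, you only get $dist(y,z)\le dist(u,z)-dist(u,y)+2$, which is two more than what the triangle inequality guarantees for $dist(x,z)$. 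So no contradiction with $dist(y,z)\ge 3$ follows from the ingredients you list, and the phrase ``should produce a contradiction'' is covering a real hole, not a routine computation.

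What the paper actually does is split on $dist(u,z)$ versus $j=dist(u,y)$, and the two cases use different ideas, neither of which appears in your proposal. When $dist(u,z)<j$, it either reduces to Lemma~\ref{lem:dom-ecc} (if $dist(u,x)=dist(u,y)$) or routes a specific $xu$-path through a neighbour $x_z$ of $x$ in the $u$-layer of $y$ and a neighbour $x'$ of $x_z$ in the layer below, so that Lemma~\ref{lem:dominating} forces $x'\in N(y)$ and hence $dist(y,z)<dist(x,z)$; Lemma~\ref{lem:comp} handles the subcases where $x_z\notin N(x)$ or $x_z\in N(y)$. When $dist(u,z)\ge j$, it proves the much stronger bound $dist(y,z)\le 2$ directly: the subcase $dist(u,z)=j$ is Lemma~\ref{lem:comp}, and the subcase $dist(u,z)>j$ uses the observation that every vertex is within distance one of any shortest $uv$-path, hence $dist(u,w)+dist(w,v)\le dist(u,v)+2$ for all $w$; combined with $dist(u,z)\ge j+1$ and $dist(v,z)\ge i+1$ this forces $dist(u,y)+dist(y,v)=dist(u,v)$, so $y$ lies on a shortest $uv$-path, and the only vertex of that path that can dominate $z$ is $y$ itself, giving $dist(y,z)=1$. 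This last argument --- which is precisely where the hypothesis $dist(v,z)>i$ and the fact that $v$ comes from a second LexBFS sweep are used --- is the missing idea; without it (or a substitute) the proof is incomplete.
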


\begin{proof}
Since we have $\sigma^{-1}(x) < \sigma^{-1}(y)$, $dist(u,x) \geq dist(u,y) = j$. Now, let $z \in V$ be s.t. $dist(v,z) > i$. If furthermore, $dist(u,z) < j$, then we claim that $dist(y,z) \leq dist(x,z)$. Indeed, if $dist(u,x) = dist(u,y)$, then the claim follows from Lemma~\ref{lem:dom-ecc}. From now on, we assume $dist(u,x) > dist(u,y)$. Any shortest $xz$-path must contain a vertex $x_z$ s.t. $dist(u,x_z) = dist(u,y)$. We may further assume $x_z \in N(x)$ (otherwise, since by Lemma~\ref{lem:comp} we have $dist(y,x_z) \leq 2$, $dist(y,z) \leq 2 + dist(x_z,z) \leq dist(x,z)$, and we are done). In particular, $dist(u,x) = j+1$. In the same way, we may assume that $y,x_z$ are nonadjacent, and that any neighbour $x' \in N(x_z)$ on a shortest $x_zz$-path satisfies $dist(u,x') = j-1$. Then, consider an $xu$-path $P$ starting with $[x,x_z,x']$ and continuing with any shortest $x'u$-path. By Lemma~\ref{lem:dominating}, $P$ dominates all vertices $w$ s.t. $\sigma^{-1}(w) \geq \sigma^{-1}(x)$. In particular, $N[y] \cap V(P) \neq \emptyset$. Since we assume $x,x_z \notin N[y]$, it implies $x' \in N(y)$. In such case, $dist(y,z) \leq 1 + dist(x',z) < dist(x,z)$, thereby proving our claim.

We are left handling with the case $dist(u,z) \geq j$. In order to prove the lemma, it suffices to prove that $dist(y,z) \leq 2$. If $dist(u,z) = j$, then this follows from Lemma~\ref{lem:comp}. Thus, from now on we assume $dist(u,z) > j$. In order to conclude here, we need the following observation: for every vertex $w$, we always have $dist(u,v) \leq dist(u,w) + dist(w,v) \leq dist(u,v) + 2$. Indeed, the first inequality follows from the triangular inequality. The second inequality follows from Lemma~\ref{lem:dominating} saying that any shortest $uv$-path is dominating (to see that, consider any $w^* \in N[w]$ on such shortest-path, and observe that $dist(u,w) + dist(w,v) \leq dist(u,w^*) + dist(w^*,v) + 2 = dist(u,v) + 2$). In our case, $dist(u,z) + dist(z,v) \geq j+1 + i+1 = dist(u,y) + dist(y,v) + 2$. Therefore, $y$ is on a shortest $uv$-path $Q$, and in addition $dist(u,z) = j+1, \ dist(v,z) = i+1$. By Lemma~\ref{lem:dominating}, there exists a $z^* \in N[z] \cap V(Q)$. The only possibility w.r.t. $dist(u,z), \ dist(v,z)$ is to have $z^* = y$, and so, $dist(y,z) = 1$.
\end{proof}

We are now ready to prove the main result in this section:

\begin{theorem}\label{thm:main}
For every $m$-edge AT-free graph $G=(V,E)$, we can compute the eccentricities in ${\cal O}(m^{\frac 3 2})$ time.
\end{theorem}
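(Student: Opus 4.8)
The plan is to use the algorithm $3$-sweep LexBFS to produce a root vertex, and then to do a bounded-cost BFS from a small set of "extreme" vertices, exploiting the three preceding lemmas to argue that the eccentricities of all remaining vertices are determined without any further computation. Concretely, let $u$ be the vertex returned by the first LexBFS on an arbitrary start vertex, let $\sigma = LexBFS(u)$ and $v = \sigma(1)$, so that $(u,v)$ is a dominating pair by Lemma~\ref{lem:dominating}. Let $w = LexBFS(v)$ be the last vertex of the third sweep. First I would handle the easy special cases: if $G$ has a universal vertex then $diam(G) \le 2$ and all eccentricities are trivially read off; likewise if $diam(G) \le 2$ (which can be checked, e.g., via the ${\cal O}(m^{3/2})$-time triangle-based test, or simply folded into the BFS counts below) we are done. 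So assume $e(x) \ge 3$ for the interesting vertices. By Lemma~\ref{lem:furthest-vertex}, every vertex $x$ with $e(x) \ge 3$ satisfies $e(x) = dist(x, s)$ for some $s \in N[u] \cup N[v]$; hence $e(x) = \max_{s \in N[u]\cup N[v]} dist(x,s)$ for all such $x$, and this means the eccentricity vector restricted to $\{x : e(x)\ge 3\}$ is computed exactly by running one BFS from each vertex of $N[u] \cup N[v]$.

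The difficulty, as the text flags, is that $|N[u] \cup N[v]|$ can be as large as $\Theta(n)$, so this naive approach costs $\Theta(nm)$. To get down to ${\cal O}(m^{3/2})$ I would argue that among the vertices in any single distance layer it suffices to do BFS from only the $\sigma$-largest and $\sigma$-smallest ones. Fix the BFS tree from $u$; write $L_i = \{z : dist(u,z) = i\}$. Lemma~\ref{lem:dom-ecc} says that within a layer $L_i$, if $x,y$ are nonadjacent with $\sigma^{-1}(x) < \sigma^{-1}(y)$ then $dist(y,z) \le dist(x,z)$ for every $z$ strictly below layer $i$; in particular, applied to $L_{e(u)}$, it gives $e(x) \ge e(y)$ — so in the top layer the $\sigma$-minimum vertex dominates the eccentricities. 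Dually, using the third sweep and the BFS tree from $v$ (with layers $M_j = \{z : dist(v,z) = j\}$), Lemma~\ref{lem:dom-ecc-rev} controls distances going "upward" from a layer, up to an additive slack of $2$. Combining the two, I want to show: to know $e(x)$ for every $x \in N[u]$ it is enough to BFS from the $O(1)$ $\sigma$-extreme vertices of each of the two last layers of the $v$-BFS tree (and symmetrically for $N[v]$ with the $u$-BFS tree), plus the endpoints $u, v, w$ themselves — a constant number of BFS runs, hence ${\cal O}(m)$ in total for those. What remains are the vertices of $N[u] \cup N[v]$ that lie in "middle" layers; here I would partition them by their layer index and, within each nonadjacent class of a layer, keep only the extreme vertices under $\sigma$, whose BFS distances then sandwich all the others (up to the $+2$ slack, which is harmless once we know whether the target eccentricity is $\le 2$ or not). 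The number of layers is at most $n$, but each layer contributes only ${\cal O}(1)$ BFS runs of cost ${\cal O}(m)$ only if few layers are "large"; to make the bound ${\cal O}(m^{3/2})$ I would instead do BFS from every vertex of every layer that has at most $\sqrt{m}$ vertices (total cost ${\cal O}(\sqrt m \cdot m)$ since the layers partition $V$), and for each layer with more than $\sqrt m$ vertices fall back on the $\sigma$-extreme-vertex argument so that it costs only ${\cal O}(m)$ per such layer, of which there are at most $\sqrt m$.

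The main obstacle, then, is the bookkeeping that makes the last paragraph rigorous: one must verify that for a vertex $x$ in a "large" layer $L_i \cap (N[u]\cup N[v])$, its eccentricity is recoverable from the BFS distances out of (i) the $\sigma$-extreme vertices of $L_i$ that are nonadjacent to $x$ — via Lemma~\ref{lem:dom-ecc} for targets below layer $i$ and Lemma~\ref{lem:comp}/Lemma~\ref{lem:dom-ecc-rev} for targets at or above layer $i$ — together with (ii) the ${\cal O}(1)$ globally-extreme vertices we BFS from unconditionally, and that the additive $+2$ terms coming from Lemma~\ref{lem:comp} and Lemma~\ref{lem:dom-ecc-rev} never spoil the answer because whenever they are active the true eccentricity in question is already $\le 2$ (handled by the diameter-$\le 2$ preprocessing) or is witnessed exactly elsewhere. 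Once that case analysis is in place, the running time is immediate: ${\cal O}(1)$ calls to LexBFS and ${\cal O}(\sqrt m)$ calls to BFS, each in ${\cal O}(m)$ time, for a total of ${\cal O}(m^{3/2})$; and correctness follows because every vertex of eccentricity at least $3$ attains it at a vertex of $N[u]\cup N[v]$ by Lemma~\ref{lem:furthest-vertex}, while vertices of eccentricity at most $2$ are caught by the preprocessing step.
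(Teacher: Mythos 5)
Your overall strategy matches the paper's (three LexBFS sweeps, Lemma~\ref{lem:furthest-vertex} to restrict the far vertices to the closed neighbourhood of a dominating pair, then pruning within layers via Lemmas~\ref{lem:dom-ecc} and~\ref{lem:dom-ecc-rev}), but the counting argument that is supposed to deliver the ${\cal O}(m^{3/2})$ bound has a genuine gap, in two places. First, the ``small layer'' accounting is wrong: the layers partition $V$, so the total number of vertices lying in layers of size at most $\sqrt{m}$ can be $\Theta(n)$, and running a BFS from each of them costs $\Theta(nm)$, not ${\cal O}(\sqrt{m}\cdot m)$. Second, for a ``large'' layer you claim that keeping only the $\sigma$-extreme vertices leaves ${\cal O}(1)$ BFS sources per layer; but Lemma~\ref{lem:dom-ecc} only lets one vertex dominate another when the two are \emph{nonadjacent}, so the set of undominated vertices in a layer is not a constant-size set -- it is a clique, which can have as many as $\Theta(\sqrt{m})$ vertices. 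With up to $n/\sqrt{m}$ large layers this again fails to give ${\cal O}(m^{3/2})$.

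The missing ideas are precisely the two observations the paper combines. (i) One should apply Lemma~\ref{lem:furthest-vertex} to the dominating pair $(\tau(1),v)$ coming from the \emph{second and third} sweeps (not to $(u,v)$): since $\tau(1)$ is at distance $d=e(v)$ from $v$, the set $N[\tau(1)]\cup N[v]$ is contained in $\{v\}\cup L_1\cup L_{d-1}\cup L_d$ for the BFS layers rooted at $v$, so only a \emph{constant} number of layers ever matters and the ``middle layers'' you worry about never arise. (ii) Within each of these four layers, scanning the vertices in increasing $\sigma$- (resp.\ $\tau$-) order and deleting the non-neighbours of each surviving vertex leaves a set that is pairwise adjacent, i.e.\ a clique, hence of size ${\cal O}(\sqrt{m})$; Lemmas~\ref{lem:dom-ecc-rev} and~\ref{lem:comp} guarantee that every deleted vertex is dominated (for eccentricity $\geq 3$) by a surviving one. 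This yields ${\cal O}(\sqrt{m})$ BFS calls in total and hence the claimed running time. Without (i) and (ii) your scheme does not get below $\Theta(nm)$.
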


\begin{proof}
Let $u$ be the last vertex visited during a LexBFS, and let $\sigma = LexBFS(u)$.
Similarly, let $v = \sigma(1)$ be the last vertex visited during $LexBFS(u)$, and let $\tau = LexBFS(v)$. Set $d = e(v)$. For any $0 \leq i \leq d$, we define $L_i = \{ w \in V \mid dist(v,w) = i \}$.
We compute the set $A$ from $L_1 \cup \{v\}$ and $\sigma$ as follows.
We scan all the vertices $v' \in L_1 \cup \{v\}$ by increasing value of $\sigma^{-1}(v')$, removing from this set the non-neighbours of $v'$. 
Similarly, we compute the set $B$ (resp., $C$) from $L_{d-1}$ (resp., $L_d$) and $\tau$ as follows.
We scan all the vertices $u' \in L_{d-1}$ (resp., $u' \in L_{d}$) by increasing value of $\tau^{-1}(u')$, removing from this set the non-neighbours of $u'$.
The three of $A,B,C$ can be computed in linear time, and since they are cliques, their cardinality is in ${\cal O}(\sqrt{m})$.  

For every vertex $w \in V$, we claim that we have:
$$e(w) = \begin{cases}
1 \ \text{if} \ w \ \text{is a universal vertex} \\
\max \{2\} \cup \{ dist(w,x) \mid x \in A \cup B \cup C \} \ \text{otherwise.}
\end{cases}$$
Indeed, $w$ is universal if and only if $e(w) = 1$. The above formula is also trivially true if $e(w) = 2$. Therefore, let us assume in what follows $e(w) \geq 3$. By Lemma~\ref{lem:dominating} $(\tau(1),v)$ is a dominating pair, and therefore by Lemma~\ref{lem:furthest-vertex}, $w$ is at distance $e(w)$ from some vertex in $N[\tau(1)] \cup N[v]$. Note that $N[\tau(1)] \cup N[v] \subseteq \{v\} \cup L_1 \cup L_{d-1} \cup L_d$. Let $y \in \{v\} \cup L_1 \cup L_{d-1} \cup L_d$ s.t. $e(w) = dist(y,w)$. We consider in what follows four different cases:
\begin{itemize}
\item Case $y = v$. By construction, $v \in A$. Therefore, the above formula for $e(w)$ holds in this case.
\item Case $y \in L_1$. We assume w.l.o.g. $\sigma^{-1}(y)$ is minimized for this property. Suppose for the sake of contradiction $y \notin A$ (otherwise, we are done). By the construction of $A$, there exists a $x \in L_1$ s.t. $\sigma^{-1}(x) < \sigma^{-1}(y)$ and $x,y$ are nonadjacent. Furthermore, $dist(v,w) > 1$ because otherwise, we would get $e(w) = dist(y,w) \leq 2$. By Lemma~\ref{lem:dom-ecc-rev}, $dist(y,w) \leq \max\{2,dist(x,w)\}$. Since in addition we have $dist(y,w) = e(w) \geq 3$, we conclude in this case that we have $dist(x,w) = e(w)$, that contradicts the minimality of $\sigma^{-1}(y)$.
\item Case $y \in L_d$. The proof is quite similar as for the previous case, but slightly simpler. We assume w.l.o.g. $\tau^{-1}(y)$ is minimized for this property. Suppose for the sake of contradiction $y \notin C$. By the construction of $C$, there exists a $x \in L_d$ s.t. $\tau^{-1}(x) < \tau^{-1}(y)$ and $x,y$ are nonadjacent. Furthermore, $dist(v,w) < d$ because otherwise, we would get by Lemma~\ref{lem:comp} $e(w) = dist(y,w) \leq 2$. By Lemma~\ref{lem:dom-ecc-rev}, $dist(y,w) \leq dist(x,w)$. Since in addition we have $dist(y,w) = e(w)$, we conclude in this case that we have $dist(x,w) = e(w)$, that contradicts the minimality of $\tau^{-1}(y)$.
\item Case $y \in L_{d-1}$. Here also, the proof is essentially the same as for the previous case. In fact, we only need to detail the special case $w \in L_d$. We claim that we have $e(w) = 3$. Indeed, let $x \in N(w) \cap L_{d-1}$. By Lemma~\ref{lem:comp} we have $dist(w,y) \leq 1 + dist(x,y) \leq 3$. We may further assume w.l.o.g. $d-1 > 1$ (otherwise, $y \in \{v\} \cup L_1$). But then, $dist(v,w) = d \geq 3 = e(w)$.
\end{itemize}
Finally, we claim that the above formula can be computed, for all vertices, in total ${\cal O}(m^{3/2})$ time. Indeed, it suffices to perform a BFS from every vertex of $A \cup B \cup C$, and we observed above that there are only ${\cal O}(\sqrt{m})$ many such vertices.
\end{proof}

\subsection{Digression: properties of the neighbourhood hypergraph}

For a graph $G=(V,E)$, its neighbourhood hypergraph is ${\cal N}(G) = (V, \{ N[v] \mid v \in V\})$. Note that it is a subhypergraph of the ball hypergraph (as defined in the introduction). The {\em Helly number} of ${\cal N}(G)$ is the smallest $k$ s.t. every family of $k$-wise intersecting neighbourhoods of $G$ ({\it i.e.}, hyperedges) have a nonempty common intersection. Its {\em VC-dimension} is the largest $d$ s.t., for some vertex-subset $X$ of cardinality $d$, for every subset $Y \subseteq X$, there exists a $v \in V$ s.t. $N[v] \cap X = Y$ (we say that $X$ is shattered by ${\cal N}(G)$). It was proved recently that a bounded Helly number or bounded VC-dimension for the ball hypergraph (and so, for the neighbourhood hypergraph as well) implies fast diameter and radius computation algorithms~\cite{DHV20,DuD19+}. Actually, for most graph classes for which we know how to compute the diameter in truly subquadratic time, one of these two parameters above is always bounded. For instance, on interval graphs, both parameters are at most two. We prove that such property does not hold for the AT-free graphs:

\begin{proposition}\label{prop:unbounded}
There are AT-graphs whose neighbourhood hypergraph has a Helly number (resp., VC-dimension) that is arbitrarily large.
\end{proposition}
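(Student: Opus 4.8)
The plan is to prove the two halves separately, producing for each integer $k$ an AT-free graph whose neighbourhood hypergraph has Helly number $\geq k$, and another whose neighbourhood hypergraph has VC-dimension $\geq k$. The workhorse in both cases is the elementary observation that \emph{a graph $G=(V,E)$ in which $V\setminus N[v]$ induces a clique for every vertex $v$ is AT-free} (equivalently, $G$ is AT-free whenever $\overline G$ is triangle-free). Indeed, if $\{x,y,z\}$ were an asteroidal triple, the $z$-avoiding $xy$-path would lie entirely inside the clique $V\setminus N[z]$, so $xy\in E$; but by symmetry the $y$-avoiding $xz$-path shows $x\in V\setminus N[y]$, i.e. $xy\notin E$ — a contradiction. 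First I would state this as a one-line lemma and then instantiate it twice.

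For the Helly number, I would take $G_k=K_{2k}$ minus a perfect matching $\{a_1b_1,\dots,a_kb_k\}$. Every vertex has a unique non-neighbour, so $V\setminus N[v]$ is a single vertex and $G_k$ is AT-free by the lemma; moreover $N[a_i]=V\setminus\{b_i\}$ and $N[b_i]=V\setminus\{a_i\}$, so the neighbourhood hypergraph of $G_k$ is exactly $(V,\{V\setminus\{x\}:x\in V\})$. Now consider the family of all $2k$ of these hyperedges: any $2k-1$ of them, namely all but $V\setminus\{x_0\}$, have common intersection $\{x_0\}\neq\emptyset$, whereas the intersection of all $2k$ is empty. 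Hence this family is $(2k-1)$-wise intersecting with empty total intersection, so the Helly number is at least $2k$, which grows unboundedly with $k$. (One subtlety worth flagging: a family drawn from only one "side" of the matching would fail, since the entire opposite side survives in the intersection — it is essential to mix the $a_i$'s and $b_i$'s.)

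For the VC-dimension, I would take $G_k'$ to be the complement of the bipartite incidence graph between $2^{[k]}$ and $[k]$: set $V(G_k')=P\cup Q$ with $P=\{p_Z:Z\subseteq[k]\}$ and $Q=\{q_1,\dots,q_k\}$, make $P$ and $Q$ cliques, and put $p_Zq_i\in E$ iff $i\notin Z$. Since $G_k'$ is the complement of a bipartite (hence triangle-free) graph — equivalently, $V\setminus N[q_i]=\{p_Z:i\in Z\}$ and $V\setminus N[p_Z]=\{q_i:i\in Z\}$ are cliques — the lemma gives AT-freeness. Taking $X=Q$, of size $k$: for any $Y\subseteq Q$, writing $I=\{i:q_i\in Y\}$, the vertex $p_{[k]\setminus I}$ satisfies $N[p_{[k]\setminus I}]\cap Q=\{q_i:i\in I\}=Y$. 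So $Q$ is shattered and the VC-dimension is at least $k$.

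The part that needs care — and the reason the statement is not instantaneous — is guaranteeing AT-freeness: the first construction one reaches for, a split graph whose stable set is $\{v_1,\dots,v_k\}$ together with vertices $w_j$ with $N(w_j)\cap\{v_i:i\}=\{v_i:i\neq j\}$, realizes all the desired co-singleton closed neighbourhoods but is \emph{not} AT-free, since $\{v_a,v_b,v_c\}$ is an asteroidal triple with $w_c,w_b,w_a$ serving as the connecting interior vertices. Routing everything through complements of triangle-free graphs is precisely what preserves AT-freeness while leaving the two clique sides flexible enough to carry a rich neighbourhood hypergraph; once that is set up, both verifications are routine.
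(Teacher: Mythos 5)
Your proof is correct, and it diverges from the paper's in an interesting way. The paper recycles the OV-reduction gadget $H_{A,B,C}$ from~\cite{CDDH+01} (three cliques $A,B,C$ with membership-defined adjacency), justifies AT-freeness by citing that this graph is a cocomparability graph, and then imports an abstract set family of large VC-dimension (resp.\ a minimal family of large Helly number, split across $A$ and $B$) to transfer the parameter to the neighbourhood hypergraph. Your VC-dimension construction is essentially the same graph --- two cliques joined by a set-membership bipartite adjacency, i.e.\ the complement of a bipartite graph --- just instantiated concretely with $2^{[k]}$ versus $[k]$ rather than with an arbitrary shattering family; your one-line lemma that complements of triangle-free graphs are AT-free (using that an AT is necessarily independent, which your cross-argument with the $N[y]$-avoiding path correctly establishes) is a clean substitute for the cocomparability citation. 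Your Helly construction, however, is genuinely different and simpler: the cocktail-party graph $K_{2k}$ minus a perfect matching is not of the form $H_{A,B,C}$, and it realizes the canonical minimal non-Helly family $\{V\setminus\{x\} : x\in V\}$ directly as closed neighbourhoods, avoiding the paper's detour through an abstract minimal family of prescribed Helly number and its bipartition. Your computations (the $(2k-1)$-wise intersecting family with empty total intersection, and the shattering of $Q$ by the vertices $p_{[k]\setminus I}$) are both correct, and your cautionary remark that the naive split-graph realization of co-singleton neighbourhoods contains asteroidal triples is accurate and well worth flagging. What the paper's route buys is an explicit link to the SETH-hardness gadget it discusses elsewhere; what yours buys is self-containedness and smaller, more transparent witnesses.
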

\begin{proof}
We take the opportunity to recall the reduction in~\cite{CDDH+01} from OV to AT-free graphs. Let $A,B$ be two families of sets over a universe $C$. The graph $H_{A,B,C}$ has vertex-set $A \cup B \cup C$. Its edge-set is as follows. The sets $A,B,C$ are cliques. For every $a \in A$ and $c \in C$, $a$ and $c$ are adjacent if and only if $c \in a$. In the same way, for every $b \in B$ and $c \in C$, $b$ and $c$ are adjacent if and only if $c \in b$. In~\cite{CDDH+01}, the authors observed that $H_{A,B,C}$ is a cocomparability graph (and so, it is AT-free). We prove that for some suitable $A,B$ and $C$, the neighbourhood hypergraph of $H_{A,B,C}$ has arbitrarily large Helly number (resp., VC-dimension). 

First, for any fixed $d$, we consider a family $A$ over some universe $C$ and VC-dimension $\geq d$. Let $X$ be of cardinality $d$ and shattered by $A$. For every $Y \subseteq X$, there exists a set $a \in A$ s.t. $a \cap X = Y$. In particular, identifying $X \subseteq C$ with a vertex-subset of $H_{A,\emptyset,C}$, we also get $N[a] \cap X = Y$. As a result, the VC-dimension of ${\cal N}(H_{A,\emptyset,C})$ is at least $d$ (since $X$ is shattered by this hypergraph). 

In the same way, for any fixed $k$, let $F$ be a minimal family of Helly number $\geq k+1$ over some universe $C$. By minimality of the family, all sets in $F$ $k$-wise intersect, but they have an empty common intersection. We arbitrarily bipartition $F$ into nonempty subfamilies $A$ and $B$. Then, we claim that the neighbourhood hypergraph of $H_{A,B,C}$ has Helly number at least $k+1$. Indeed, by construction the neighbour sets $N[a], a \in A$ and $N[b], b \in B$ $k$-wise intersect. However, since their common intersection must be in $C$, then by the choice of $F$ the latter is empty.
\end{proof}

\section{Chordal graphs with bounded asteroidal number}

Let us recall that an asteroidal set in a graph $G=(V,E)$ is an independent set $A \subseteq V$ s.t., for every $v \in A$, all vertices in $A \setminus \{v\}$ are in the same connected component of $G \setminus N[v]$. In particular, an AT is exactly an asteroidal set of cardinality three. The {\em asteroidal number} of a graph $G$ is the largest cardinality of its asteroidal sets. Since with this terminology, the interval graphs are exactly the chordal graphs of asteroidal number two, the following result generalizes the celebrated linear-time algorithm for computing the diameter in this graph class:

\begin{theorem}\label{thm:asteroidal-num}
There is a randomized ${\cal O}(km\log^2{n})$-time algorithm for computing w.h.p. the diameter of chordal graphs with asteroidal number at most $k$.
\end{theorem}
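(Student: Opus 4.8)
I would not try to push the dominating-pair approach of Theorem~\ref{thm:main} into the bounded-asteroidal-number setting, but instead exploit chordality through a recursion on clique separators. Every chordal graph has a \emph{balanced clique separator}: a clique $K$ such that all components $C_1,\dots,C_p$ of $G\setminus K$ have at most $n/2$ vertices, and it is found in linear time from a clique tree; recursing on the $C_i$ bounds the depth by $O(\log n)$. The point of separating with a clique is that cross distances factor through it: if $u\in C_i$ and $v\in C_j$ with $i\ne j$, every $uv$-path meets $K$ and can be shortcut inside it, so $dist_G(u,v)=\min_{s\in K}(dist(u,s)+dist(s,v))$, and since each $dist(\cdot,s)$ is within $1$ of the distance to the clique one gets the clean identity $dist_G(u,v)=dist(u,K)+dist(v,K)+[\pi(u)\cap\pi(v)=\varnothing]$, where $\pi(w)=\{s\in K: dist(w,s)=dist(w,K)\}$ is the projection of $w$ onto $K$. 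Consequently, the contribution to $diam(G)$ of pairs separated by $K$ is obtained by evaluating, for every vertex $u$, the quantity $\max_v\big(dist(v,K)+[\pi(u)\cap\pi(v)=\varnothing]\big)$ over all $v$ not on $u$'s side; after recursing into each $C_i$ (with $K$ replaced by a small gadget that is faithful to $G$ for the distances leaving through $K$, so the instances really shrink), one takes the maximum of all local and cross contributions, keeping witnesses to output a diametral pair.

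The bounded asteroidal number is used at exactly one place: to make the cross-$K$ evaluation near-linear. The claim is that for a chordal graph of asteroidal number at most $k$, the projection family $\{\pi(w): w\in V\}$ on any clique $K$ has complexity bounded in terms of $k$ — e.g.\ bounded VC-dimension, or equivalently: all the information needed to decide emptiness of the sets $\pi(u)\cap\pi(v)$ can be summarised by $\widetilde{O}(k)$ representative vertices of $K$. One then takes a \emph{random} sample of that size inside $K$, which by standard $\varepsilon$-net arguments is a correct summary with high probability — this is the source of both the randomisation and one of the two logarithmic factors — and runs one BFS from $K$ as a super-source (for the values $dist(\cdot,K)$) together with one BFS per sampled vertex, so that one recursion level costs $\widetilde{O}(km)$; over the $O(\log n)$ levels this is $O(km\log^2 n)$. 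Chordality is indispensable here: Proposition~\ref{prop:unbounded} shows that for AT-free graphs — asteroidal number two — the analogous hypergraph can already have unbounded VC-dimension, so no such summarisation would be possible without it.

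The main obstacle is the structural claim itself, namely that bounded asteroidal number forces the projection systems on cliques to have bounded complexity. The natural proof is contrapositive: from a subset of $K$ that is shattered by the projections — or, equivalently, from many pairwise incomparable projections — one should be able to use the clique-tree/simplicial structure of $G$ to route shortest paths from the corresponding vertices towards $K$ through essentially disjoint branches, and thereby extract $\Omega(d)$ pairwise non-adjacent vertices such that deleting the closed neighbourhood of any one disconnects none of the others from the rest — that is, an asteroidal set whose size contradicts the bound $k$. Verifying that such an extraction can be carried out with the independence and connectivity conditions holding simultaneously, and with a sharp enough dependence on $k$, is the technical heart of the argument. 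The remaining points are routine but must be checked: choosing the gadget that stands in for $K$ in the recursive instances so that it has only $\widetilde{O}(k)$ vertices yet preserves the two relevant distance classes towards each $C_i$ (again relying on the boundedness of the projection systems); disposing of small-eccentricity corner cases by brute force, in the spirit of the hypothesis $e(x)\ge 3$ in Lemma~\ref{lem:furthest-vertex}, where a constant number of BFS already certifies the answer; and maintaining witnesses so that a diametral pair, and not merely its length, is returned with high probability.
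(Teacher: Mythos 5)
Your overall architecture---balanced clique separators, cross distances factoring through the separator as $dist(u,K)+dist(v,K)+[\pi(u)\cap\pi(v)=\emptyset]$, and a reduction to deciding disjointness of projections onto a clique---is essentially the framework of~\cite{DuD19+} that the paper invokes as a black box (Theorem~\ref{thm:framework}), so that outer layer is sound in spirit even though your gadget and recursion details are left vague. The gap is precisely at the one place where the asteroidal-number hypothesis is used, and it is twofold. First, you leave the structural claim unproven and yourself flag it as the technical heart. The paper proves a cleaner and stronger statement (Lemma~\ref{lem:inclusion}): in a \emph{split} graph, a subset of the stable set of size at least three is asteroidal if and only if the corresponding neighbourhoods form an antichain under inclusion; hence asteroidal number at most $k$ yields at most $k$ inclusion-wise minimal neighbourhoods. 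Because the framework already hands you split graphs (the asteroidal number is hereditary), no statement about metric projections onto arbitrary clique separators of a general chordal graph is ever needed---which is fortunate, since that is the version you would have to establish and it is not obviously true.

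Second, and more seriously, your algorithmic use of the claim fails even granting it: a random sample or $\varepsilon$-net of $\tilde{\cal O}(k)$ vertices of $K$ cannot decide whether $\pi(u)\cap\pi(v)=\emptyset$, because two projections may intersect in a single vertex of $K$, which any sublinear sample misses with high probability. Epsilon-nets certify hitting of \emph{heavy} ranges; they do not certify exact disjointness. The paper's own Theorem~\ref{thm:split-ov-vc-dim} (from~\cite{DHV20}) is the state of the art for disjointness under bounded VC-dimension, and it only gives $\tilde{\cal O}(\ell n^{1-\varepsilon_d})$, which would not yield ${\cal O}(km\log^2 n)$. The mechanism that actually works is the antichain bound, not VC-dimension: keep one vertex per twin class, restrict one side to the at most $k$ vertices whose neighbourhoods are inclusion-wise minimal (computable in ${\cal O}(k\ell)$ time by Lemma~\ref{lem:compute-min}), and test each of them against the entire other side by a linear scan; since disjointness is inherited downward under inclusion, this is exhaustive, deterministic, and runs in ${\cal O}(k\ell)$ time (Proposition~\ref{prop:slit-an}). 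The only randomisation in Theorem~\ref{thm:asteroidal-num} comes from the framework itself, not from sampling inside the separator.
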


Our results are based on a general framework for computing the diameter of chordal graphs~\cite{DuD19+}.
We recall that a split graph $G$ is a graph whose vertex-set can be bipartitioned into a clique $K$ and a stable set $S$. We may assume $G$ to be given under its {\em sparse representation}, defined in~\cite{DHV19} as being the hypergraph $(K \cup S,\{N_G[s] \mid s \in S\})$. The {\sc Split-OV} problem is a special case of OV where $A = \{ N_G[a] \mid a \in S_A \}$ and $B = \{ N_G[b] \mid b \in S_B \}$, for some split graph $G$ and for some partition $S_A,S_B$ of its stable set. An instance of {\sc Split-OV} can be encoded as a triple $(G,S_A,S_B)$, with $G$ being given under its sparse representation. Note that the size of such instance is dominated by $\ell = \sum_{s \in S_A \cup S_B} |N_G[s]|$, that is $\leq 2m + n$. 

\begin{theorem}[Theorem $8'$ from~\cite{DuD19+}]\label{thm:framework}
For a subclass ${\cal C}$ of chordal graphs, let ${\cal S}$ be the subclass of all split graphs that are induced subgraphs of a chordal graph in ${\cal C}$. If for every $(G,S_A,S_B)$, with $G \in {\cal S}$ connected, we can solve {\sc Split-OV} in ${\cal O}(\ell^b)$ time, for some $b \geq 1$, then there is a randomized ${\cal O}(m^b\log^2{n})$-time algorithm for computing w.h.p. the diameter of chordal graphs in ${\cal C}$.
\end{theorem}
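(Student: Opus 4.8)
The plan is to reduce the diameter of a chordal graph $G \in {\cal C}$ to a controlled family of {\sc Split-OV} instances whose underlying split graphs are induced subgraphs of $G$ (hence members of ${\cal S}$), organising everything around a clique tree and exploiting the classical fact that every minimal separator of a chordal graph is a clique. First I would build a clique tree of $G$ and apply a centroid decomposition to it, of depth $O(\log n)$. Each node of the decomposition yields a clique separator $K$, and I charge every pair $\{x,y\}$ to the unique level at which $K$ first separates them (i.e.\ $x,y$ lie in distinct components of $G \setminus K$, or one of them lies in $K$). Then $diam(G)$ equals the maximum, over all separators $K$ produced by the decomposition, of the largest distance realised by a pair crossing $K$; so it suffices to compute, for each $K$, the quantity $M_K := \max\{dist(u,v) : u,v \text{ in distinct components of } G \setminus K\}$.

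For a fixed $K$, a multi-source BFS from $K$ yields, for every vertex $u$, both $p_u := dist(u,K)$ and the projection $K_u := \{k \in K : dist(u,k) = p_u\}$. Because $K$ is a clique, any shortest $uv$-path with $u,v$ in distinct components enters and leaves $K$, and one checks the key identity
$$ dist(u,v) = p_u + p_v + \mathbb{1}[\,K_u \cap K_v = \emptyset\,]. $$
Consequently $M_K = P_K + \mathbb{1}[\,\exists \text{ crossing pair } (u,v) \text{ with } p_u + p_v = P_K \text{ and } K_u \cap K_v = \emptyset\,]$, where $P_K := \max\{p_u + p_v\}$ over crossing pairs. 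The additive term $P_K$ is computed directly, so only the bracketed existence test is nontrivial, and it is exactly a disjointness (OV) query on the projection sets $K_u \subseteq K$.

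I would realise this disjointness test as a {\sc Split-OV} instance on the split graph with clique $K$ whose stable side consists of the far vertices $u$, each equipped with the neighbourhood $K_u$; then $N[u] \cap N[v] = K_u \cap K_v$, so {\sc Split-OV} detects precisely a pair with disjoint projections. The difficulty is that {\sc Split-OV} is defined for a {\em fixed} bipartition $S_A,S_B$ of the stable side, whereas here we must simultaneously enforce ``distinct components'' and ``$p_u + p_v = P_K$''. Both constraints are handled by a randomised colour-restriction: randomly $2$-colour the far vertices (which places a fixed witness pair on opposite sides with probability $\tfrac12$) and restrict each side to the relevant distance value, repeating $O(\log n)$ times so that a witness is detected with high probability. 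This is what introduces one logarithmic factor and the Monte Carlo guarantee.

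For the running time, summing over the separators at a single level of the decomposition the components are vertex-disjoint, so the {\sc Split-OV} input sizes $\ell_j$ satisfy $\sum_j \ell_j = O(m)$; since $b \geq 1$, superadditivity gives $\sum_j \ell_j^{\,b} \leq (\sum_j \ell_j)^b = O(m^b)$. Over the $O(\log n)$ levels and the $O(\log n)$ random repetitions this yields the claimed $O(m^b \log^2 n)$ bound. The step I expect to be the main obstacle is verifying that each constructed split graph is genuinely an induced subgraph of $G$ --- equivalently, that each projection $K_u$ is realised as $N[w] \cap K$ by an actual vertex $w$ of $G$, and that these realising vertices can be chosen pairwise non-adjacent --- since this is exactly what places the instances in ${\cal S}$ and licenses the invocation of the hypothesised {\sc Split-OV} algorithm. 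This is where chordality must be used in earnest, through the clique structure of minimal separators and the behaviour of neighbourhoods projected onto a clique; by contrast the distance identity and the total-size accounting become routine once this structural lemma is in hand.
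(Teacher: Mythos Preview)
The paper does not prove Theorem~\ref{thm:framework}; it is quoted from~\cite{DuD19+} and used as a black box, so there is no in-paper proof to compare against. Your outline --- centroid decomposition of a clique tree, the identity $dist(u,v)=p_u+p_v+\mathbf{1}[K_u\cap K_v=\emptyset]$ across a clique separator $K$, and a reduction of the ``$+1$'' question to an OV instance on the projections $K_u$ --- is the natural route and is almost certainly what~\cite{DuD19+} does; the two $\log$ factors (one for the depth of the decomposition, one for probability amplification) and the superadditivity bound $\sum_j \ell_j^{\,b}\le(\sum_j\ell_j)^b$ are also right.

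You are correct that the only real obstacle is placing the constructed split graph inside ${\cal S}$, and as written your proposal does not close it. Two concrete problems. First, randomly $2$-colouring \emph{vertices} is unsound: a Split-OV witness $(a,b)$ with $a\in S_A,\ b\in S_B$ would then not certify that the underlying $u,v$ lie in distinct components of $G\setminus K$, so you could falsely conclude $M_K=P_K+1$; you must colour \emph{components}. Second, and more seriously, even after colouring components the gates realising the projections on one side can be pairwise adjacent, so $S_A\cup S_B$ need not be independent and $G[K\cup S_A\cup S_B]$ need not be split at all. The repair uses chordality twice: (i) for every $u$ there is a single gate $g(u)$ at distance~$1$ from $K$ with $N(g(u))\cap K=K_u$ (two gates of $u$ with incomparable traces on $K$ would force an induced cycle of length $\ge 5$); (ii) any two vertices at distance~$1$ from $K$ with \emph{incomparable} traces on $K$ are non-adjacent (else they form an induced $C_4$ with one vertex from each trace). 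Since an OV witness survives when both sides are restricted to inclusion-minimal projections, keeping only gates with inclusion-minimal traces yields a genuine stable set, and $G[K\cup S_A\cup S_B]$ is then an induced split subgraph of $G$, hence in ${\cal S}$. Without (i)--(ii) your reduction does not yet go through.
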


Recall that having asteroidal number at most $k$ is a hereditary property.
Hence, by the above Theorem~\ref{thm:framework}, in order to prove Theorem~\ref{thm:asteroidal-num}, it suffices to solve {\sc Split-OV} in ${\cal O}(k\ell)$ time for connected split graphs of asteroidal number at most $k$. Before presenting such an algorithm (Proposition~\ref{prop:slit-an}), we need a few preparatory lemmas. 

\begin{lemma}\label{lem:inclusion}
For a split graph $G$ with stable set $S$, and $A \subseteq S$ of cardinality $|A| \geq 3$, $A$ is an asteroidal set if and only if the neighbour sets $N(a), \ a \in A$ are pairwise incomparable w.r.t. inclusion. 
\end{lemma}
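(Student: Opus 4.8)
The plan is to prove both directions by unwinding the definitions, using the split structure of $G$ heavily. Write $K$ for the clique and $S$ for the stable set, so $V = K \cup S$ and $A \subseteq S$. Since $A$ is a subset of the stable set, it is automatically independent, so the only substantive content of ``asteroidal set'' is the connectivity condition: for every $v \in A$, all the other vertices of $A$ lie in one component of $G \setminus N[v]$.

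First I would make the following elementary observation about components of $G \setminus N[v]$ when $v \in S$: removing $N[v]$ deletes $v$ itself together with $N(v) \subseteq K$. What remains is the clique $K \setminus N(v)$ together with the stable vertices in $S \setminus \{v\}$, each such $s$ attached to whatever part of $K \setminus N(v)$ lies in $N(s)$. Hence a stable vertex $s \neq v$ is an isolated vertex of $G \setminus N[v]$ precisely when $N(s) \subseteq N(v)$, and otherwise $s$ is attached to the (connected, since it is a clique) set $K \setminus N(v) \neq \emptyset$ and therefore lies in the ``big'' component together with all of $K \setminus N(v)$. (One should also note $K \setminus N(v)$ is nonempty whenever $A \setminus \{v\}$ is to be in a common component at all — in fact whenever any vertex of $S \setminus \{v\}$ survives with an edge — which is guaranteed in the interesting cases since $|A| \geq 3$ and the $N(a)$ will turn out incomparable.) So the key reduction is: \emph{for $v \in S$ and $a \in S \setminus\{v\}$, $a$ is in the same component as $K\setminus N(v)$ in $G\setminus N[v]$ iff $N(a) \not\subseteq N(v)$.}

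With this in hand the two directions are short. For the ``if'' direction, assume the sets $N(a)$, $a\in A$, are pairwise incomparable. Fix $v \in A$; for any two other vertices $a, a' \in A$ we have $N(a) \not\subseteq N(v)$ and $N(a') \not\subseteq N(v)$, so by the observation both $a$ and $a'$ lie in the component of $K\setminus N(v)$ in $G \setminus N[v]$, hence in the same component; thus $A$ is asteroidal. For the ``only if'' (contrapositive) direction, suppose some $N(a) \subseteq N(a')$ with $a \neq a'$ in $A$; since $A\subseteq S$ is independent the containment is strict only in the sense that $a \neq a'$, but in any case $N(a)\subseteq N(a')$ means that in $G \setminus N[a']$ the vertex $a$ has no neighbour, i.e.\ $a$ is isolated there. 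Since $|A|\geq 3$, there is a third vertex $a'' \in A\setminus\{a,a'\}$, and $a''$ is not in $\{a'\}$'s deleted neighbourhood, so $a$ and $a''$ cannot be in the same component of $G\setminus N[a']$ — contradicting that $A$ is asteroidal. Hence asteroidal implies pairwise incomparable neighbourhoods.

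The main thing to be careful about — the only place the argument could go wrong — is the degenerate case where $K \setminus N(v)$ is empty for some $v \in A$, since then ``the component of $K\setminus N(v)$'' is not well defined and several stable vertices could be mutually isolated. I would handle this up front: if $N(v) = K$ for some $v\in A$, then every other vertex of $S$ either is isolated in $G\setminus N[v]$ or does not exist, so $A$ asteroidal with $|A|\geq 3$ forces the other $a\in A$ to be... impossible, hence $N(v)=K$ cannot occur for an asteroidal $A$ of size $\geq 3$; and symmetrically $N(v) = K$ makes $N(a)\subseteq N(v)$ for all $a$, so the incomparability condition also fails. Thus the two conditions agree (both false) in this degenerate case, and the observation above applies verbatim in the remaining case $K\setminus N(v)\neq\emptyset$. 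This is a routine case check rather than a real obstacle.
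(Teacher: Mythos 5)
Your proof is correct and follows essentially the same route as the paper's: the forward direction via the observation that $N(a) \subseteq N(a')$ isolates $a$ in $G \setminus N[a']$ while a third vertex of $A$ survives, and the converse via the fact that the surviving part of the clique $K$ keeps all of $A \setminus \{v\}$ in one component once each $N(a) \setminus N(v)$ is nonempty. The only cosmetic difference is that the paper works with the clique $\bigcup_{a' \neq a} N(a') \setminus N(a)$ rather than all of $K \setminus N(v)$, which sidesteps your degenerate case $K \setminus N(v) = \emptyset$ automatically.
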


\begin{proof}
On one direction, if $N(a') \subseteq N(a)$ for some distinct $a,a' \in A$, then we claim that $A$ is not an asteroidal set. Indeed, $a'$ is isolated in $G \setminus N[a]$, and so disconnected from $A \setminus \{a,a'\}$, that is nonempty because $|A| \geq 3$. On the other direction, let us assume the neighbour sets $N(a), \ a \in A$ to be pairwise incomparable w.r.t. inclusion. Let $a \in A$ be arbitrary. By the hypothesis all the subsets $N(a') \setminus N(a), a' \in A \setminus \{a\}$ are nonempty. Since $G$ is a split graph, $N(A \setminus \{a\}) \setminus N(a) = \bigcup_{a' \in A \setminus \{a\}} N(a') \setminus N(a)$ is a clique. Therefore, the vertices of $A \setminus \{a\}$ are in the same connected component of $G \setminus N[a]$. 
\end{proof}

\begin{lemma}\label{lem:compute-min}
Let ${\cal F}$ be a family of pairwise different subsets, and $\ell = \sum_{S \in {\cal F}}|S|$.
If there are at most $k$ inclusionwise minimal subsets in ${\cal F}$, then all these subsets can be computed in total ${\cal O}(k\ell)$ time.
\end{lemma}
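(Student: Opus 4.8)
The plan is to maintain a working collection $M$ of candidate minimal sets and to scan the members of $\mathcal{F}$ one at a time, in a carefully chosen order. First, sort $\mathcal{F}$ by increasing cardinality; since all cardinalities lie between $0$ and $\ell$ and (ignoring a possible empty set, which is then the unique minimal member and can be handled immediately) $|\mathcal{F}| \le \ell$, this costs $O(\ell)$ time by counting sort. Also relabel the ground set as $\{1,\dots,\ell\}$, so that membership in a marked set can be tested in $O(1)$ using a boolean array. Then process the sets in the sorted order: when a set $S$ is reached, test whether some $T \in M$ satisfies $T \subseteq S$; if so, discard $S$, and otherwise add $S$ to $M$. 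At the end, output $M$.

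For correctness, the invariant is that after processing any prefix of the sorted list, $M$ is exactly the set of inclusionwise minimal members of that prefix; and, because the list is sorted by increasing size, any member that is minimal within a prefix is in fact minimal within all of $\mathcal{F}$, since a strictly smaller subset would have been processed earlier. Maintaining the invariant is a short case analysis. If some $T \in M$ is a subset of the new set $S$, then $T \subsetneq S$ (all sets are distinct), hence $S$ is not minimal and no update is needed — and no set currently in $M$ is a proper superset of $S$, because such a set would contain $T$, contradicting its minimality in the prefix. If no $T \in M$ is a subset of $S$, then no earlier set is a proper subset of $S$ (any such set either lies in $M$ or contains a set of $M$), so $S$ is minimal in the prefix; moreover every earlier set has size $\le |S|$, so no set currently in $M$ can be a proper superset of $S$, and we simply insert $S$.

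For the running time, the crucial point is that $|M| \le k$ \emph{at every step}: by the invariant $M$ always consists of members of $\mathcal{F}$ that are minimal in $\mathcal{F}$, of which there are at most $k$ by hypothesis. Hence, when processing $S$, we mark the $|S|$ elements of $S$ in the boolean array, test each of the at most $k$ sets $T \in M$ for containment in $S$ by scanning the elements of $T$ (each scan costing $O(|T|) = O(|S|)$ since $T$ was processed before $S$ and is therefore no larger), and finally unmark $S$; this is $O(k|S|)$ work. Summing over all $S \in \mathcal{F}$ yields $O(k\ell)$, and together with the $O(\ell)$ preprocessing this gives the claimed $O(k\ell)$ bound.

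The main obstacle — and the reason the increasing-size order is essential rather than a convenience — is precisely this uniform bound $|M| \le k$ during the whole computation, not merely at the end. Without sorting, $M$ can transiently grow far beyond $k$ (for instance, many singletons may be inspected before a set that is contained in all of them), which would wreck the time analysis; processing sets from smallest to largest forces $M$ to stay within the target size throughout, since a minimal set of any prefix can never be dethroned by a set examined later.
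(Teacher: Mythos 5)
Your proof is correct, and it reaches the same $O(k\ell)$ bound as the paper, but the algorithmic decomposition is genuinely different. The paper's proof runs at most $k$ \emph{rounds}, each costing $O(\ell)$: in a round it picks a set $S \in {\cal F}$ of minimum cardinality among the sets still present (inclusion-minimal because the sets are pairwise distinct), marks its elements, and deletes every superset of $S$ from ${\cal F}$ by one sweep over the whole family; since each round outputs one minimal set of the original family, there are at most $k$ rounds. You instead sort once by cardinality and make a \emph{single} sweep, accumulating the pool $M$ of minimal sets and testing each incoming set against all of $M$. Both arguments hinge on the same key observation -- among pairwise distinct sets, a set of minimum cardinality is inclusion-minimal, so processing in order of increasing cardinality makes ``minimal in the prefix'' coincide with ``minimal in ${\cal F}$'' -- but the accounting differs: the paper's bound is immediate ($k$ rounds of $O(\ell)$ each, with no need to bound any intermediate data structure), whereas your single-pass version needs the extra invariant $|M| \le k$ \emph{throughout} the sweep, which you correctly identify as the crux and establish via the sorted order, together with the per-set charge $O(|T|) = O(|S|)$ for each containment test. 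What your version buys is that it avoids repeated passes over (and destructive modification of) the family; what the paper's version buys is a shorter correctness argument. Either proof is complete.
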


\begin{proof}
It suffices to prove that in ${\cal O}(\ell)$ time, we can compute an inclusionwise minimal subset and remove all its supersets from ${\cal F}$. Specifically, let $S \in {\cal F}$ be of minimum cardinality. Since all sets in ${\cal F}$ are pairwise different, $S$ is inclusion minimal. Furthermore, if we first mark all elements in $S$, and we store the cardinality $|S|$ of this set, then we can remove all its supersets from ${\cal F}$ by scanning the family in total ${\cal O}(\ell)$ time.  
\end{proof}

Combining Lemmas~\ref{lem:inclusion} and~\ref{lem:compute-min}, we get the following algorithm for {\sc Split-OV} on split graphs of asteroidal number at most $k$:

\begin{proposition}\label{prop:slit-an}
For every $(G,S_A,S_B)$ where $G$ has asteroidal number at most $k \geq 2$, we can solve {\sc Split-OV} in ${\cal O}(k\ell)$ time.
\end{proposition}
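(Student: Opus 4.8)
The plan is to reduce {\sc Split-OV} to finding, on each side of the partition, the inclusionwise-minimal open neighbourhoods, of which there are at most $k$ by the asteroidal-number hypothesis, and then to brute-force over the few candidate pairs.

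I would first dispose of the closed-versus-open neighbourhood issue. For $a \in S_A$ and $b \in S_B$ we have $N_G[a] \cap N_G[b] = N_G(a) \cap N_G(b)$: indeed $a \neq b$ because $S_A,S_B$ partition the stable set $S$, and neither of $a,b$ lies in the other's open neighbourhood since open neighbourhoods of vertices of $S$ are contained in the clique $K$ and $K \cap S = \emptyset$. Hence $(G,S_A,S_B)$ is a YES-instance of {\sc Split-OV} if and only if there exist $a \in S_A$ and $b \in S_B$ with $N_G(a) \cap N_G(b) = \emptyset$, and from now on I work with the open neighbourhoods, which are subsets of $K$.

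Next comes the structural core. Let $\mathcal{M}_A$ (resp.\ $\mathcal{M}_B$) denote the inclusionwise-minimal members of the set family $\{ N_G(a) \mid a \in S_A\}$ (resp.\ $\{N_G(b) \mid b \in S_B\}$). I claim $|\mathcal{M}_A| \leq k$ and $|\mathcal{M}_B| \leq k$: distinct inclusionwise-minimal sets are pairwise incomparable for inclusion, so $k+1$ of them would be realized as $N_G(a_1),\dots,N_G(a_{k+1})$ for pairwise distinct $a_i \in S_A \subseteq S$, and since $k+1 \geq 3$, Lemma~\ref{lem:inclusion} would make $\{a_1,\dots,a_{k+1}\}$ an asteroidal set, contradicting that $G$ has asteroidal number at most $k$. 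By Lemma~\ref{lem:compute-min}, whose proof applies verbatim even when the family contains repeated sets, the families $\mathcal{M}_A,\mathcal{M}_B$ together with one witnessing vertex per member can be computed in $\mathcal{O}(k\ell_A)$ and $\mathcal{O}(k\ell_B)$ time, where $\ell_A = \sum_{a \in S_A}|N_G[a]|$ and $\ell_B=\sum_{b \in S_B}|N_G[b]|$, hence in $\mathcal{O}(k\ell)$ time in total.

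It remains to observe that the instance is a YES-instance if and only if some $M \in \mathcal{M}_A$ and $M' \in \mathcal{M}_B$ are disjoint: the ``if'' direction uses the witnessing vertices, and for ``only if'', given $a \in S_A$, $b \in S_B$ with $N_G(a)\cap N_G(b)=\emptyset$, any $M \in \mathcal{M}_A$ with $M \subseteq N_G(a)$ and any $M' \in \mathcal{M}_B$ with $M' \subseteq N_G(b)$ satisfy $M \cap M' = \emptyset$. To test this within budget (the naive $k^2$ pairwise tests could cost $\mathcal{O}(k^2\ell)$), I would, for each of the at most $k$ sets $M' \in \mathcal{M}_B$, mark its elements in a boolean array over $K$ and then scan every $M \in \mathcal{M}_A$ looking for a marked element; this costs $\mathcal{O}(|M'| + \sum_{M \in \mathcal{M}_A}|M|)$ per $M'$, and $\sum_{M \in \mathcal{M}_A}|M| \leq \ell$ since the members of $\mathcal{M}_A$ are distinct sets of the form $N_G(a)$, so the overall cost is $\mathcal{O}(k\ell)$. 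The one genuinely new ingredient is the structural claim $|\mathcal{M}_A|,|\mathcal{M}_B| \leq k$ via Lemma~\ref{lem:inclusion}; everything else is a short minimal-sets computation plus a careful but routine disjointness scan, the only thing to watch being that handling repeated neighbourhoods and the pairwise test stay within $\mathcal{O}(k\ell)$.
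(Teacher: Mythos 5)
Your proof is correct and follows essentially the same route as the paper's: both rest on Lemma~\ref{lem:inclusion} to bound the number of distinct inclusionwise-minimal neighbourhoods by $k$, and on Lemma~\ref{lem:compute-min} to extract them in ${\cal O}(k\ell)$ time before a final linear disjointness scan. The only cosmetic differences are that the paper reduces just $S_A$ to its minimal members and then, for each of the at most $k$ survivors, scans all of the sets $N(b)$, $b \in S_B$ (handling repeated neighbourhoods by an explicit twin-removal step via partition refinement), whereas you symmetrize to both sides, argue that Lemma~\ref{lem:compute-min} tolerates duplicates, and finish with a marking-based pairwise test; both variants stay within ${\cal O}(k\ell)$.
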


\begin{proof}
Two vertices $a,a' \in S_A$ are {\em twins} if $N(a) = N(a')$.
It is now folklore that by using partition refinement techniques, we can compute all twin classes of $S_A$ in ${\cal O}(\ell)$ time~\cite{HMPV00}. Note also that in order to solve {\sc Split-OV}, it is sufficient to keep only one vertex per twin class of $S_A$. Thus, from now on we assume that the neighbour sets $N(a), a \in S_A$ are pairwise different. Observe that if $a,a' \in S_A$ and $b \in S_B$ satisfy $N(a') \subset N(a)$ and $N(a) \cap N(b) = \emptyset$, then $N(a') \cap N(b) = \emptyset$. In particular, we may further restrict $S_A$ to the vertices $a$ s.t. $N(a)$ is inclusion wise minimal. By Lemma~\ref{lem:inclusion}, there are at most $k$ such vertices. Therefore, by Lemma~\ref{lem:compute-min}, we can compute all these vertices in total ${\cal O}(k\ell)$ time. Finally, for each $a \in S_A$ s.t. $N(a)$ is inclusion wise minimal, we can compute a vertex $b \in S_B$ s.t. $N(a) \cap N(b) = \emptyset$ (if any) simply by scanning $N(a)$ and all the neighbour sets $N(b), \ b \in S_B$. It takes total ${\cal O}(\ell)$ time.
\end{proof}

Theorem~\ref{thm:asteroidal-num} now follows from Theorem~\ref{thm:framework} and Proposition~\ref{prop:slit-an}.\qed

\smallskip
\noindent
A $k$-AT in a graph is a triple $x,y,z$ of vertices such that, for any two of them, there exists a path that avoids the {\em ball of radius $k$} of the third one, where the latter is defined, for any vertex $v$, as $N^k[v] = \{ u \mid dist(u,v) \leq k \}$. A graph is $k$-AT-free if it does not contain any $k$-AT. Note that in particular, the AT-free graphs are exactly the $1$-AT-free graphs. This generalization of AT-free graphs was first proposed in~\cite{MaF16}, where the performances of BFS for the latter were studied. Before concluding this section, we make the following simple observation:

\begin{proposition}\label{prop:k-at}
Under SETH, there is no truly subquadratic algorithm for computing the diameter of $k$-AT-free chordal graphs, for every $k \geq 2$.
\end{proposition}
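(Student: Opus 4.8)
The plan is to reduce from the standard SETH-hardness of diameter computation on chordal graphs (equivalently, from {\sc OV}), and then to \emph{kill all $k$-ATs} by inflating the construction so that every vertex lies inside the ball of radius $k$ of many other vertices, making it impossible for a path to ever ``avoid'' such a ball.

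First I would recall the classic reduction used in~\cite{BCH16} (itself building on~\cite{CDDH+01}): from an {\sc OV} instance $(A,B,C)$ one builds a chordal graph that is essentially a split graph, with $C$ a clique, $A$ and $B$ the stable set, $a\sim c$ iff $c\in a$, and $b\sim c$ iff $c\in b$; one attaches a pendant apex adjacent to all of $C$ (or uses the sparse representation) so that $\mathrm{dist}(a,b)=2$ if $a\cap b\neq\emptyset$ and $\mathrm{dist}(a,b)=3$ if $a\cap b=\emptyset$, while all other distances stay $\leq 2$ or $\leq 3$; hence the diameter is $3$ iff there is an orthogonal pair, and $2$ otherwise. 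This graph is chordal (in fact split, hence chordal), and under SETH no truly subquadratic algorithm distinguishes diameter $2$ from $3$. The issue, as the paper already notes, is that a split graph of diameter at most $3$ typically \emph{does} contain $2$-ATs: any two non-adjacent vertices of the stable set together with a third such vertex can form a $2$-AT because balls of radius $2$ around stable-set vertices need not dominate.

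The fix is to subdivide or to pad the distances so that all the relevant balls of radius $k$ become huge. Concretely, for a target $k\ge 2$, replace the single apex clique $C$ by a path-like ``backbone'' of length $k-1$ of cliques (or, more simply, take the chordal graph from the previous paragraph and attach to every vertex of $C$ a common pendant path of length $k-1$, so that every vertex of the graph is within distance $k$ of every vertex of $C$ and within distance $k+1$ of every other vertex). Under such a transformation, $N^k[v]$ for $v\in C$ (and for the padded backbone vertices) contains essentially the whole vertex set, so that no three vertices can form a $k$-AT that routes ``around'' a backbone vertex; and for $v$ in the old stable set, $N^k[v]\supseteq N[v]\cup C\cup(\text{everything within }k)$ which again is so large that any two other vertices are trapped inside it or are directly adjacent. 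One then checks that chordality is preserved (attaching pendant paths and blowing up a clique into a clique-path keeps the graph chordal) and that the diameter shifts by a controlled additive constant depending only on $k$, so that the OV answer is still read off from whether the diameter equals one value or the next. Since the size of the new graph is linear in $m$ (for fixed $k$) plus the original $\Theta(n)$, a truly subquadratic diameter algorithm on $k$-AT-free chordal graphs would yield one for {\sc OV}, contradicting Theorem~\ref{thm:ov}.

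The main obstacle I anticipate is the verification that the padded construction is genuinely $k$-AT-free: one must argue that for \emph{every} vertex $v$, the complement $G\setminus N^k[v]$ either is empty, or is a single vertex, or otherwise cannot contain two vertices that are also separated (in the AT sense) from each other by the third vertex's ball — i.e.\ one has to rule out \emph{all} potential $k$-ATs, not just the obvious ones, including those entirely inside the old stable set. The cleanest way is to show $N^k[v]=V$ for all backbone vertices $v$ and $N^k[v]\supseteq V\setminus(\text{an independent set of pendant tips})$ for stable-set vertices $v$, then observe that an asteroidal triple requires each of its three members to have a non-trivial ``outside'', which forces all three into the few leftover independent vertices, where no connecting path can exist at all. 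A secondary, more bookkeeping-type obstacle is pinning down the exact diameter values (as a function of $k$) before and after padding so the reduction statement is clean; but since we only need to distinguish two consecutive integers, any additive constant suffices and this is routine.
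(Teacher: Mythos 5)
Your plan rests on a false premise, and the repair you build on top of it actually destroys the reduction. You assert that the split graphs arising from the OV reduction ``typically do contain $2$-ATs'' because two non-adjacent stable-set vertices together with a third one can form such a triple. They cannot: in any connected split graph with clique $K$, every vertex $v$ either lies in $K$ or has a neighbour in $K$, so $N^2[v] \supseteq K$; hence for every $k \geq 2$ the set $V \setminus N^k[v]$ is contained in the stable set, i.e.\ it is independent, and no path between two \emph{distinct} vertices can avoid $N^k[v]$. This is precisely the ``each member of the triple needs a non-trivial outside, but the outside is an independent set'' observation you make in your last paragraph --- except that it already applies verbatim to the unmodified split graph. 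That one-line observation, combined with the SETH-hardness of diameter on split graphs from~\cite{BCH16}, is the paper's entire proof; no padding, subdivision, or backbone is needed, and no new chordality or diameter bookkeeping arises.

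Worse, the padding you propose washes out the very gap you need. Attaching a common pendant path of length $k-1$ to $C$ creates a tip vertex at distance $2+(k-1)=k+1$ from every stable-set vertex, so the diameter of the padded graph is $\max\{k+1,\,2\text{ or }3\}=k+1$ for every $k \geq 2$, \emph{independently} of whether the OV instance has an orthogonal pair: the $2$-versus-$3$ distinction carried by the stable-set pairs is no longer the diameter, and the two cases you must separate become indistinguishable. Your claim that ``the diameter shifts by a controlled additive constant'' is where this fails --- the shift applies to distances involving the new padded vertices, not to the stable-set pairs encoding the OV answer. The correct takeaway is that the obstruction to proving a SETH lower bound for ($1$-)AT-free graphs is not the presence of generalized asteroidal triples in split graphs (there are none for any $k\geq 2$), but the density of the cocomparability OV gadget; for $k \geq 2$ that obstruction is irrelevant because the split-graph hardness of~\cite{BCH16} can be invoked directly.
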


\begin{proof}
Under SETH, there is no truly subquadratic algorithm for computing the diameter of split graphs~\cite{BCH16}.
Therefore, it suffices to prove that split graphs are $k$-AT-free, for every $k \geq 2$.
Indeed, this is because, for a split graph with clique $K$, the ball of radius $2$ for {\em any} vertex contains $K$. In particular, its removal either leaves an empty graph, a singleton, or an independent set. As a result, there can be no $k$-AT in a split graph, for every $k \geq 2$.
\end{proof}

\section{Chordal graphs with a dominating shortest-path}\label{sec:dom-sp}

Recall (see Lemma~\ref{lem:dominating}) that every AT-free graph has a dominating pair. A {\em dominating pair} graph is one s.t. every induced subgraph has a dominating pair. We study diameter computation within chordal dominating pair graphs in Sec.~\ref{sec:dompair}. A weaker property for a graph is to have a dominating shortest-path. We first study chordal graphs with a dominating shortest-path, in Sec.~\ref{sec:duality}, for which we derive an interesting dichotomy result.

\subsection{Dichotomy theorem}\label{sec:duality}

The purpose of this section is to prove the following result:

\begin{theorem}\label{thm:dicho}
For every chordal graph with a dominating shortest-path, if the algorithm $3$-sweep LexBFS outputs a vertex of eccentricity $d \geq 3$, then it is the diameter. In particular, there is a linear-time algorithm for deciding whether the diameter is at least four on this graph class.

However, already for split graphs with a dominating edge, under SETH there is no truly subquadratic algorithm for deciding whether the diameter is either two or three.
\end{theorem}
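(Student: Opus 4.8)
The plan is to prove the theorem in two independent parts, matching its two sentences. For the \emph{positive} part, I would analyze the behaviour of $3$-sweep LexBFS on a chordal graph $G$ with a dominating shortest path. Let $u$ be the last vertex visited by an initial LexBFS, let $v$ be the last vertex of $\mathrm{LexBFS}(u)$, and let $w$ be the last vertex of $\mathrm{LexBFS}(v)$; the claim is that if $e(w) = d \geq 3$ then $d = \mathrm{diam}(G)$. First I would recall the known property (from the $2$-sweep analysis cited as~\cite{CDDH+01}) that on chordal graphs two sweeps of LexBFS already yield a vertex $v$ with $e(v) \geq \mathrm{diam}(G) - 1$, so the only worry is the off-by-one case. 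The key structural input is the dominating shortest path $P$: I would take any diametral pair $(a,b)$ with $\mathrm{dist}(a,b) = \mathrm{diam}(G)$ and use the fact that $P$ hits $N[a]$ and $N[b]$, plus chordality (no induced cycles of length $\geq 4$, equivalently the existence of clique-trees / simplicial orderings) to argue that the LexBFS endpoint $v$ — or the vertex $w$ produced by the third sweep — must itself realize a path of length $d$ to some endpoint-neighbourhood of $P$. Concretely, I would show that when $e(w) = d \geq 3$, a vertex at distance $d$ from $w$ together with the dominating shortest path forces, via a shortcut/chord argument, that $\mathrm{dist}(a,b) \leq d$ as well, using that in a chordal graph a BFS layering from any vertex has the property that consecutive layers see no long detours. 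The linear-time consequence is then immediate: run $3$-sweep LexBFS, compute $e(w)$ by one more BFS, and output ``$\mathrm{diam} \geq 4$'' iff $e(w) \geq 4$; correctness follows because $e(w) \in \{d, d\}$ equals $\mathrm{diam}$ whenever $d \geq 3$, and when $d \leq 3$ we have $\mathrm{diam} \leq 4$ anyway and need only the threshold decision, which the computed value settles except precisely in the $2$-vs-$3$ gap — but a graph with a dominating shortest path of diameter exactly $3$ is still correctly reported as ``not $\geq 4$''.

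For the \emph{negative} part, I would give a fine-grained reduction from {\sc OV} (Theorem~\ref{thm:ov}) to deciding whether a split graph with a dominating edge has diameter $2$ or $3$. Given {\sc OV} instance $A, B$ over universe $C$ with $|C| = O(\log n)$, I would build a split graph whose clique $K$ contains $C$ together with two extra apex vertices $\alpha, \beta$ made adjacent to everything in $K$ (and $\alpha\beta \in E$, so $\{\alpha,\beta\}$ is the dominating edge), and whose stable set $S$ consists of one vertex $s_a$ for each $a \in A$ with $N(s_a) = \{c \in C : c \in a\} \cup \{\alpha\}$, and one vertex $t_b$ for each $b \in B$ with $N(t_b) = \{c \in C : c \in b\} \cup \{\beta\}$. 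I would then verify: (i) this is a split graph, the edge $\alpha\beta$ dominates it since every stable-set vertex is adjacent to $\alpha$ or $\beta$, (ii) the diameter is at most $3$ always, because any two stable-set vertices are at distance $\leq 3$ through the clique, (iii) $\mathrm{dist}(s_a, t_b) = 2$ iff $a \cap b \neq \emptyset$ (a common element in $C$) — note $s_a$ and $t_b$ are never adjacent and are never both adjacent to $\alpha$ or both to $\beta$, so the only length-$2$ paths go through $C$ — and hence the graph has diameter $2$ iff \emph{every} pair $(a,b)$ intersects, i.e. iff the {\sc OV} instance is a ``no'' instance. Since the construction has $O(n\log n)$ edges, a truly subquadratic diameter algorithm on this class would solve {\sc OV} in truly subquadratic time, contradicting SETH.

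The step I expect to be the main obstacle is the \emph{tightness of the off-by-one analysis} in the positive part: precisely pinning down why $3$-sweep LexBFS — and not merely $2$-sweep — closes the gap for chordal graphs with a dominating shortest path, and why the bound $d \geq 3$ (rather than $d \geq 4$) suffices for exactness of the reported value. The $2$-sweep guarantee only gives $e(v) \geq \mathrm{diam} - 1$, and there are chordal graphs where no bounded number of sweeps gives the exact diameter, so the argument must genuinely exploit the dominating-shortest-path hypothesis in the third sweep. I would handle this by tracking, along the dominating shortest path $P = (p_0, \dots, p_\ell)$, the BFS layers from $v$: chordality forces each $p_i$ to lie in a layer that changes by at most one per step, the domination property forces a diametral endpoint $b$ to have a neighbour $p_j$ on $P$, and then the LexBFS tie-breaking rule (preferring vertices adjacent to later-numbered vertices) forces $w = \mathrm{LexBFS}(v).\sigma^{-1}(1)$ into the ``far end'' of $P$ relative to $v$, giving $e(w) \geq \mathrm{dist}(v\text{-end neighbourhood}, b) \geq \mathrm{diam} - 0$. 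The remaining routine case analysis (what happens when $\mathrm{diam} \in \{1,2\}$, or when the dominating shortest path is trivial) I would dispatch quickly, since then $\mathrm{diam} \leq 2 < 3 \leq d$ is impossible and the threshold decision is trivially correct.
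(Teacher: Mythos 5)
Your negative part is correct and essentially matches the paper's: the paper reduces from the SETH-hardness of \emph{bichromatic} diameter on split graphs~\cite{BCH16} by attaching two apex vertices $a,b$ forming a dominating edge, which, once the cited bichromatic reduction is unfolded, is exactly your construction with clique $C\cup\{\alpha,\beta\}$; your verification of (i)--(iii) and the $O(n\log n)$ size bound are sound.

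The positive part, however, has a genuine gap. The paper's argument is a short counting argument resting on a specific known property of LexBFS on chordal graphs (Lemma~\ref{lem:lexbfs-chordal}): if the endpoint $u$ of a LexBFS satisfies $e(u)<diam(G)$, then $e(u)=diam(G)-1$ and, crucially, $dist(u,x)=dist(u,y)=e(u)$ for \emph{both} ends $x,y$ of \emph{every} diametral pair. Assuming $e(v)<diam(G)$ for the $2$-sweep endpoint $v$, this forces $e(u)=e(v)=dist(u,v)=diam(G)-1$ with $u$ and $v$ each at distance $diam(G)-1$ from both $x$ and $y$. Projecting $u$ and $v$ onto a dominating \emph{diametral} path from $x$ to $y$ (which exists by Lemma~\ref{lem:dom-dp}, converting the dominating shortest path into one between diametral endpoints) splits $dist(x,y)$ into three segments each of length at least $diam(G)-3$, yielding $diam(G)\ge 3\,diam(G)-7$ and hence $diam(G)\le 3$. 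Your sketch contains neither of these two ingredients: you only invoke the weaker $2$-sweep bound $e(v)\ge diam(G)-1$, not the equidistance from both diametral endpoints, and you never convert the dominating shortest path into a dominating diametral path. In their place you propose that ``the LexBFS tie-breaking rule forces $w$ into the far end of $P$, giving $e(w)\ge diam(G)$'' --- but this is precisely the statement to be proved, and as you yourself note, no bounded number of sweeps achieves exactness on chordal graphs in general, so no tie-breaking argument alone can work; the dominating-path hypothesis must enter quantitatively, and your sketch does not say how. The ``shortcut/chord argument'' and ``no long detours in BFS layers'' remain at the level of intentions. Note also that the theorem does not claim $e(w)=diam(G)$ unconditionally; it claims this only when $e(w)\ge 3$, and the threshold $3$ emerges exactly from the arithmetic $diam(G)\le 3$ in the contradiction above --- a bound your approach has no mechanism to produce.
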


We start with the following easy lemma:

\begin{lemma}[~\cite{DeK95}]\label{lem:dom-dp}
If a graph has a dominating shortest-path, then it has a dominating diametral path.
\end{lemma}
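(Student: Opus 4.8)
The plan is to take a dominating shortest path $P=(v_0,v_1,\dots,v_\ell)$ of \emph{maximum} length $\ell$ among all dominating shortest paths of $G$, and to show that $\ell=diam(G)$; then $P$ is itself a dominating diametral path. I first record two facts. \emph{(Domination bound.)} Every vertex $x$ lies in $N[v_{i(x)}]$ for some index $i(x)$, and subpaths of $P$ are geodesics, so for all $x,y$ one has $dist(x,y)\le dist(x,v_{i(x)})+|i(x)-i(y)|+dist(v_{i(y)},y)\le \ell+2$; hence $\ell\le diam(G)\le\ell+2$. \emph{(Maximality.)} If $dist(x,v_\ell)=\ell+1$ for some $x$, then, picking $v_t\in N[x]\cap V(P)$, the inequality $\ell+1=dist(x,v_\ell)\le 1+(\ell-t)$ forces $t=0$ and $x\ne v_0$, so $xv_0\in E$ and $(x,v_0,v_1,\dots,v_\ell)$ is a shortest $xv_\ell$-path of length $\ell+1$ containing $V(P)$, i.e.\ a dominating shortest path strictly longer than $P$ — a contradiction. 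Hence $e(v_\ell)=\ell$, and symmetrically $e(v_0)=\ell$.

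Next, assume for contradiction $diam(G)>\ell$ and fix a diametral pair $(u,w)$. If $u=v_j\in V(P)$, then for $v_{j'}\in N[w]\cap V(P)$ we get $diam(G)=dist(v_j,w)\le|j-j'|+1\le\ell+1$, with equality forcing $\{j,j'\}=\{0,\ell\}$ and hence $u\in\{v_0,v_\ell\}$, contradicting $e(v_0)=e(v_\ell)=\ell<diam(G)$; thus $u\notin V(P)$, and likewise $w\notin V(P)$. Pick $v_j\in N(u)$ and $v_{j'}\in N(w)$ with $j\le j'$ (WLOG). The walk $u,v_j,v_{j+1},\dots,v_{j'},w$ shows $diam(G)\le j'-j+2$. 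If $diam(G)=\ell+2$ this forces $j=0,\ j'=\ell$, whence $dist(u,v_\ell)\ge diam(G)-1=\ell+1>e(v_\ell)$, a contradiction. So $diam(G)=\ell+1$, $j'-j\ge\ell-1$, and (both indices lying in $[0,\ell]$) $(j,j')\in\{(0,\ell-1),(0,\ell),(1,\ell)\}$.

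It then remains to exhibit a dominating diametral path of length $\ell+1$ in each of these three cases. In the two symmetric cases $(j,j')=(0,\ell-1)$ and $(1,\ell)$, the path $Q=(u,v_0,\dots,v_{\ell-1},w)$, respectively $Q=(u,v_1,\dots,v_\ell,w)$, has length $\ell+1=dist(u,w)$, hence is a diametral $uw$-path, and it contains all of $V(P)$ except one endpoint. In the case $(j,j')=(0,\ell)$ one has $uv_0,wv_\ell\in E$ and $dist(u,v_\ell)=dist(v_0,w)=\ell$; here I would take a shortest $uv_\ell$-path, note that it leaves $u$ not along $uv_0$ (else $(v_0,\dots,v_\ell)$ would be shortened), and splice it with a terminal segment of $P$ to obtain a diametral $uw$-path $R$ that still contains all but at most two vertices of $P$. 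In every case, the resulting diametral path contains $V(P)$ minus a bounded endpoint set, so every vertex is dominated by it \emph{except possibly} those whose \emph{only} neighbour on $P$ is a missing endpoint; for such a vertex $z$ I would invoke $dist(z,v_0)\le e(v_0)=\ell$ (resp.\ $dist(z,v_\ell)\le\ell$) together with the maximality of $P$ to force $z$ to be adjacent to $u$ or to $w$, hence dominated by our path, completing the contradiction.

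The main obstacle is exactly this last step: controlling the vertices whose sole neighbour on $P$ is an endpoint $v_0$ or $v_\ell$. Ruling out that they escape domination — equivalently, choosing the diametral pair $(u,w)$, and the shortest $uw$-path, so as to absorb all of them — is the delicate part, and it is where the maximality of $P$ and the equalities $e(v_0)=e(v_\ell)=\ell$ must be used in full; everything else is the short computation above.
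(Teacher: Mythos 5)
First, a remark on the comparison itself: the paper does not prove this lemma at all --- it is imported verbatim from~\cite{DeK95} --- so there is no in-paper argument to measure you against, and your proposal has to stand on its own.

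Your first two paragraphs are correct and do real work: for a maximum-length dominating shortest path $P=(v_0,\dots,v_\ell)$ you correctly derive $\ell\le diam(G)\le \ell+2$ and $e(v_0)=e(v_\ell)=\ell$, you correctly exclude $diam(G)=\ell+2$, and you correctly place a diametral pair $u,w$ off $P$ with attachment indices $(j,j')\in\{(0,\ell-1),(0,\ell),(1,\ell)\}$. The gap is exactly where you say it is, and it is not a small one --- it is the entire content of the lemma. Two specific problems. (i) In the case $(j,j')=(0,\ell)$, the intermediate claim that a shortest $uv_\ell$-path can be ``spliced with a terminal segment of $P$'' to give a diametral $uw$-path containing all but at most two vertices of $P$ is unsupported and false in general: a shortest $uv_\ell$-path has length $\ell$ and may meet $V(P)$ only in $v_\ell$, so the resulting diametral path can miss almost all of $P$, and the whole ``$V(P)$ minus a bounded endpoint set'' premise of your final step collapses in this case. (ii) Even in the clean cases $(0,\ell-1)$ and $(1,\ell)$, the decisive step --- that a vertex $z$ whose only neighbour on $P$ is the dropped endpoint must be adjacent to $u$ or to $w$ --- is asserted, not proved. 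The inequality $dist(z,v_0)\le e(v_0)=\ell$ only says $z$ reaches $v_0$ by some path of length at most $\ell$, which need not pass through $u$, $w$, or $V(Q)$; and the maximality of $P$ has already been ``spent'' in establishing $e(v_0)=e(v_\ell)=\ell$, since appending $z$ to $P$ yields a path of length $\ell+1$ that is dominating but not shortest (because $dist(v_0,z)\le\ell$), so no contradiction with maximality is available. As written, the argument establishes strong constraints on a hypothetical counterexample but never actually exhibits a dominating diametral path; to complete it you would need either a genuinely new idea for absorbing the privately attached vertices of the dropped endpoint, or a different choice of extremal object (e.g., optimizing over diametral pairs and their shortest paths simultaneously), which is precisely the part you have deferred.
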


Roughly, the first part of Theorem~\ref{thm:dicho} follows from this above Lemma~\ref{lem:dom-dp} combined with the following properties of LexBFS on chordal graphs:

\begin{lemma}[~\cite{CDDH+01}]\label{lem:lexbfs-chordal}
Let $u$ be the vertex of a chordal graph $G$ last visited by a LexBFS, and let $x,y$ is a pair of vertices such that $dist(x,y) = diam(G)$. If $e(u) < diam(G)$ then $e(u)$ is even, $dist(u,x) = dist(u,y) = e(u)$ and $e(u) = diam(G)-1$.
\end{lemma}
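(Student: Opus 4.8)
The plan is to split the three conclusions into a \emph{metric} part (that $dist(u,x)=dist(u,y)=e(u)=diam(G)-1$) and a \emph{parity} part (that $e(u)$ is even), and to reduce the entire metric part to a single inequality. Write $d=e(u)$ and $D=diam(G)$, assume $d<D$, and put $a=dist(u,x)$, $b=dist(u,y)$. Since $d<D$ we have $a,b\le d\le D-1$, while the triangle inequality gives $D=dist(x,y)\le a+b$. My first step is therefore to prove the \textbf{Core Claim}: every diametral endpoint $z$ (so $z\in\{x,y\}$) satisfies $dist(u,z)\ge D-1$. Granting this, $D-1\le dist(u,x)\le d\le D-1$ forces $dist(u,x)=d=D-1$, and symmetrically $dist(u,y)=d=D-1$; that is exactly two of the three conclusions, and it leaves only the parity of $e(u)$ to settle.

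To prove the Core Claim I would use the two features of the last LexBFS vertex of a chordal graph that are special here: $u$ is \emph{simplicial} (its neighbourhood is a clique, so each BFS layer $N^i(u)$ meets $N(u)$ in a clique), and $u$ lies in the \emph{deepest} BFS layer of the search, i.e. $dist(s,u)=e(s)$ for the start vertex $s$. The structural ingredient is the standard chordal fact that minimal separators are cliques: for each $i$ and each connected component $C$ of $G$ minus the vertices at distance $<i$ from $u$, the attachment set $N(C)\cap N^{i-1}(u)$ is a clique. I would argue by fellow-travelling geodesics: fix a shortest $x$–$y$ path $\pi$ and a shortest $u$–$x$ path, and descend layer by layer from $u$, using the clique-separator property to show that any vertex the two geodesics can ``share'' collapses to within a single clique, so that $\pi$ cannot dip closer than distance $D-1$ below a diametral endpoint without contradicting either the maximality of $dist(s,u)$ or the shortest-path condition on $\pi$. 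This is the technical heart of the metric part, but it is of a routine chordal flavour once the clique-separator and simpliciality properties are in hand.

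The parity claim is the genuine crux, and I expect it to be the main obstacle. At this point we know $x,y\in N^d(u)$ with $dist(x,y)=d+1$, so the Gromov product based at $u$ is $(x\mid y)_u=\tfrac12\big(dist(u,x)+dist(u,y)-dist(x,y)\big)=\tfrac{d-1}{2}$, an integer precisely when $d$ is odd. The plan is to show that the integer case is incompatible with $e(u)<D$: chordal graphs are $1$-hyperbolic and behave like trees up to an additive constant, so when $(x\mid y)_u$ is an integer the geodesic triangle $u,x,y$ admits a genuine ``median'' vertex $m$ with $dist(x,m)=dist(y,m)=\tfrac{d+1}{2}$ lying on a $u$–$x$ geodesic; pushing through $m$ toward the side opposite $u$ would then exhibit a vertex at distance $D$ from $u$, contradicting $e(u)<D$. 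Hence $(x\mid y)_u$ must be a half-integer and $d=e(u)$ is even. A complementary, more combinatorial route I would try in parallel is a direct bookkeeping on $\pi$: writing $p,q,r$ for the numbers of edges of $\pi$ on which $dist(u,\cdot)$ increases, decreases, or stays constant, one has $p=q$ and $2p+r=D$, so the parity of $D$ equals the parity of $r$, the number of ``flat'' edges (edges inside a single layer $N^i(u)$); the remaining task is to show, using the LexBFS tie-breaking together with chordality, that such flat edges occur an odd number of times, forcing $D=d+1$ odd and $e(u)$ even. Turning either of these parity heuristics into an airtight argument — in particular controlling the interaction of the median/flat-edge structure with the LexBFS order — is the step I anticipate being hardest.
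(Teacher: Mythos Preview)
This lemma is not proved in the paper at all; it is quoted from~\cite{CDDH+01} as an external result, so there is no in-paper argument to compare your proposal against. I can only comment on the proposal on its own merits.

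Your decomposition into a metric part and a parity part is sensible, and the reductions in your first paragraph are correct. However, both substantive arguments have real gaps. For the Core Claim, the sketch (``fellow-travelling geodesics \ldots\ collapses to within a single clique'') never becomes concrete; you do not say where the LexBFS ordering is used beyond simpliciality and the deepest-layer property, yet those two facts hold for vertices that are \emph{not} LexBFS-last and for which the lemma's conclusion fails (e.g.\ the pendant vertex attached to the middle of a $P_5$ is simplicial and the lemma's parity conclusion is false for it). The proof in~\cite{CDDH+01} relies on a finer LexBFS-specific structural lemma, not merely on clique separators plus simpliciality, so calling this step ``routine'' is optimistic.

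More seriously, the parity argument is incorrect as stated. One-hyperbolicity does \emph{not} imply that an integer Gromov product $(x\mid y)_u$ yields a genuine median vertex $m$ on a $u$--$x$ geodesic with $dist(x,m)=dist(y,m)=(d+1)/2$; that is a tree property, and in chordal graphs one only gets an approximate centre up to an additive constant. Even granting such an $m$, the step ``pushing through $m$ toward the side opposite $u$ would then exhibit a vertex at distance $D$ from $u$'' is unsupported: which vertex, and why is it at distance exactly $D$? Your alternative flat-edge count correctly reduces the parity of $D$ to the parity of the number of intra-layer edges of $\pi$, but then merely restates the goal (``show \ldots\ flat edges occur an odd number of times'') without advancing it. You yourself flag this as the hardest step, and indeed neither heuristic, as written, is a proof.
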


\begin{corollary}[~\cite{CDDH+01}]\label{cor:lexbfs-chordal}
If the vertex $u$ of a chordal graph $G$ last visited by a LexBFS has odd eccentricity, then $e(u) = diam(G)$.
\end{corollary}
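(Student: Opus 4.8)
The plan is to derive the corollary directly from Lemma~\ref{lem:lexbfs-chordal} by a one-line contrapositive argument, since all of the structural work has already been carried out in the lemma itself. First I would recall the trivial but essential fact that, for any vertex, its eccentricity is at most the diameter: $e(u) \leq diam(G)$. Consequently there are only two possibilities, namely $e(u) = diam(G)$ or $e(u) < diam(G)$, and it suffices to rule out the strict inequality under the hypothesis that $e(u)$ is odd.

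Next I would invoke Lemma~\ref{lem:lexbfs-chordal}. Fix any diametral pair $x,y$ with $dist(x,y) = diam(G)$; such a pair exists by definition of the diameter. The lemma asserts that whenever $e(u) < diam(G)$, the eccentricity $e(u)$ must be \emph{even} (in fact $e(u) = diam(G) - 1$). Reading this contrapositively, if $e(u)$ is odd then the hypothesis $e(u) < diam(G)$ is untenable; combining this with the universal bound $e(u) \leq diam(G)$ forces $e(u) = diam(G)$, which is exactly the claim.

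There is essentially no obstacle at this level: the entire difficulty is concentrated in Lemma~\ref{lem:lexbfs-chordal}, whose proof exploits the chordality of $G$ together with the special properties of the LexBFS end-vertex $u$. Once that parity statement is in hand, the corollary is immediate, and its only role is to repackage the ``even eccentricity'' conclusion of the lemma into the more directly usable form that an \emph{odd} LexBFS eccentricity certifies the diameter exactly — a form that the subsequent arguments (for instance, the analysis of $3$-sweep LexBFS in Theorem~\ref{thm:dicho}) can apply without re-deriving the parity dichotomy.
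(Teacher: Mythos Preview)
Your derivation is correct and is exactly the intended one: the paper does not prove Corollary~\ref{cor:lexbfs-chordal} at all, merely citing it from~\cite{CDDH+01}, and the corollary is indeed an immediate contrapositive of the parity clause in Lemma~\ref{lem:lexbfs-chordal}. There is nothing to add.
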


\begin{proof}[Proof of Theorem~\ref{thm:dicho}]
Let $u$ be last visited by a LexBFS, and let $v$ be last visited by $LexBFS(u)$.
Assume toward a contradiction $e(v) < diam(G)$.
Since $e(u) = dist(u,v) \leq e(v)$, and by Lemma~\ref{lem:lexbfs-chordal}, $e(u) \geq diam(G)-1$, we have $e(u) = e(v) = dist(u,v) = diam(G)-1$. Let $x,y$ be the two ends of a dominating diametral path $P$, that exists by Lemma~\ref{lem:dom-dp}. We pick $u^* \in V(P) \cap N[u]$ and $v^* \in V(P) \cap N[u]$, that exist because $P$ is dominating. W.l.o.g., $dist(x,u^*) \leq dist(x,v^*)$. Then:
\begin{align*}
diam(G) &= dist(x,y) = dist(x,u^*) + dist(u^*,v^*) + dist(v^*,y) \\
&\geq (dist(x,u)-1) + (dist(u,v)-2) + (dist(v,y) - 1) \\
&= (diam(G)-2) + (diam(G)-3) + (diam(G)-2) = 3 \cdot diam(G) - 7
\end{align*}
where the equalities of the last line follow from Lemma~\ref{lem:lexbfs-chordal}.
It implies $diam(G) \leq 3$. Therefore, if $e(v) \geq 3$, we cannot have $e(v) < diam(G)$.

\smallskip
\noindent
For the second part of the theorem, we essentially rely on a previous observation from~\cite{DHV19}. The bichromatic diameter problem consists in, given a graph and two vertex-subsets $A$ and $B$, to compute the maximum distance between any vertex of $A$ and any vertex of $B$. Under SETH, we cannot compute the bichromatic diameter of split graphs~\cite{BCH16}. Now, given a split graph $G$ with clique $K$ along with two subsets $A,B$ in its stable set, we can add in linear time fresh new vertices $a,b \notin V(G)$ s.t. $N[a] = A \cup K \cup \{a,b\}$ and $N[b] = B \cup K \cup \{a,b\}$. In doing so, we get a new split graph $G'$, for which solving the diameter is equivalent to compute the bichromatic diameter of $G$. Note that by construction, $ab$ is a dominating edge of $G'$. As a result, already for split graphs with a dominating edge, we cannot decide whether the diameter is either two or three.
\end{proof}

\subsection{Chordal dominating pair graphs}\label{sec:dompair}

We complete the results of Sec~\ref{sec:duality} by showing that the stronger property of having a dominating pair {\em for each induced subgraph} implies a truly subquadratic algorithm for diameter computation. 

\begin{theorem}\label{thm:dom-pair-chordal}
There is a truly subquadratic algorithm for computing the diameter of chordal dominating pair graphs.
\end{theorem}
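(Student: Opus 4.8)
**Proof proposal for Theorem~\ref{thm:dom-pair-chordal} (a truly subquadratic algorithm for the diameter of chordal dominating pair graphs).**

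The plan is to invoke the split-graph framework of Theorem~\ref{thm:framework}: since being a dominating pair graph is a hereditary property, it suffices to solve {\sc Split-OV} in time ${\cal O}(\ell^{b})$ for some $b<2$ on connected split graphs that are dominating pair graphs. So first I would isolate the structure of such split graphs. Let $G$ be a split graph with clique $K$ and stable set $S$, and suppose $G$ is a dominating pair graph. The key observation is that every induced subgraph must have a dominating pair; applying this to $G$ itself and to induced subgraphs obtained by deleting vertices of $K$, I expect to force a strong ordering on the neighbourhoods $N(s)$, $s\in S$. Concretely, a dominating pair in a split graph with $|S|\ge 3$ must consist of two stable-set vertices $s,s'$ such that every other $s''\in S$ has $N(s'')\cap(N(s)\cup N(s'))\ne\emptyset$ — i.e. $\{s,s'\}$ ``dominates'' $S$ through $K$. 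Iterating the hereditary condition on the family $\{N(s):s\in S\}$, I would try to prove that the inclusion-minimal neighbourhoods among the $N(s)$ can be \emph{linearly ordered} in a way compatible with a path-like (interval-like) structure — morally, that chordal dominating pair graphs are ``close to'' interval graphs on their split induced subgraphs.

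The main step is then: given $(G,S_A,S_B)$ with $G$ a connected split dominating pair graph, decide whether some $a\in S_A$, $b\in S_B$ have $N(a)\cap N(b)=\emptyset$. Exactly as in the proof of Proposition~\ref{prop:slit-an}, I first reduce $S_A$ to one representative per twin class (${\cal O}(\ell)$ time by partition refinement) and then to the vertices whose neighbourhood is inclusionwise minimal; by Lemma~\ref{lem:compute-min} these can be extracted in ${\cal O}(k\ell)$ time once we know there are at most $k$ of them. The whole point is therefore to bound the number of inclusion-minimal neighbourhoods, or more precisely to show that the relevant {\sc Split-OV} query can be answered by a near-linear sweep over a linearly ordered set of ``profiles''. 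Using the dominating-pair ordering, I would sort the minimal neighbourhoods $N(a_1),\dots,N(a_t)$ along the induced linear order, observe that the union $\bigcup_i N(a_i)$ behaves monotonically along it, and then for each $b\in S_B$ binary-search (or two-pointer sweep) for an $a_i$ with $N(a_i)\cap N(b)=\emptyset$: if $N(b)$ misses the ``leftmost'' active part of the order, disjointness with that extreme $a_i$ can be checked directly. Summing the work over all $b\in S_B$ gives ${\cal O}(\ell\log\ell)$ or ${\cal O}(\ell\,\alpha(\ell))$ total time, which is truly subquadratic; feeding $b=1$ into Theorem~\ref{thm:framework} yields the randomized ${\cal O}(m\log^{2}n)$-time diameter algorithm for the whole class ${\cal C}$ of chordal dominating pair graphs.

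The hard part will be the structural claim — extracting from ``every induced subgraph has a dominating pair'' a usable linear order on the split-graph neighbourhoods, and in particular showing this order is \emph{consistent} under the vertex deletions that the hereditary hypothesis ranges over (one cannot just fix a single dominating pair of $G$, since the pair may change after deletions). I anticipate needing a careful case analysis distinguishing whether a dominating pair uses zero, one, or two stable-set vertices, together with an argument that a ``crossing'' pair of incomparable minimal neighbourhoods, say $N(a)\setminus N(a')\ne\emptyset\ne N(a')\setminus N(a)$ with a third minimal $N(a'')$ straddling both, lets one delete enough of $K$ to destroy \emph{all} dominating pairs — contradicting heredity. Once that incompatibility is ruled out, the neighbourhoods of any antichain among the $N(s)$ form an interval-like system, the linear order falls out, and the sweep algorithm above goes through. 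Everything else (the twin reduction, the minimal-set extraction, plugging into the framework) is routine given Lemmas~\ref{lem:inclusion} and~\ref{lem:compute-min} and Theorem~\ref{thm:framework}.
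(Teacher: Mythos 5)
Your reduction to {\sc Split-OV} on split dominating pair graphs via Theorem~\ref{thm:framework} matches the paper's first step (the paper also invokes Theorem~\ref{thm:dicho} to isolate the diameter-two-versus-three case, but the framework is the real workhorse). The gap is in everything after that. Your plan rests on the structural claim that the hereditary dominating-pair condition forces the inclusion-minimal neighbourhoods $N(s)$, $s \in S$, to be few in number or to admit a linear, interval-like order that a two-pointer sweep can exploit. That claim is false. Take $K = \{1,\dots,n\}$ and stable vertices $s_1,\dots,s_n$ with $N(s_i) = K \setminus \{i\}$. This split graph is $B_1$-free (each $N(s_i)$ meets any three clique vertices in at least two of them, so no stable vertex can play the role of a pendant of the triangle, and the triangle of $B_1$ cannot use a stable vertex either), hence by Lemma~\ref{lem:h-free} it is a dominating pair graph --- indeed any two clique vertices form a dominating pair. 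Yet its $n$ neighbourhoods are pairwise incomparable, so by Lemma~\ref{lem:inclusion} it has asteroidal number $n$, there are $n$ inclusion-minimal neighbourhoods (so the bound $k$ you would feed into Lemma~\ref{lem:compute-min} is unbounded), and no arrangement of $K$ on a line turns the sets $K \setminus \{i\}$ into intervals. Split dominating pair graphs are therefore not ``close to interval graphs'' in any sense that makes your sweep go through, and the near-linear ${\cal O}(m\log^2 n)$ running time you announce should itself have been a warning sign: the theorem only claims a \emph{truly subquadratic} bound.

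The paper's actual route is different and your plan would not converge to it: it uses the characterization of Lemma~\ref{lem:h-free} (split dominating pair graphs are exactly the $B_1$-free split graphs, after~\cite{DeK02}), proves in Lemma~\ref{lem:vc-split} that every $B_1$-free split graph has VC-dimension at most $3$, and then solves {\sc Split-OV} in $\tilde{\cal O}(\ell n^{1-\varepsilon_d})$ time via the bounded-VC-dimension machinery of~\cite{DHV20} (Theorem~\ref{thm:split-ov-vc-dim}), which plugs into Theorem~\ref{thm:framework} because $\ell n^{1-\varepsilon_d} = {\cal O}(\ell^{2-\varepsilon_d})$. If you want to salvage your approach, the object to aim for is not a linear order on neighbourhoods but a shattering-type parameter: the forbidden pattern $B_1$ says precisely that no three clique vertices are each ``privately'' seen by a stable vertex, and that is what caps the VC-dimension and unlocks the known subquadratic algorithm.
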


Note that in contrast to this above Theorem~\ref{thm:dom-pair-chordal}, the dichotomy result of Theorem~\ref{thm:dicho} also applies to the chordal graphs with a dominating pair ({\it a.k.a.}, chordal weakly dominating pair graphs). This is because having a dominating shortest-path and a dominating edge are a weaker and a stronger property than having a dominating pair, respectively. 

\smallskip
\noindent
The remaining of this subsection is devoted to the proof of Theorem~\ref{thm:dom-pair-chordal}.
Note that according to Theorem~\ref{thm:dicho}, the only difficulty is in order to decide whether the diameter is either two or three.
If we further use Theorem~\ref{thm:framework} (at the price of having a randomized algorithm), then we are left solving {\sc Split-OV} for the split dominating pair graphs.
The following result is an easy corollary of the characterization proven in~\cite{DeK02}:

\begin{lemma}[~\cite{DeK02}]\label{lem:h-free}
A split graph is a dominating pair graph if and only if it is $B_1$-free, where $B_1$ is the graph of Fig~\ref{fig:b1}.
\end{lemma}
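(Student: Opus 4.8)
The plan is to obtain the statement as a specialization, to the class of split graphs, of the characterization of dominating pair graphs established in~\cite{DeK02}. Being a dominating pair graph is a hereditary property: a graph lies in the class precisely when each of its connected induced subgraphs has a dominating pair. The characterization of~\cite{DeK02} encodes this through a family $\mathcal{F}$ of minimal forbidden induced subgraphs, and the graph $B_1$ of Fig.~\ref{fig:b1} is one of the members of $\mathcal{F}$. Hence the whole argument reduces to verifying that $B_1$ is the \emph{only} member of $\mathcal{F}$ that occurs as an induced subgraph of a split graph.

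For this, I would use that split graphs are exactly the $\{2K_2, C_4, C_5\}$-free graphs; in particular they are chordal, so they contain no induced cycle $C_k$ with $k \geq 4$. This discards from $\mathcal{F}$ every obstruction that contains a long induced cycle (all holes, and any infinite subfamilies built around them). For each of the finitely many remaining obstructions $F \in \mathcal{F} \setminus \{B_1\}$, I would exhibit an induced $2K_2$, $C_4$ or $C_5$ inside $F$; this certifies that $F$, and hence any graph having $F$ as an induced subgraph, is non-split. Once this mechanical check is carried out, both directions follow immediately: since $B_1 \in \mathcal{F}$ it has no dominating pair, so a split graph with an induced $B_1$ is not a dominating pair graph; conversely, if $G$ is a $B_1$-free split graph, then every connected induced subgraph $H$ of $G$ is again a $B_1$-free split graph, hence contains no member of $\mathcal{F}$ at all, so by~\cite{DeK02} the graph $H$ has a dominating pair.

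If one prefers a self-contained proof, I would argue directly on a connected split graph $G$ with maximal clique $K$ and stable set $S$. Two observations drive the argument: first, $N[u] \cup N[v] = V$ implies that $(u,v)$ is a dominating pair; second, for $x \in K$ the graph $G \setminus N[x]$ is edgeless, so removing $N[x]$ separates any two distinct vertices, and only the vertices $x \in S$ can prevent a candidate pair from being dominating. Consequently, if the neighbourhoods $N(s)$, $s \in S$, admit at most two inclusionwise-minimal elements, then the two corresponding stable vertices form a dominating pair; whereas if they admit at least three, then $B_1$-freeness forces two vertices of $K$ to jointly dominate $S$, and these again form a dominating pair. This last implication rests on the purely hypergraph-theoretic fact --- which I expect to be the main obstacle --- that an antichain of sets whose transversal number is at least three always contains three sets $M_1, M_2, M_3$ together with three distinct elements $x_i \in M_i \setminus (M_j \cup M_l)$; in the split-graph translation this configuration is exactly an induced $B_1$ (a triangle with a pendant vertex attached to each of its three vertices). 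I would prove the hypergraph fact by passing to a minimal sub-antichain $\mathcal{M}'$ of transversal number $\geq 3$: for any $M_1 \in \mathcal{M}'$ there is a two-element transversal $\{a,b\}$ of $\mathcal{M}' \setminus \{M_1\}$ disjoint from $M_1$; one then finds $M_2$ containing $a$ but not $b$ and $M_3$ containing $b$ but not $a$, so that $a$ and $b$ are private elements of $M_2$ and $M_3$, and shows $M_1$ must also have a private element, since otherwise some pair $\{b,y\}$ would be a two-element transversal of all of $\mathcal{M}'$, contradicting its minimality. The reverse direction of the lemma --- that $B_1$ itself has no dominating pair --- is then a one-line finite check.
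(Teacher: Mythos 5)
The paper offers no proof of this lemma: it is imported from~\cite{DeK02} with the remark that it is ``an easy corollary of the characterization proven'' there. Your first route (specialize the minimal forbidden induced subgraph family of~\cite{DeK02} to split graphs and check that $B_1$, the net, is the only obstruction compatible with $\{2K_2,C_4,C_5\}$-freeness) is therefore exactly the paper's implicit derivation. Your second, self-contained route is genuinely different and, after checking, it works; it is arguably more informative since it does not presuppose knowledge of the full obstruction list. Its skeleton is right: for $x\in K$ every path between two distinct vertices meets $N[x]\supseteq K$, so only the neighbourhoods of stable vertices matter; a transversal $\{a,b\}\subseteq K$ of $\{N(s)\mid s\in S\}$ gives $N[a]\cup N[b]=V$ and hence a dominating pair; and your hypergraph fact (an antichain of transversal number at least three contains three sets each with a private element) is true and translates exactly into an induced $B_1$. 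The only step that is thinner than it should be is the last one of that fact. Having found a $2$-transversal $\{a,b\}$ of $\mathcal{M}'\setminus\{M_1\}$ disjoint from $M_1$, with $a$ private to some $M_2$ and $b$ private to some $M_3$, you assert that if $M_1$ has no private element then some $\{b,y\}$ is a $2$-transversal. To actually exhibit $y$ you must use that $M_1\subseteq M_2\cup M_3$ holds for \emph{every} admissible choice of $M_2$ (containing $a$ but not $b$) and $M_3$ (containing $b$ but not $a$) --- otherwise some choice already yields the private triple. Writing $\mathcal{A}$ and $\mathcal{B}$ for these two subfamilies, this gives $M_1\subseteq\left(\bigcap\mathcal{A}\right)\cup\left(\bigcap\mathcal{B}\right)$; if both $M_1\cap\bigcap\mathcal{A}$ and $M_1\cap\bigcap\mathcal{B}$ are nonempty, any $p$ in the former yields the $2$-transversal $\{p,b\}$ of all of $\mathcal{M}'$ (which contradicts $\tau(\mathcal{M}')\geq 3$, not minimality --- minimality is only what guarantees the $2$-transversal of $\mathcal{M}'\setminus\{M_1\}$ in the first place), while if one of them is empty then $M_1$ is contained in every member of $\mathcal{A}$ (or of $\mathcal{B}$), contradicting the antichain property. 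With this filled in, both directions follow as you say, since being a $B_1$-free split graph is hereditary and $B_1$ itself visibly has no dominating pair.
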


\begin{figure}[!h]
\begin{center}
\begin{tikzpicture}
\draw node[scale=.5,fill,circle] at (-1,0) {};
\draw node[scale=.5,fill,circle] at (1,0) {};
\draw node[scale=.5,fill,circle] at (0,1) {};
\draw node[scale=.5,fill,circle] at (-2,0) {};
\draw node[scale=.5,fill,circle] at (2,0) {};
\draw node[scale=.5,fill,circle] at (0,2) {};
\draw[thick] (-1,0) -- (0,1) -- (1,0) -- (-1,0) -- (-2,0);
\draw[thick] (1,0) -- (2,0); 
\draw[thick] (0,1) -- (0,2); 
\end{tikzpicture}
\end{center}
\caption{Forbidden induced subgraph $B_1$.}
\label{fig:b1}
\end{figure}
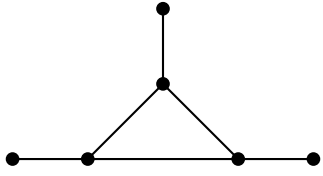

Our main technical contribution in this subsection is as follows:

\begin{lemma}\label{lem:vc-split}
Every $B_1$-free split graph has VC-dimension at most $3$.
\end{lemma}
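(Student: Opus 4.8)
\textbf{Proof plan for Lemma~\ref{lem:vc-split}.}
The plan is to argue by contradiction: suppose a $B_1$-free split graph $G$ with clique $K$ and stable set $S$ shatters some set $X$ of $4$ vertices. First I would clean up the setup. If $X$ contains a vertex of $S$, then since $S$ is independent, that vertex has no neighbour inside $X\setminus$(itself), which severely limits which traces can be realized; a short case check shows at most one stable vertex can ever be shattered together with others, so I may assume $X=\{x_1,x_2,x_3,x_4\}\subseteq K$. (Alternatively one phrases the whole argument on $X\subseteq K$ from the start, and handles a possible stable vertex in $X$ as a degenerate, easier sub-case.) Shattering means that for every $Y\subseteq X$ there is a vertex $v_Y$ with $N[v_Y]\cap X=Y$. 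Since $X\subseteq K$ and $K$ is a clique, a clique vertex always sees all of $X$; hence every $v_Y$ with $Y\neq X$ must lie in $S$, and $v_Y$ is then a stable vertex whose neighbourhood meets $X$ in exactly $Y$.

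The heart of the argument is to use the four ``small'' traces to plant a copy of $B_1$. Look at $B_1$ (Fig.~\ref{fig:b1}): it is a triangle $\{p,q,r\}$ with a pendant vertex attached to each of the three triangle vertices. I would take the triangle to be three of the shattered clique vertices, say $\{x_1,x_2,x_3\}\subseteq K$ — they form a triangle automatically. For the three pendant vertices I need, for each $i\in\{1,2,3\}$, a stable vertex adjacent to $x_i$ but to none of the other two; that is, a vertex $v$ with $N[v]\cap\{x_1,x_2,x_3\}=\{x_i\}$. Such vertices are furnished by shattering: take $v_{\{x_i\}}$ (more precisely the realizer of the trace $Y=\{x_i\}$ on the full set $X$; its intersection with $\{x_1,x_2,x_3\}$ is exactly $\{x_i\}$, and it is stable since its trace on $X$ is a proper subset). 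These three realizers are pairwise distinct (different traces on $X$) and each is non-adjacent to the other two $x_j$'s; they are non-adjacent to each other since they all lie in the stable set $S$; and they are non-adjacent to the ``correct'' $x_j$ as well. So $\{x_1,x_2,x_3\}$ together with these three stable vertices induces exactly $B_1$, contradicting $B_1$-freeness.

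The main obstacle I anticipate is bookkeeping about \emph{where} the realizers live and making sure the induced subgraph is \emph{exactly} $B_1$ and not something with extra edges: I must check that each pendant vertex is adjacent to its own $x_i$ and to nothing else among the six chosen vertices. Adjacency to the other two triangle vertices is forbidden by the trace condition; non-adjacency among the three pendant vertices is forced because $S$ is independent — this is exactly where split-ness is used. The one genuine subtlety is the reduction ``$X\subseteq K$'': a shattered set could a priori contain a stable vertex, and one must verify this cannot happen when $|X|=4$ (a stable vertex $s\in X$ forces $N[s]\cap X\ni s$ always, and $s$ is adjacent only to clique vertices, so the realizable traces through $s$ are constrained; pushing this through rules out $|X\cap S|\geq 1$ with $|X|=4$, or at worst reduces to the $X\subseteq K$ case on a $3$-subset, which still gives the $B_1$). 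Once that reduction is in hand the rest is the short combinatorial identification above. I would therefore structure the write-up as: (i) reduce to $X\subseteq K$ of size $4$; (ii) observe all proper-trace realizers are stable; (iii) extract the three pendant vertices and exhibit $B_1$; (iv) conclude.
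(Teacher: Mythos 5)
There is a genuine gap in your step (i), the reduction to $X\subseteq K$. Your justification --- ``a short case check shows at most one stable vertex can ever be shattered together with others'' --- is false for split graphs in general: a $4$-subset of the stable set $S$ can perfectly well be shattered (take $S=\{s_1,\dots,s_4\}$ and put in $K$, for every $Y\subseteq\{1,\dots,4\}$ with $|Y|\geq 2$, a clique vertex adjacent exactly to $\{s_i\mid i\in Y\}$; the singleton and empty traces are realized by the $s_i$ themselves and by a clique vertex with no stable neighbour). So you cannot rule out $X\cap S\neq\emptyset$, and your fallback ``at worst reduces to the $X\subseteq K$ case on a $3$-subset'' is not established either: the clique realizers produced by a shattered subset of $S$ are not themselves shattered in any obvious way, so no shattered $3$-subset of $K$ is handed to you. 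What is actually true (and what the paper proves, using the realizers of the traces $\{u,v\}$ and $\{v,w\}$ for $u\in X\cap K$, $v,w\in X\cap S$) is the dichotomy: a shattered set with $|X|\geq 3$ satisfies either $X\subseteq K$ or $X\subseteq S$. The case $X\subseteq S$ then needs its own, genuinely different planting of $B_1$: fix $r\in X$ and three other vertices $u,v,w\in X$, take the realizers $z_u,z_v,z_w\in K$ of the pair-traces $\{u,r\},\{v,r\},\{w,r\}$ (they must be clique vertices since they are adjacent to two stable vertices); then $z_u,z_v,z_w$ form the triangle and $u,v,w$ are the pendants. Note this is the only place where $|X|\geq 4$ is really used --- your $X\subseteq K$ argument, which matches the paper's and is correct, already works with $|X|=3$.

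So: your step (ii)--(iii) (all proper-trace realizers of a shattered clique subset are stable, and three singleton realizers plus the triangle give $B_1$) is exactly the paper's argument for the clique case and is fine. The missing content is the correct dichotomy lemma replacing your step (i), together with the symmetric $B_1$-construction for the all-stable case.
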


\begin{proof}
Let $G = (K \cup S, E)$ be a split graph with clique $K$ and stable set $S$. 
Suppose for the sake of contradiction $G$ is $B_1$-free and there exists a $X \subseteq K \cup S, \ |X| \geq 4$, that is shattered. 

We first prove that either $X \subseteq K$ or $X \subseteq S$.
Indeed, by contradiction let $u \in X \cap K, \ v \in X \cap S$.
Since $X$ is shattered, there exists a $z$ s.t.  $N[z] \cap X = \{u,v\}$.
Furthermore, since $G$ is a split graph and $v \in S$, either $z = v$ or $z \in K$.
But if $z = v$, then, for any $z' \in N[z]$ we have $u \in N[z']$, and so there can be no such $z'$ s.t. $N[z'] \cap X = \{v\}$.
Thus, necessarily, $z \in K$. It implies $X \setminus \{u,v\} \subseteq S$ (otherwise, since $K \cap X \subseteq N[z]$, we could not have $N[z] \cap X = \{u,v\}$, a contradiction).
However, let $w \in X \setminus \{u,v\}$.
Again, since $X$ is shattered, there exists a $z'$ s.t. $N[z'] \cap X = \{v,w\}$.
But necessarily, $z' \in K$, and therefore we also have $u \in N[z']$, a contradiction.
The latter proves, as claimed, either $X \subseteq K$ or $X \subseteq S$.
Note that the above still applies if $|X| \geq 3$, and that we did not use in our proof the fact that $G$ is $B_1$-free.

If $X \subseteq K$ then, let $u,v,w \in X$. There exist $z_u,z_v,z_w$ s.t. $N[z_u] \cap X = \{u\}, \ N[z_v] \cap X = \{v\}, \ N[z_w] \cap X = \{w\}$. Necessarily, $z_u,z_v,z_w \in S$. But then, $u,v,w,z_u,z_v,z_w$ induce a copy of $B_1$, that is a contradiction. 
Conversely, if $X \subseteq S$ then, let $u,v,w,r \in X$. There exist $z_u,z_v,z_w$ s.t. $N[z_u] \cap X = \{u,r\}, \ N[z_v] \cap X = \{v,r\}, \ N[z_w] \cap X = \{w,r\}$. Necessarily, $z_u,z_v,z_w \in K$. But then, $u,v,w,z_u,z_v,z_w$ induce a copy of $B_1$, that is a contradiction.  
\end{proof}

This above Lemma~\ref{lem:vc-split} gives us the opportunity to use powerful techniques from previous work in order to solve {\sc Split-OV}. 

\begin{theorem}[special case of Theorem $1$ in~\cite{DHV20}]\label{thm:split-ov-vc-dim}
For every $d > 0$, there exists a constant $\varepsilon_d \in (0,1)$ s.t. for any $(G,S_A,S_B)$ where $G$ has VC-dimension at most $d$, we can solve {\sc Split-OV} in deterministic $\tilde{\cal O}(\ell n^{1-\varepsilon_d})$ time.
\end{theorem}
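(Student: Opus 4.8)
The plan is to first strip away the split-graph wrapping and reduce \textsc{Split-OV} to a clean disjointness problem over the clique, and then to solve that problem with range-searching machinery tailored to bounded VC-dimension. For distinct stable vertices $a \in S_A$ and $b \in S_B$ we have $a \notin N_G[b]$ and $b \notin N_G[a]$, so $N_G[a] \cap N_G[b] = N_G(a) \cap N_G(b) \subseteq K$. Hence \textsc{Split-OV} is exactly the question of whether the two families $\{N_G(a) : a \in S_A\}$ and $\{N_G(b) : b \in S_B\}$ of subsets of the ground set $K$ contain a disjoint pair. Since these are traces of the neighbourhood hypergraph ${\cal N}(G)$ onto $K$, and taking traces cannot increase the VC-dimension, the range space $(K, \{N_G(s) : s \in S\})$ has VC-dimension at most $d$.

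The engine I would use is the fact that range spaces of bounded VC-dimension admit very economical divide-and-conquer structures. Concretely, by Haussler's packing lemma and the Sauer--Shelah lemma the dual shatter function is polynomial of degree $d$, and therefore (by the partition theorem of Matou\v{s}ek, or equivalently a spanning path of low crossing number in the sense of Chazelle--Welzl) the ground set $K$ can be split into $r$ blocks of nearly equal size so that every set $N_G(s)$ is \emph{full} or \emph{empty} on all but $O(r^{1-1/d})$ of the blocks, on which it is \emph{crossing}. This is precisely the structural dividend of bounded VC-dimension that will furnish an exponent $\varepsilon_d = \Theta(1/d)$.

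Given such a partition, I would test for a disjoint pair block by block. Within a single block two sets cannot be disjoint there if both are full; a pair survives only if, on each block, at least one of the two sets is empty or crossing. The full/empty status of each set on each block is a short signature that can be precomputed and aggregated, leaving only the $O(r^{1-1/d})$ crossing blocks per set to be examined explicitly; those contribute only $\tilde{\cal O}(\ell \cdot r^{1-1/d})$ work in total over all incidences. Recursing inside the crossing blocks, whose total size is a constant fraction smaller, and choosing $r$ to balance the recurrence, yields a running time of the claimed form $\tilde{\cal O}(\ell\, n^{1 - \varepsilon_d})$ with $\varepsilon_d$ depending only on $d$.

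The main obstacle, and the step that genuinely requires bounded VC-dimension rather than mere bookkeeping, is controlling the \emph{union} structure that disjointness queries implicitly create: the set of $b$'s that a fixed $a$ intersects is a union of $|a|$ primitive ranges, whose combinatorial complexity is a priori large. The crossing-number bound is exactly what keeps this union tame, so the delicate points are (i) constructing the low-crossing partition within a near-linear per-level budget rather than the naive $\Omega(n^2)$, (ii) verifying that the recursive subproblems shrink geometrically so the recurrence closes at a subquadratic exponent, and (iii) carrying the whole analysis in terms of the input-size measure $\ell = \sum_{s} |N_G[s]|$ rather than $n^2$. Assembling these three ingredients gives the result; this is the content of Theorem~$1$ of~\cite{DHV20}, of which the present statement is the split-graph special case.
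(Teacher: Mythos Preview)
Your plan is a reasonable high-level outline of how the result in~\cite{DHV20} is obtained, but you should be aware that the present paper does \emph{not} prove this theorem at all: it is quoted as a black box from~\cite{DHV20}. The only argument the paper supplies is in the remarks immediately following the statement, namely that Theorem~1 of~\cite{DHV20} treats a \emph{monochromatic} disjoint-sets problem, and that one reduces the bichromatic \textsc{Split-OV} to it via a trick from~\cite{DuD19+} at the cost of inflating the VC-dimension from $d$ to $O(d\log d)$. Your write-up stays bichromatic throughout (two families $\{N_G(a)\}_{a\in S_A}$ and $\{N_G(b)\}_{b\in S_B}$ over $K$) and never addresses this reduction, so strictly speaking you are sketching a direct proof of a slightly stronger statement than what~\cite{DHV20} provides off the shelf.

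As a sketch of the underlying machinery your plan is in the right spirit---low-crossing partitions \`a la Matou\v{s}ek/Chazelle--Welzl are indeed the engine behind the $n^{1-\varepsilon_d}$ saving---but several of your steps are only gestures. In particular, the ``full/empty signature plus recurse on crossing blocks'' description does not by itself yield a disjointness test: two sets can be disjoint even if on every block at least one of them is full-or-crossing, so the block-level signatures alone do not decide anything, and you still need to explain how the aggregation across blocks is done in subquadratic time. This is exactly where~\cite{DHV20} does nontrivial work (via $\varepsilon$-nets/spanning paths and a careful recursion), and your items (i)--(iii) acknowledge but do not discharge it. For the purposes of this paper, the intended ``proof'' is simply the citation plus the bichromatic-to-monochromatic reduction; if you want to replace that by a self-contained argument, the gap to close is the aggregation step, not the partition construction.
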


Some additional remarks are needed.
First of all, Theorem $1$ in~\cite{DHV20} addresses a monochromatic variant of {\sc Disjoint Sets}, that is slightly different than OV. A simple trick (presented in the proof of Theorem $10$ in~\cite{DuD19+}) allows us to reduce {\sc Split-OV} to this monochromatic variant, up to increasing the VC-dimension from $d$ to some value in ${\cal O}(d\log{d})$. Second, the time complexity for Theorem~\ref{thm:split-ov-vc-dim} is in $\tilde{\cal O}(\ell n^{1-\varepsilon_d})$, and not in ${\cal O}(\ell^b)$ for some $b \geq 1$ as it was stated in Theorem~\ref{thm:framework}. However, since $\ell n^{1-\varepsilon_d} = {\cal O}(\ell^{2-\varepsilon_d})$, Theorem~\ref{thm:framework} can still be applied in this case.\qed 

\section{Dominating triples}

A {\em dominating target} for a graph $G$ is a subset of vertices $D$ s.t. every connected graph containing all of $D$ is a dominating set. In particular, dominating pairs are exactly dominating targets of cardinality two. An interesting generalization of Lemma~\ref{lem:dominating} is that every graph of asteroidal number at most $k$ contains a dominating target of cardinality at most $k$~\cite{KKM01}. In this last section, we study chordal graphs with a dominating triple (dominating set of cardinality three).

\begin{theorem}\label{thm:dom-triple}
For every chordal graph with a dominating triple, if the algorithm $3$-sweep LexBFS outputs a vertex of eccentricity $d \geq 10$, then it is the diameter. In particular, there is a linear-time algorithm for deciding whether the diameter is at least $10$ on this graph class.
\end{theorem}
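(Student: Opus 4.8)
The plan is to mimic the proof of Theorem~\ref{thm:dicho}, replacing the dominating diametral path by a dominating \emph{diametral triple-path structure}, and then to do a more careful triangle-inequality accounting that accommodates one extra ``free'' vertex. Concretely, let $u$ be the vertex last visited by a first LexBFS, let $v$ be last visited by $LexBFS(u)$, and suppose toward a contradiction that $e(v) < diam(G)$. By Lemma~\ref{lem:lexbfs-chordal} applied to a diametral pair $x,y$, we already know $e(u)=e(v)=dist(u,v)=diam(G)-1$, that $diam(G)$ is even, and that $dist(u,x)=dist(u,y)=dist(v,x)=dist(v,y)=diam(G)-1$. So $u$ and $v$ behave like ``approximate diametral endpoints.'' Now I would invoke the dominating triple: there is a set $D=\{p,q,r\}$ of three vertices that is a dominating target, so in particular $D$ itself is a dominating set and, better, any connected subgraph (e.g.\ a Steiner tree / pair of shortest paths) joining $p,q,r$ is dominating. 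The first step is to fix a convenient connected dominating structure $T$ spanned by $D$ — a shortest-path tree on three leaves, i.e.\ a ``spider'' with center $c$ and three legs to $p,q,r$ — and record that its total length is at most, say, $3\,diam(G)$ (each leg has length $\le diam(G)$, and we can do better since $c$ lies on shortest paths).

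The key step is to locate $u,v,x,y$ against $T$. Since $T$ is dominating, each of $u,v,x,y$ has a neighbour (or is itself) on $T$; call these projections $u^*,v^*,x^*,y^*\in V(T)$. Because $T$ is a spider, the $T$-distance between any two projections decomposes along at most two legs through the center $c$. I would then write the chain of inequalities
\begin{align*}
diam(G) = dist(x,y) &\le dist(x,x^*)+dist_T(x^*,y^*)+dist(y^*,y) \\
&\le 1 + dist_T(x^*,y^*) + 1,
\end{align*}
and similarly bound $dist(u,v)=diam(G)-1$, $dist(x,u)=diam(G)-1$, etc., from above by $2$ plus the relevant $T$-distances. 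Summing a well-chosen collection of these four-point estimates (over the pairs among $\{u,v,x,y\}$, which pairwise all realize distance $\ge diam(G)-1$), the left-hand side grows like a constant times $diam(G)$, while the right-hand side is bounded by a constant (the $+2$'s) plus a constant times the \emph{total length} of $T$, and crucially each leg of $T$ gets ``charged'' only a bounded number of times because a spider has only three legs. Rearranging yields $diam(G) \le c_0$ for an explicit constant $c_0$; the claim of the theorem is that the bookkeeping can be made to give $c_0 = 9$, so that $e(v)\ge 10$ forces $e(v)=diam(G)$. The final paragraph is the routine conclusion: run LexBFS three times, take the eccentricity of the last vertex, and if it is $\ge 10$ output the corresponding diametral pair, all in linear time.

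The main obstacle is the constant. Unlike the path case of Theorem~\ref{thm:dicho} — where the dominating structure is a single path and each internal vertex is traversed once, giving the clean ``$3\,diam(G)-7 \le diam(G)$'' telescoping — here the spider has a branch vertex $c$, and a careless summation double-counts the legs and wastes the $+1$/$+2$ slack at the projections, which is exactly why the threshold degrades from $4$ to $10$. Getting to $10$ (and not, say, $13$) requires (i) choosing the spider $T$ so that its legs are shortest paths and its total length is as small as possible — ideally using that $c$ can be taken on a shortest path between two of $p,q,r$, saving one leg's worth — and (ii) choosing \emph{which} four-point inequalities to add so that each leg is charged at most twice and the projection slack is shared. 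A secondary subtlety, as in Theorem~\ref{thm:dicho}, is that chordality enters only through Lemma~\ref{lem:lexbfs-chordal} (to pin $e(u)$ within one of $diam(G)$ and to force the parity), so I must make sure the structural argument about $T$ uses only the dominating-triple hypothesis and generic metric facts, with chordality quarantined to the initial paragraph. I expect that with the right spider and the right four inequalities the arithmetic closes at $diam(G)\le 9$, matching the stated bound, and the paper leaves open whether $10$ can be pushed down.
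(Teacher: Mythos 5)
Your high-level instinct---replace the dominating diametral path by a connected dominating structure spanned by the triple, project the far-apart vertices onto it, and close with triangle inequalities---matches the paper's, but the central step of your plan does not close, and the idea that would make it close is missing. If three vertices pairwise at distance $\geq diam(G)-1$ project onto three \emph{different} legs of a spider, at distances $a,b,c$ from the centre, then the sum of the pairwise tree-distances is exactly $2(a+b+c)$, which is bounded only by twice the total length of the spider; even for the best choice of centre this is on the order of $3\,diam(G)$ to $4\,diam(G)$, so comparing it with $3(diam(G)-1)$ minus projection slack yields a vacuous inequality no matter which collection of four-point estimates you add. The arithmetic only telescopes when all three far-apart vertices project, within bounded error, onto a \emph{single shortest path} of length at most $diam(G)$: then the middle projection forces $2\,diam(G)-O(1)\leq diam(G)$. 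The paper's proof is organised entirely around establishing exactly that. It takes a quasi-median $x^*,y^*,z^*$ of the dominating triple (not a single-centre spider---in general no one vertex lies on shortest paths between all three pairs), proves via Corollary~\ref{cor:2-eq} that this metric triangle is equilateral of size at most two in a chordal graph, and uses that smallness to show (i) that $u$ and $v$ lie within total distance $3$ of a \emph{common} side $P(\alpha,\beta)$, and (ii) that at least one diametral endpoint lies within distance $2$ of that same side (otherwise both of $s^*,t^*$ sit on the third leg $P(\gamma,\gamma^*)$ within distance $4$ of $\gamma$, already forcing $diam(G)\leq 10$). Only after this reduction does the three-points-on-one-geodesic computation give $diam(G)\leq 9$.

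This also invalidates your plan to ``quarantine'' chordality to the initial LexBFS paragraph. The paper needs chordality a second time, through $1$-slimness of geodesic triangles (Lemma~\ref{lem:slim-chordal}) and the resulting bound on metric triangles (Corollary~\ref{cor:2-eq}), precisely to make the hub of the dominating structure metrically small. Without it, the ``centre'' of your spider is really a metric triangle of unbounded size, the legs cannot be concatenated into genuine shortest paths between pairs of the triple, and the final telescoping---which needs each side $P(\alpha,\beta)$ to be a shortest path of length at most $diam(G)$---has nothing to stand on.
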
 

We use in our proof a few results from Metric Graph Theory, that we now introduce.
A {\em geodesic triangle} with corners $x,y,z \in V$, denoted in what follows by $\Delta(x,y,z)$, is the union $P(x,y) \cup P(y,z) \cup P(z,x)$ of three shortest-paths connecting its corners. The three of $P(x,y),P(y,z) ,P(z,x)$ are called the sides of the triangle. We say that $\Delta(x,y,z)$ is $\delta$-slim if the maximum distance between any vertex on one side $P(x,y)$ and the other two sides $P(y,z) \cup P(z,x)$ is at most $\delta$. A graph is called $\delta$-slim if all its geodesic triangles are.

\begin{lemma}[\cite{MoD19}]\label{lem:slim-chordal}
Every chordal graph is $1$-slim.
\end{lemma}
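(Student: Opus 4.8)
The plan is to argue by contradiction, producing a chordless cycle of length at least four and invoking the defining property of chordal graphs (no induced cycle of length $\geq 4$). Fix a geodesic triangle $\Delta(x,y,z)$ with geodesic sides $P = P(x,y)$, $Q = P(y,z)$, $R = P(z,x)$, and a vertex $w$ on $P$; suppose for contradiction that $dist(w, Q \cup R) \geq 2$. I would use two structural facts throughout: each side, being a shortest path, is an \emph{induced} path; and $Q \cup R$ is connected, since $Q$ and $R$ share the corner $z$. Note also that $w \notin \{x,y\}$ (as $x \in R$ and $y \in Q$) and that $w$ has no neighbour in $Q \cup R$.

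First I would localize the configuration around $w$. Travelling along $P$ from $w$ towards $x$, let $p$ be the closest vertex to $w$ that either lies in $Q \cup R$ or has a neighbour there; such a $p$ exists because $x \in R$. Symmetrically, travelling from $w$ towards $y$, let $p'$ be the closest such vertex, which exists because $y \in Q$. By the distance hypothesis $p \neq w \neq p'$, so $w$ is an internal vertex of the subpath $P[p,p']$, and by minimality every internal vertex of $P[p,p']$ — in particular $w$ — is at distance at least $2$ from $Q \cup R$. Moreover $p$ and $p'$, being separated by $w$ on the geodesic $P$, are at distance at least $2$ and hence non-adjacent. Finally, choose $a \in Q \cup R$ equal to $p$ or a neighbour of $p$, and $b \in Q \cup R$ equal to $p'$ or a neighbour of $p'$.

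Next I would close up a cycle and make it induced. Since $Q \cup R$ is connected, pick a chordless path $\Pi$ from $a$ to $b$ inside $Q \cup R$; concatenating $\Pi$ with the connecting edges (or identifications) at $p, p'$ and with the subpath $P[p,p']$ yields a closed walk $C$ through $w$. By construction the only edges among the vertices of $C$, besides the path edges of $P[p,p']$, the path edges of $\Pi$, and the two connecting edges, are possible chords between $\{p,p'\}$ and the vertices of $\Pi$: no internal vertex of $P[p,p']$ reaches $Q \cup R$ (distance $\geq 2$), $P[p,p']$ is chordless (geodesic), $\Pi$ is chordless by choice, and $p,p'$ are non-adjacent. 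By re-selecting $a$ (resp.\ $b$) as the vertex of $\Pi \cap N[p]$ (resp.\ $\Pi \cap N[p']$) that is closest along $\Pi$ to the opposite endpoint and then trimming $\Pi$, one removes these remaining chords. The result is an induced cycle $C'$ that still contains $p, w, p'$ together with at least one vertex of $Q \cup R$, hence has length at least four, contradicting the fact that $G$ is chordal.

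The main obstacle is the chord-elimination step: one must check rigorously that the ``closest-along-$\Pi$'' choices of $a$ and $b$ kill every chord incident to $p$ or $p'$ without collapsing $C'$ below length four, while handling the degenerate configurations — when $p$ or $p'$ already lies on $Q \cup R$, when $a = b$, and when $Q$ and $R$ overlap beyond the corner $z$. The saving grace is that the distance-$2$ separation of $w$ from $Q \cup R$ keeps $w$ and its two neighbours on the induced cycle no matter how the chords are resolved, so all the delicate bookkeeping is confined to the two vertices $p, p'$ nearest the opposite sides; once it is discharged, the forbidden induced cycle of length at least four emerges and the proof closes.
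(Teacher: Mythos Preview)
The paper does not supply a proof of this lemma; it is simply cited from \cite{MoD19}. So there is no in-paper argument to compare against, and your task was effectively to furnish a proof from scratch.

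Your strategy is sound and is the natural route to $1$-slimness from chordality. The structural claims you rely on (each side is induced, $Q\cup R$ is connected through $z$, $p\ne p'$ are non-adjacent because $w$ lies strictly between them on a geodesic, every interior vertex of $P[p,p']$ is at distance $\geq 2$ from $Q\cup R$) are all correct, and closing a cycle through an induced path $\Pi$ inside $Q\cup R$ is the right move.

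The only real soft spot is exactly the one you flag. Re-selecting $a$ and $b$ \emph{simultaneously} as the $\Pi$-neighbours of $p$ and $p'$ closest to the opposite endpoint can make the two choices cross (the last $\Pi$-neighbour of $p$ may lie beyond the first $\Pi$-neighbour of $p'$), and then the trimmed $\Pi$ is ill-defined. The clean fix is to trim \emph{sequentially}: first move $a$ to the last vertex of $\Pi$ lying in $N[p]$ and shorten $\Pi$ accordingly; then, on that shortened path, move $b$ to its first vertex lying in $N[p']$. After this one checks directly that $p\neq a$, $p'\neq b$, that $p$ has no neighbour on the trimmed $\Pi$ beyond $a$ and $p'$ none before $b$, and hence that the cycle $P[p,p']\cup\{pa,bp'\}\cup\Pi$ is induced. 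Since it contains $p,w,p'$ together with at least one vertex of $Q\cup R$, its length is at least four, contradicting chordality. This sequential trimming also absorbs the degenerate cases you list ($p\in Q\cup R$, $a=b$, $Q$ and $R$ overlapping beyond $z$). With that adjustment the argument is complete.
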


Three vertices $x,y,z$ form a {\em metric triangle} if, for any choice of shortest-paths $P(x,y), P(y,z), P(z,x)$ connecting them, the latter can only pairwise intersect at their endpoints. If furthermore, $dist(x,y) = dist(y,z) = dist(z,x) = k$, then we say of this metric triangle that it is equilateral of size $k$. In what follows, recall that the meshed graphs are a superclass of chordal graphs.

\begin{lemma}[\cite{BaC02}]\label{lem:equilateral}
Every metric triangle in a meshed graph $G$ is equilateral.
\end{lemma}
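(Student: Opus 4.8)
The plan is to prove Lemma~\ref{lem:equilateral} directly from the defining axiom of meshed graphs, by a minimal-counterexample (descent) argument. Recall that $G$ is \emph{meshed} if for every vertex $u$ and every pair $v,w$ with $dist(v,w)\le 2$ that admits a common neighbour $s$ which is a local maximum seen from $u$, i.e.\ $2\,dist(u,s) > dist(u,v)+dist(u,w)$, there is a common neighbour $t$ of $v$ and $w$ with $2\,dist(u,t)\le dist(u,v)+dist(u,w)$. For $a,b\in V$ write $I(a,b)=\{w\mid dist(a,w)+dist(w,b)=dist(a,b)\}$ for the interval (the vertices lying on some shortest $ab$-path); saying that $x,y,z$ is a metric triangle means exactly that $I(x,y)\cap I(x,z)=\{x\}$, $I(y,x)\cap I(y,z)=\{y\}$ and $I(z,x)\cap I(z,y)=\{z\}$. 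Suppose for contradiction that $G$ has a non-equilateral metric triangle, and among all such triangles fix one, $\Delta(x,y,z)$, of minimum perimeter $dist(x,y)+dist(y,z)+dist(z,x)$. After relabelling I may assume the side $zx$ is a strict maximum, so $b:=dist(z,x)>dist(z,y)=:a$ and $b\ge dist(x,y)=:c$.

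First I would localise the argument at the corner $z$. Let $p$ be a neighbour of $z$ on a shortest $zx$-path (so $p\in N(z)\cap I(z,x)$, $dist(p,x)=b-1$) and $q$ a neighbour of $z$ on a shortest $zy$-path ($q\in N(z)\cap I(z,y)$, $dist(q,y)=a-1$). Since $I(z,x)\cap I(z,y)=\{z\}$ and $p,q\neq z$, the two neighbours are distinct, and moreover $p\notin I(z,y)$, $q\notin I(z,x)$. The disjointness of the intervals also forces $dist(x,q)\ge b$ (otherwise $dist(x,q)=b-1$ would give $q\in I(z,x)\cap I(z,y)$) and, symmetrically, $dist(y,p)\ge a$.

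The heart of the proof is to apply the meshing axiom with apex $x$ to the pair $p,q$, whose common neighbour is $z$. When $dist(x,q)=b$ one checks that $z$ is a strict local maximum seen from $x$, since $2\,dist(x,z)=2b>(b-1)+b=dist(x,p)+dist(x,q)$; meshedness then returns a common neighbour $t\neq z$ of $p,q$ with $2\,dist(x,t)\le 2b-1$, hence $dist(x,t)\le b-1$, while $dist(y,t)\le dist(y,q)+1=a$. Thus $t$ is at least as close to $y$ as $z$ is and strictly closer to $x$. I would then pass to a metric triangle inscribed in the geodesic triangle on $x,y,t$ (a quasimedian of $x,y,t$): it has perimeter strictly smaller than that of $\Delta(x,y,z)$, and I must argue it is again non-equilateral, contradicting the minimal choice of $\Delta(x,y,z)$. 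Equilaterality of every metric triangle follows.

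The hard part will be twofold. The first difficulty is the degenerate configuration in which $z$ fails to be a local maximum from $x$, namely $dist(x,q)=b+1$; by symmetry one would then try the apex $y$, and the genuinely delicate case is the simultaneous failure $dist(x,q)=b+1$ and $dist(y,p)=a+1$, which I expect to rule out either by re-selecting the descent corner (the corner incident to the longest side is not canonical) or by a short distance computation that contradicts minimality. The second difficulty is the adjacent case $dist(p,q)=1$, where $p,q,z$ form a triangle; here the relevant instance of the meshing axiom is the triangle condition for a pair at distance $1$, and one extracts a common neighbour of $p,q$ closer to $x$ in the same way. In both the descent and these special cases, the recurring technical obstacle is identical: guaranteeing that the vertex handed back by the meshing axiom yields a strictly smaller metric triangle that is \emph{still} non-equilateral (equivalently, that it lands in $I(z,x)\cap I(z,y)$ when one instead wants an immediate contradiction), which requires carefully tracking all three corner distances and not merely the one being shortened.
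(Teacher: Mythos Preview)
The paper does not prove Lemma~\ref{lem:equilateral}; it is quoted verbatim from~\cite{BaC02} and used as a black box. So there is no proof in the paper to compare your attempt against.

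As for your plan on its own merits: the overall strategy (minimal counterexample plus the meshing axiom at the corner opposite the longest side) is the natural one, and your identification of the two degenerate sub-cases is accurate. However, the gap you yourself flag is genuine and, as written, not closed. After the meshing axiom hands you the vertex $t$, you propose to take a quasi-median of $x,y,t$ and argue that the resulting metric triangle is \emph{still non-equilateral}. There is no reason this should hold: the quasi-median can collapse all three sides simultaneously (in the extreme, to a single point), and nothing in your bounds $dist(x,t)\le b-1$, $dist(y,t)\le a$ prevents the resulting triple from being equilateral of some smaller size. The perimeter strictly decreases, but ``non-equilateral'' is not monotone under taking quasi-medians, so the descent does not obviously terminate in a contradiction.

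The cleaner route, and the one taken in the original reference, avoids the descent-to-a-new-triangle altogether: one argues directly that the vertex produced by the meshing axiom (or an iterate thereof) lies in $I(z,x)\cap I(z,y)\setminus\{z\}$, contradicting the metric-triangle property of $x,y,z$ itself. Concretely, once you have $t\in N(p)\cap N(q)$ with $dist(x,t)\le b-1$, you should track $dist(z,t)$ and $dist(y,t)$ simultaneously and iterate the axiom (alternating apices $x$ and $y$) along the two incident sides until you force a common vertex of both intervals at $z$. This keeps the original triangle fixed and removes the need to control equilaterality of an auxiliary triple.
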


\begin{corollary}\label{cor:2-eq}
Every metric triangle in a chordal graph $G$ is equilateral of size at most two.
\end{corollary}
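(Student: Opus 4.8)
The plan is to combine Lemma~\ref{lem:equilateral} and Lemma~\ref{lem:slim-chordal}. Since chordal graphs are meshed, Lemma~\ref{lem:equilateral} already tells us that every metric triangle $\Delta(x,y,z)$ of a chordal graph $G$ is equilateral; write $k$ for its common side-length, so that it only remains to prove $k \leq 2$. Assume for contradiction that $k \geq 3$, and fix geodesics $P(x,y), P(y,z), P(z,x)$ of length $k$ whose pairwise intersections reduce to the three corners. Their union contains, as a spanning subgraph, a cycle $\Gamma$ of length $3k \geq 9$; since $G$ is chordal, the subgraph induced by $V(\Gamma)$ is chordal, hence $\Gamma$ must have a chord in $G$.

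The core of the argument is to constrain where such a chord can be. No chord can have both endpoints on the same side, since a shortest path is an induced path. No chord can be incident to a corner either: a neighbour of, say, $x$ on $\Gamma$ other than its two consecutive ones would lie at distance $i \geq 2$ from $x$ along $P(x,y)$, or at distance $j \geq 2$ from $x$ along $P(z,x)$, contradicting that these sides are geodesics. Hence any chord joins an internal vertex of one side to an internal vertex of another side, say $ab$ with $a$ internal to $P(x,y)$ at distance $i$ from $x$ and $b$ internal to $P(y,z)$ at distance $j$ from $y$. Here I use the defining property of a metric triangle: if $i + 1 + j = dist(x,y) = k$, then $x \to a \to b \to y$ is a shortest $xy$-path passing through the interior of $P(y,z)$, so it could be chosen as the side $P(x,y)$, making two sides share a non-corner vertex -- impossible. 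Applying the same reasoning to the putative shortest $xz$- and $yz$-paths one can route through the edge $ab$, and combining with the triangle inequality (which forces $i+j \geq k-1$ and $j \leq i+1$), one deduces that every such chord satisfies $i+j = k$ and $i \geq \lceil k/2 \rceil$; the analogous statement holds, cyclically, for chords between each of the other two pairs of sides.

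It remains to turn this rigidity into a contradiction. A triangle of $G$ contained in $\Gamma$ must use two consecutive edges of $\Gamma$, hence must be one of the three \emph{corner triangles} $\{a_{k-1},y,b_1\}$, $\{b_{k-1},z,c_1\}$, $\{c_{k-1},x,a_1\}$. Now run the following descent: given the current cycle, pick a chord and replace the cycle by the longer of the two cycles it creates; by the previous paragraph only the three corner triangles can ever be cut off, and once all corners are removed one is left with a cycle on the $3(k-1) \geq 6$ internal vertices of the three sides in which, by the position constraint, \emph{no} chord forms a triangle, so every further chord splits into two cycles of length $\geq 4$. Since the lengths strictly decrease, one eventually reaches an induced cycle of $G$ of length $\geq 4$, contradicting chordality. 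In the base case $k=3$ this last part is immediate: the six vertices $a_1,a_2,b_1,b_2,c_1,c_2$ adjacent to a corner already induce a $6$-cycle. The main obstacle is to make this descent fully rigorous for all $k \geq 4$, in particular to check that no step of the descent reintroduces a triangle-chord and that the cycle length never drops below $4$ before the cycle becomes chordless; it is here that Lemma~\ref{lem:slim-chordal} is most useful, since $1$-slimness forces every internal vertex of a side to be adjacent to some vertex of another side, which supplies enough chords to prevent the descent from stopping prematurely.
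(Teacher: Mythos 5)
Your opening move is the same as the paper's: invoke Lemma~\ref{lem:equilateral} to reduce the statement to showing that an equilateral metric triangle of size $k \geq 3$ cannot exist, and your chord-position analysis of the cycle $\Gamma$ is essentially correct (every chord joins two internal vertices on two sides that are equidistant from their common corner, at distance at most $\lfloor k/2 \rfloor$ from it). The problem is that the proof is not finished. The ``descent'' in your last paragraph is precisely the step that carries the burden of the argument for $k \geq 4$, and, as you say yourself, you have not made it rigorous. Concretely: you never show that the descent cannot terminate at a triangle (a $4$-cycle with a chord splits into two triangles, and nothing you wrote rules out reaching length $4$ while a chord is still present); you assume the corner chords $a_{k-1}b_1$, etc.\ exist in order to ``remove the corners'', but their existence is not guaranteed; and your appeal to Lemma~\ref{lem:slim-chordal} at the end is misdirected, since supplying \emph{more} chords cannot prevent the descent from ``stopping prematurely'' --- stopping at an induced cycle of length at least four is exactly the contradiction you want, and the real danger, stopping at a triangle, is not averted by having additional chords. (A smaller issue: your argument that no chord is incident to a corner only handles neighbours on the two sides containing that corner; a chord from $x$ to the opposite side $P(y,z)$ has to be excluded separately, e.g.\ by adding the two resulting upper bounds on $dist(x,y)$ and $dist(x,z)$ to get $2k \leq k+2$.)

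The repair is much shorter than the descent, and it is what the paper does: use Lemma~\ref{lem:slim-chordal} \emph{locally}, at the single vertex $u$ at distance two from the corner $x$ on $P(x,y)$. Slimness gives $u$ a neighbour $v$ on $P(y,z) \cup P(z,x)$; $v$ cannot be a corner, and your own position constraints force $v$ to be equidistant with $u$ from a shared corner at distance at most $\lfloor k/2 \rfloor$, which for $k \geq 5$ leaves only the vertex at distance two from $x$ on $P(z,x)$ (for $k = 4$ one lands at distance two from $x$ or from $y$, which is the same situation after relabelling). This yields a $5$-cycle through a corner whose only admissible chord joins the two corner-neighbours, and that chord leaves an induced $C_4$ --- an immediate contradiction with chordality, no induction needed. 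For $k = 3$ your base case is on the right track but too quick: you still need to argue that the $9$-cycle must have a chord, that the only admissible chords are the three corner chords, and that after inserting whichever of them exist an induced cycle of length at least six survives.
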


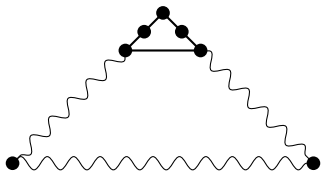
\begin{figure}[!h]
\begin{center}
\begin{tikzpicture}
\draw node[scale=.5,fill,circle] at (-2,-1) {};
\draw node[scale=.5,fill,circle] at (2,-1) {};
\draw node[scale=.5,fill,circle] at (0,1) {};
\draw node[scale=.5,fill,circle] at (-.25,.75) {};
\draw node[scale=.5,fill,circle] at (.25,.75) {};
\draw node[scale=.5,fill,circle] at (-.5,.5) {};
\draw node[scale=.5,fill,circle] at (.5,.5) {};
\draw[thick] (-.5,.5) -- (-.25,.75) -- (0,1) -- (.25,.75) -- (.5,.5);
\draw[thick] (-.5,.5) -- (.5,.5);
\draw[snake it] (-.5,.5) -- (-2,-1);
\draw[snake it] (.5,.5) -- (2,-1);
\draw[snake it] (-2,-1) -- (2,-1);
\end{tikzpicture}
\end{center}
\caption{To the proof of Corollary~\ref{cor:2-eq}.}
\label{fig:eq}
\end{figure}

\begin{proof}
Suppose for the sake of contradiction that there exists a metric triangle $x,y,z$ of size at least three.
By Lemma~\ref{lem:equilateral}, this triangle is equilateral of size $k \geq 3$.
Fix three shortest-paths $P(x,y), P(y,z), P(z,x)$ connecting these vertices, thus obtaining a geodesic triangle $\Delta(x,y,z)$. Let $u \in P(x,y)$ s.t. $dist(x,u) = 2$.
By Lemma~\ref{lem:slim-chordal}, there exists a neighbour $v \in N(u) \cap \left( P(y,z) \cup P(z,x) \right)$. We need to consider two cases.
\begin{itemize}
\item {\it Case $v \in P(z,x)$.} Then, $dist(x,v) \geq 2$ because otherwise, there would be a shortest $xy$-path going through the edge $uv$, thus contradicting that $x,y,z$ is a metric triangle. In the same way, $dist(z,v) \geq k-2$, and therefore, we have: $dist(x,v) = 2, \ dist(z,v) = k-2$ (see Fig.~\ref{fig:eq}).
Let $[x,s,u]$ and $[x,t,v]$ be shortest subpaths of $P(x,y)$ and $P(z,x)$, respectively.
Since $[x,s,u,v,t,x]$ is a cycle of length five, there exists a chord.
However, the only possible such chord is $st$ ({\it i.e.}, because $ut \in E$ and $vs \in E$ imply $t$ is on a shortest $xy$-path and $s$ is on a shortest $xz$-path, respectively). In this situation, $u,v,s,t$ induces a cycle of length four, a contradiction.
\item {\it Case $v \in P(y,z)$.} We prove next that $k \leq 4$. Indeed, this is the case if $v \in \{y,z\}$. Otherwise, since $x,y,z$ is a metric triangle, we cannot have $v$ on a shortest $xy$-path nor on a shortest $xz$-path. Then, $dist(y,v) \geq k-2, \ dist(z,v) \geq k-2$, and therefore, $k = dist(y,z) = dist(y,v) + dist(v,z) \geq 2k-4$. It implies $k \leq 4$. If $k = 4$ then, since we also have $dist(y,u) = 2$, we are back to the previous case up to replacing $x$ by $y$. From now on, let us assume $k=3$. Since the geodesic triangle considered is a cycle of length nine, there exists a chord. However, consider any such chord $st$. Vertices $s$ and $t$ are adjacent to some vertices amongst $\{x,y,z\}$, and so, the edge $st$ is on a path of length three between two corners of the triangle. Since $x,y,z$ is a metric triangle, the only possible chords are, for every of $x,y,z$, between their two neighbours on different sides of the triangle. However, in this situation we are left with an induced cycle of length at least six, thus contradicting that $G$ is chordal.
\end{itemize}
\end{proof}
Note that the bound of Corollary~\ref{cor:2-eq} is sharp, as it is shown, {\it e.g.}, by the $3$-sun.

\smallskip
\noindent
Finally, a {\em quasi-median} for $x,y,z \in V$ is a triple $x^*,y^*,z^* \in V$ s.t.: 
$$\begin{cases}
dist(x,y) = dist(x,x^*) + dist(x^*,y^*) + dist(y^*,y) \\
dist(y,z) = dist(y,y^*) + dist(y^*,z^*) + dist(z^*,z) \\
dist(z,x) = dist(z,z^*) + dist(z^*,x^*) + dist(x^*,x).
\end{cases}$$
Chalopin et al.~\cite{CCPP14} observed that every triple of vertices has a quasi median which is a metric triangle. 

\begin{proof}[Proof of Theorem~\ref{thm:dom-triple}]
Let $x,y,z$ be a dominating triple, and let $x^*,y^*,z^*$ be a corresponding pseudo median that is also a metric triangle. We fix shortest-paths $P(x,x^*), P(y,y^*), P(z,z,^*)$ along with a geodesic triangle $\Delta(x^*,y^*,z^*)$ with sides $P(x^*,y^*), P(y^*,z^*), P(z^*,x^*)$.
In doing so, since  $x^*,y^*,z^*$ is a pseudo median, we also get shortest-paths $P(x,y),P(y,z),P(z,x)$.
For instance, $P(x,y) = P(x,x^*) \cup P(x^*,y^*) \cup P(y^*,y)$.
Furthermore, since $x,y,z$ is a dominating triple, the union $H$ of these above shortest-paths is a dominating set of $G$.
Let $u$ be last visited by a LexBFS, and let $v$ be last visited by $LexBFS(u)$.
Finally, let $(s,t)$ be a diametral pair of $G$.
Assume toward a contradiction $diam(G) \geq 11$ and $e(v) < diam(G)$ (we will explain at the end of the proof how to lower the bound on the diameter to $10$).

We first prove as an intermediate claim that for some two vertices $\alpha,\beta \in \{x,y,z\}$, we have that $dist(u,P(\alpha,\beta)) + dist(v,P(\alpha,\beta)) \leq 3$. Indeed, let $u^* \in N[u] \cap H$ and $v^* \in N[v] \cap H$. If $u^*,v^* \in P(\alpha,\beta)$ for some $\alpha,\beta \in \{x,y,z\}$, then we get $dist(u,P(\alpha,\beta)) + dist(v,P(\alpha,\beta)) \leq 2$. From now on, we assume that it is not the case. If $u^*,v^* \in \Delta(x^*,y^*,z^*)$, then since by Corollary~\ref{cor:2-eq} this metric triangle is equilateral of size $\leq 2$, $dist(u^*,v^*) \leq 3$. However, it implies by Lemma~\ref{lem:lexbfs-chordal} that $diam(G) \leq dist(u,v) + 1 \leq dist(u^*,v^*) + 3 \leq 6$, a contradiction. Hence, let us assume for instance that $u^* \in P(x,x^*)$ but $v^* \in P(y^*,z^*)$ is on the only side of $\Delta(x^*,y^*,z^*)$ that is not on a shortest-path between $x$ and one of $\{y,z\}$ (all other cases are symmetrical to this one). Since $x^*,y^*,z^*$ is equilateral of size at most two, for {\em any} $\beta \in \{y,z\}$ we get $dist(v^*,P(x^*,\beta^*)) = 1$. As a result, $dist(u,P(x,\beta)) + dist(v,P(x,\beta)) \leq 1 + 2 = 3$, thus proving the claim. Furthermore, as a by-product of our proof, we also have in this case $\max\{dist(u,P(\alpha,\beta)),dist(v,P(\alpha,\beta))\} \leq 2$.

We prove as another intermediate claim that for some $w \in \{s,t\}$ we have $dist(w,P(\alpha,\beta)) \leq 2$.
Indeed, let $s^* \in N[s] \cap H$ and $t^* \in N[t] \cap H$. 
If one amongst $s^*$ and $t^*$ is a vertex of $\Delta(x^*,y^*,z^*)$, say it is the case of $s^*$, then since the latter triangle is equilateral of size at most two, $dist(s,P(\alpha,\beta)) \leq 1 + dist(s^*,P(\alpha^*,\beta^*)) \leq 2$. From now on, we assume that $s^*$ and $t^*$ are not vertices of $\Delta(x^*,y^*,z^*)$. If we write $\{\alpha,\beta,\gamma\} = \{x,y,z\}$, then we may assume w.l.o.g. $s^*,t^* \in P(\gamma,\gamma^*)$ (otherwise, one amongst $s^*,t^*$ is a vertex of $P(\alpha,\beta)$, and so we are done). We prove as a subclaim that $dist(u^*,P(\alpha,\gamma) \cup P(\beta,\gamma)) \leq 1$. Indeed, to see that, it suffices to recall that either $u^* \in P(\alpha,\alpha^*) \cup P(\beta,\beta^*)$ or $u^* \in P(\alpha^*,\beta^*)$. In the former case, $dist(u^*,P(\alpha,\gamma) \cup P(\beta,\gamma)) = 0$, while in the latter case, since we have a metric triangle that is equilateral of size two, $dist(u^*,P(\alpha^*,\gamma^*)) = dist(u^*,P(\beta^*,\gamma^*)) = 1$. Thus, choosing $u' \in P(\alpha,\gamma) \cup P(\beta,\gamma)$ at minimum distance from $u^*$:
\begin{align*}
dist(s^*,u^*) &\leq dist(u^*,u') + (dist(\gamma,u') - dist(\gamma,s^*))\\ 
&\leq 1 + diam(G) - dist(\gamma,s^*).
\end{align*} 
In particular, $dist(s,u) \leq dist(s^*,u^*) + 2 \leq  diam(G) + 3 - dist(\gamma,s^*)$. Since we have $dist(s,u) = diam(G) - 1$ by Lemma~\ref{lem:lexbfs-chordal}, it follows from the above inequalities that we have $dist(\gamma,s^*) \leq 4$. We get in the exact same way $dist(\gamma,t^*) \leq 4$. However, it implies $diam(G) = dist(s,t) \leq 2 + dist(s^*,t^*) \leq 2 + dist(s^*,\gamma) + dist(\gamma,t^*) \leq 10$, a contradiction.

Overall, let $u',v',w' \in P(\alpha,\beta)$ at minimum distance from $u,v,w$ respectively. By Lemma~\ref{lem:lexbfs-chordal}, $u,v,w$ are pairwise at distance $diam(G)-1$. Let $\{r_1,r_2,r_3\} = \{u,v,w\}$ s.t. $r_2'$ is (metrically) between $r_1'$ and $r_3'$ onto the shortest-path $P(\alpha,\beta)$. Then:
\begin{align*}
diam(G) &\geq dist(\alpha,\beta) \\
&\geq dist(r_1',r_2') + dist(r_2',r_3') \\
&\geq dist(r_1,r_2) + dist(r_2,r_3) - dist(r_1,r_1') - 2 \cdot dist(r_2,r_2') - dist(r_3,r_3') \\
&\geq 2(diam(G)-1) - dist(u,u') - dist(v,v') - dist(w,w') \\
&- \max\{dist(u,u'), dist(v,v'), dist(w,w')\} \\
&\geq 2 \cdot diam(G) - 2 - 3 - 2 - 2 = 2 \cdot diam(G) - 9.
\end{align*} 
But then, $diam(G) \leq 9$, a contradiction.

Finally, let us sketch how we can decide in linear time whether the diameter is at least equal to $d$, for any $d \geq 10$.
For $d=10$, we claim that it suffices to execute a $2$-sweep LexBFS. Indeed, by Lemma~\ref{lem:lexbfs-chordal}, the outputted vertex has eccentricity either $9$ or $10$, and if it is $9$ then, by Corollary~\ref{cor:lexbfs-chordal}, we have $diam(G) = 9 < 10$. Otherwise, $d \geq 11$ and we apply the algorithm $3$-sweep LexBFS, whose correctness follows from the above analysis. 
\end{proof}

We left open the following intriguing question. For any $k \geq 2$, does there exist a $d_k > 0$ s.t., for any (chordal) graph with a dominating target of cardinality at most $k$, we can decide in truly subquadratic time whether the diameter is at least $d_k$, and if so compute the diameter exactly?

%\section{Beyond chordal graphs}\label{sec:gal-graphs}
%
%\subsection{}
%
%\begin{theorem}\label{thm:spd}
%\end{theorem}
%
%mention that diam must be at least five due to the counter-example of Corneil et al.
%
%\subsection{}
%
%\begin{theorem}\label{thm:dt}
%\end{theorem}

\bibliographystyle{abbrv}
\bibliography{biblio-AT}

\begin{thebibliography}{10}

\bibitem{AVW16}
A.~Abboud, V.~Vassilevska~Williams, and J.~Wang.
\newblock Approximation and fixed parameter subquadratic algorithms for radius
  and diameter in sparse graphs.
\newblock In {\em Proceedings of the twenty-seventh annual ACM-SIAM symposium
  on Discrete Algorithms (SODA)}, pages 377--391. SIAM, 2016.

\bibitem{BaC02}
H.-J. Bandelt and V.~Chepoi.
\newblock Graphs with connected medians.
\newblock {\em SIAM Journal on Discrete Mathematics}, 15(2):268--282, 2002.

\bibitem{BoM08}
J.~A. Bondy and U.~S.~R. Murty.
\newblock {\em Graph theory}.
\newblock 2008.

\bibitem{BCH16}
M.~Borassi, P.~Crescenzi, and M.~Habib.
\newblock {Into the square: On the complexity of some quadratic-time solvable
  problems}.
\newblock {\em Electronic Notes in Theoretical Computer Science}, 322:51--67,
  2016.

\bibitem{BCT17}
M.~Borassi, P.~Crescenzi, and L.~Trevisan.
\newblock An axiomatic and an average-case analysis of algorithms and
  heuristics for metric properties of graphs.
\newblock In {\em Proceedings of the Twenty-Eighth Annual ACM-SIAM Symposium on
  Discrete Algorithms (SODA)}, pages 920--939. SIAM, 2017.

\bibitem{Cab18}
S.~Cabello.
\newblock Subquadratic algorithms for the diameter and the sum of pairwise
  distances in planar graphs.
\newblock {\em ACM Transactions on Algorithms (TALG)}, 15(2):1--38, 2018.

\bibitem{CCPP14}
J.~Chalopin, V.~Chepoi, P.~Papasoglu, and T.~Pecatte.
\newblock Cop and robber game and hyperbolicity.
\newblock {\em SIAM Journal on Discrete Mathematics}, 28(4):1987--2007, 2014.

\bibitem{ChD92}
V.~Chepoi and F.~Dragan.
\newblock Disjoint set problem, 1992.

\bibitem{CDDH+01}
D.~Corneil, F.~Dragan, M.~Habib, and C.~Paul.
\newblock Diameter determination on restricted graph families.
\newblock {\em Discrete Applied Mathematics}, 113(2-3):143--166, 2001.

\bibitem{COS99}
D.~Corneil, S.~Olariu, and L.~Stewart.
\newblock Linear time algorithms for dominating pairs in asteroidal triple-free
  graphs.
\newblock {\em SIAM Journal on Computing}, 28(4):1284--1297, 1999.

\bibitem{CDP19}
D.~Coudert, G.~Ducoffe, and A.~Popa.
\newblock {Fully polynomial FPT algorithms for some classes of bounded
  clique-width graphs}.
\newblock {\em ACM Transactions on Algorithms (TALG)}, 15(3):1--57, 2019.

\bibitem{DeK95}
J.~Deogun and D.~Kratsch.
\newblock Diametral path graphs.
\newblock In {\em {International Workshop on Graph-Theoretic Concepts in
  Computer Science}}, pages 344--357. Springer, 1995.

\bibitem{DeK02}
J.~Deogun and D.~Kratsch.
\newblock Dominating pair graphs.
\newblock {\em SIAM Journal on Discrete Mathematics}, 15(3):353--366, 2002.

\bibitem{Dra99}
F.~Dragan.
\newblock {Almost diameter of a house-hole-free graph in linear time via
  LexBFS}.
\newblock {\em Discrete applied mathematics}, 95(1-3):223--239, 1999.

\bibitem{DKL17}
F.~Dragan, E.~K{\"o}hler, and A.~Leitert.
\newblock Line-distortion, bandwidth and path-length of a graph.
\newblock {\em Algorithmica}, 77(3):686--713, 2017.

\bibitem{DuD19+}
G.~Ducoffe and F.~Dragan.
\newblock {A story of diameter, radius and Helly property}.
\newblock 2020.
\newblock To appear in Networks.

\bibitem{DHV19}
G.~Ducoffe, M.~Habib, and L.~Viennot.
\newblock Fast diameter computation within split graphs.
\newblock In {\em International Conference on Combinatorial Optimization and
  Applications}, pages 155--167. Springer, 2019.

\bibitem{DHV20}
G.~Ducoffe, M.~Habib, and L.~Viennot.
\newblock {Diameter computation on $H$-minor free graphs and graphs of bounded
  (distance) VC-dimension}.
\newblock In {\em Proceedings of the Fourteenth Annual ACM-SIAM Symposium on
  Discrete Algorithms (SODA)}, pages 1905--1922. SIAM, 2020.

\bibitem{HMPV00}
M.~Habib, R.~McConnell, C.~Paul, and L.~Viennot.
\newblock Lex-{BFS} and partition refinement, with applications to transitive
  orientation, interval graph recognition and consecutive ones testing.
\newblock {\em Theoretical Computer Science}, 234(1-2):59--84, 2000.

\bibitem{HeK02}
H.~Hempel and D.~Kratsch.
\newblock On claw-free asteroidal triple-free graphs.
\newblock {\em Discrete Applied Mathematics}, 121(1-3):155--180, 2002.

\bibitem{ImP01}
R.~Impagliazzo and R.~Paturi.
\newblock On the complexity of $k$-{SAT}.
\newblock {\em Journal of Computer and System Sciences}, 62(2):367--375, 2001.

\bibitem{KKM01}
T.~Kloks, D.~Kratsch, and H.~M{\"u}ller.
\newblock On the structure of graphs with bounded asteroidal number.
\newblock {\em Graphs and Combinatorics}, 17(2):295--306, 2001.

\bibitem{LeB62}
C.~Lekkeikerker and J.~Boland.
\newblock Representation of a finite graph by a set of intervals on the real
  line.
\newblock {\em Fundamenta Mathematicae}, 51(1):45--64, 1962.

\bibitem{MaF16}
R.~Machado and C.~de~Figueiredo.
\newblock Linear-time graph distance and diameter approximation.
\newblock {\em International Transactions in Operational Research},
  23(5):843--851, 2016.

\bibitem{MoD19}
A.~Mohammed and F.~Dragan.
\newblock Slimness of graphs.
\newblock {\em Discrete Mathematics \& Theoretical Computer Science}, 21, 2019.

\bibitem{Ola90}
S.~Olariu.
\newblock A simple linear-time algorithm for computing the center of an
  interval graph.
\newblock {\em International Journal of Computer Mathematics},
  34(3-4):121--128, 1990.

\bibitem{RoV13}
L.~Roditty and V.~Vassilevska~Williams.
\newblock Fast approximation algorithms for the diameter and radius of sparse
  graphs.
\newblock In {\em Proceedings of the forty-fifth annual ACM symposium on Theory
  of computing (STOC)}, pages 515--524, 2013.

\bibitem{RTL76}
D.~Rose, R.~Tarjan, and G.~Lueker.
\newblock Algorithmic aspects of vertex elimination on graphs.
\newblock {\em SIAM Journal on computing}, 5(2):266--283, 1976.

\bibitem{Sei95}
R.~Seidel.
\newblock On the all-pairs-shortest-path problem in unweighted undirected
  graphs.
\newblock {\em J. Comput. Syst. Sci.}, 51(3):400--403, 1995.

\bibitem{VaW18}
V.~Vassilevska~Williams.
\newblock On some fine-grained questions in algorithms and complexity.
\newblock In {\em Proceedings of the {ICM}}, volume~3, pages 3431--3472. World
  Scientific, 2018.

\bibitem{Wil05}
R.~Williams.
\newblock A new algorithm for optimal 2-constraint satisfaction and its
  implications.
\newblock {\em Theoretical Computer Science}, 348(2-3):357--365, 2005.

\end{thebibliography}

\end{document}